\newcommand{\R}{{\mathbb R}}
\newcommand{\Z}{{\mathbb Z}}
\newcommand{\Lie}{{\mathcal L}}
\newcommand{\iI}{\mu}
\newcommand{\jI}{\nu}
\newcommand{\kI}{\xi}
\newcommand{\lI}{\chi}
\DeclareMathOperator{\tf}{tf}
\newcommand{\g}{\tilde{g}}
\newcommand{\RS}{\tilde{S}}
\newcommand{\RR}{\tilde{R}}
\newcommand{\tnabla}{\tilde{\nabla}}
\newcommand{\tsquare}{\tilde{\square}}
\newcommand{\tE}{\tilde{E}}
\newcommand{\tB}{\tilde{B}}
\newcommand{\tF}{\tilde{F}}
\newcommand{\tD}{\tilde{D}}
\newcommand{\tG}{\tilde{G}}
\newcommand{\tS}{\tilde{\gamma}}
\newcommand{\tQ}{\tilde{Q}}
\newcommand{\tX}{\tilde{X}}
\newcommand{\tboxdot}{\tilde{\boxdot}}
\newcommand{\mM}{\mathbf{M}}
\newcommand{\mRS}{\mathbf{S}}
\newcommand{\mF}{\mathbf{F}}
\newcommand{\mD}{\mathbf{D}}
\newcommand{\mX}{\mathbf{X}}
\newcommand{\mY}{\mathbf{Y}}
\newcommand{\mT}{\mathbf{T}}
\newcommand{\mnabla}{\boldsymbol{\nabla}}
\newcommand{\msquare}{\bm{\boxdot}}
\newcommand{\mboxplus}{\bm{\boxplus}}
\newcommand{\mg}{\mathbf{g}}
\newcommand{\mf}{\mathbf{f}}
\newcommand{\mr}{\bm{\Omega}}
\newcommand{\rd}{{\mathbbm d}}
\newtheorem{df}{Definition}
\newtheorem{lm}{Lemma}
\newtheorem{thm}[lm]{Theorem}
\newtheorem{prop}[lm]{Proposition}
\newtheorem*{prop*}{Proposition}
\newtheorem{cor}[lm]{Corollary}
\theoremstyle{remark}
\newtheorem{remark}{Remark}
\theoremstyle{remark}
\newtheorem{example}{Example}
\title{Well-posedness of the ambient metric equations and stability of even dimensional asymptotically de Sitter spacetimes}
\author{Wojciech Kami{\'n}ski\\\footnotesize{Institute of Theoretical Physics, Faculty of Physics, University of Warsaw}\\
\footnotesize{Pasteura 5, 02-093 Warsaw, Poland}}
\begin{document}

\maketitle

\begin{abstract}
Vanishing of the Fefferman-Graham obstruction tensor was used by Andersson and Chru{\'s}ciel to show stability of the asymptotically de Sitter spaces in even dimensions. However, existing proofs of hyperbolicity of this equation contain gaps.
We show in this paper that it is indeed a well-posed hyperbolic system with unique up to diffeomorphism and conformal transformations smooth development for smooth Cauchy data.  Our method applies also to equations defined by various versions Graham-Jenne-Mason-Sparling operators. In particular, we use one of these operators to propagate Gover's condition of being almost conformally Einstein. This allows to study initial data also for Cauchy surfaces which cross conformal boundary. As a by-product we show that on globally hyperbolic manifolds one can always choose conformal factor such that Branson Q-curvature vanishes. 
\end{abstract}

\section{Anderson-Fefferman-Graham equation}

An important issue in General Relativity is the long time, asymptotic behaviour of solutions to Einstein's equations. In the case of positive cosmological constant the problem was solved by Friedrich \cite{Friedrich1983}.  He showed that there exists in $4$ dimensions a hyperbolic system of equations for a metric and some derived variables which is satisfied if a metric is conformal to a solution of Einstein equation with a cosmological constant. This allows to study compactified versions of the solutions via conformal Penrose compactification and replace difficult long time analysis by a simpler finite time problem. Future asymptotically simple solutions are those, which conformal compactification extends smoothly to the future boundary Cauchy surface $\overline{\Sigma}_+$. An important example of such a spacetime is de Sitter universe and such spacetimes are often called asymptotically de Sitter. In fact, we need to assume positive cosmological constant in order the conformal boundary surface to be spacelike.  From hyperbolicity of the new system one obtain immediately stability in this class of spacetimes. Moreover, the method gives explicite description of the initial data on the conformal boundary $\overline{\Sigma}_+$. However the Friedrich's method does not extend easily to higher dimension. Other important drawback is that no Lagrangean formulation for it exists. Such formulations are important for analysis of the initial data and conserved charges. 

The alternative method proposed by Anderson in \cite{Anderson} and futher developed by Anderson and Chru\'sciel in \cite{Anderson-Chrusciel} is using the Fefferman-Graham obstruction tensor $H_{\iI\jI}$ which is defined for even dimensions $d\geq 4$ \cite{Fefferman-Graham-1}. We will describe the original definition of \cite{Fefferman-Graham-1} (see \cite{Fefferman-Graham}) in section \ref{sec:ambient}. This tensor can be defined as the variation of the Lagrangean 
\begin{equation}
c_d\int Q(h)\sqrt{h} \rd^dx,
\end{equation}
over the metric $h_{\mu\nu}$, where $Q$ is the Branson curvature \cite{Branson}, a covariant object with nice conformal transformations, and $c_d$ is a constant depending on the dimension. This functional is invariant (up to boundary terms) under both diffeomorphisms and conformal transformations. Consequently,
the $H_{\iI\jI}$ tensor has interesting properties
\begin{enumerate}
\item It is a covariant object built out of the metric and its derivative,
\item It is conformally covariant, namely for $h_{\iI\jI}^1=e^{2\sigma} h^2_{\iI\jI}$, for $\sigma$ a smooth function
\begin{equation}
e^{(d-2)\sigma}H_{\iI\jI}(h^1)=H_{\iI\jI}(h^2).
\end{equation}
\item It is divergenceless $\nabla^{\iI}H_{\iI\jI}=0$ and traceless $H^{\iI}_{\iI}=0$.
\end{enumerate}
Even more remarkably, 
\begin{enumerate}
\item[4] if $h_{\iI\jI}$ satisfies vacuum Einstein equations with a cosmological constant $\Lambda$:  $G_{\iI\jI}(h)=\Lambda h_{\iI\jI}$, then $H_{\iI\jI}=0$ \cite{Fefferman-Graham}.
\end{enumerate}

The method proposed in \cite{Anderson, Anderson-Chrusciel} is to consider conformally invariant Anderson-Fefferman-Graham (AFG) equations 
\begin{equation}\label{eq:AFG}
H_{\iI\jI}=0,
\end{equation}
instead of the Einstein equations and impose the later as a constraint at the initial surface $\overline{\Sigma}_+$. If we want to use this method, it is necessary to prove that the system \eqref{eq:AFG} is well-posed after fixing diffeomorphism and conformal gauge.

However, this is a tricky problem. It is shown in \cite{Anderson-Chrusciel} that one can impose gauge $\square_{\overline{h}}(x^\mu)=0$ and $R_{\overline{h}}=0$ (where $\square_{\overline{h}}$ is a scalar d'Alembert operator for the metric $\overline{h}$, $x^\mu$ are coordinates and $R_{\overline{h}}$ is the Ricci scalar). In this specific gauge the equations take the form
\begin{equation}
\square_{\overline{h}}^{d/2}\overline{h}_{\iI\jI}+F^{\overline{h}}_{\iI\jI}(D^{d-1}\overline{h}_{\iI\jI})=0,
\end{equation}
where $D^m\overline{h}_{\iI\jI}$ denote $m$-th jets of the metric i.e. all derivatives $\partial^k \overline{h}_{\iI\jI}$ for $k\leq m$. The principal symbol is hyperbolic, but the roots have multiplicities, thus the system is not strictly hyperbolic. Such systems are complicated as we can see in the following example:

\begin{example}[Not well-posed]
Consider an equation on $\R\times S^1$ with $x^1$ being time coordinate
\begin{equation}
(\partial_1^2-\partial_2^2)^3\phi+\partial_2(\partial_1+\partial_2)^3\phi=0.
\end{equation}
The principal symbol is hyperbolic (it is a power of d'Alembert operator for a flat metric), but it has multiple characteristics. Functions $\phi_k(x^1,x^2)=e^{i(\omega(k) x^1+kx^2)}$ are solutions for $\omega(k)=\frac{-1-i\sqrt{3}}{2} k^{1/3}$ with property $|\phi_k|=e^{\frac{\sqrt{3}}{2}x^1}$. The smooth initial data on $\Sigma=\{x^1=0\}$
\begin{equation}
\partial_1^n\phi|_{\Sigma}=\sum_{k=0}^\infty i^n\omega(k)^n e^{-k^{1/4}}e^{ikx^2},\quad n=0\ldots 5,
\end{equation}
does not admit a Cauchy development because for every $k\geq 0$ the mode function should behave like $e^{-k^{1/4}}\phi_k(x^1,\cdot)$ for $x^1>0$ but the series $\sum_{k\geq 0} e^{-k^{1/4}}\phi_k(x^1,x^2)e^{ikx^2}$ does not converge even in $L^2(S^1)$.
\end{example}

We see that in order to establish hyperbolicity of the equations with a non-strictly hyperbolic principal part, one needs to control few lower order derivatives of the equation (by so called Levi conditions). Unfortunately, the Fefferman-Graham obstruction tensor is quite complicated and we need to control more and more of these terms the higher dimension we consider. This is the reason, why proofs of well-posedness of the Anderson-Fefferman-Graham equation in \cite{Anderson} and \cite{Anderson-Chrusciel} are not correct. In particular, it is assumed in \cite{Anderson}, that $\square_{\overline{h}}^{d/2}$ is strongly hyperbolic, but our example (for a flat metric) shows that it is not the case. The conditions on the lower order symbols are necessary for the case of multiple characteristics. 
We will provide in this paper a proof for smooth data.
Our method is in fact a modification of approach from \cite{Anderson-Chrusciel}, but the special form of the system needs to be taken into account.

The problem is less complicated in lower dimensions. It is worth to mention that the well-posedness in dimension $4$ was proven in \cite{Guenther}.\footnote{In dimension $4$ the obstruction tensor is proportional to the celebrated Bach tensor and the analysis is less complicated.} Our approach can be regarded as a generalization, which put also \cite{Guenther} in a proper context.

\section{Summary of the results}

The Cauchy problem of the Anderson-Fefferman-Graham (AFG) equation \eqref{eq:AFG} is similar to that of the Einstein equations. The metric itself is a object of equations (see \cite{Choquet-Bruhat}). The initial data on the surface $\Sigma$ will be a set of $d-1$ jets of symmetric tensors fields in $\R\times\Sigma$ at $\Sigma$
\begin{equation}\label{AFG:data}
D^{d-1}h_{\iI\jI}\in C^\infty(\Sigma),
\end{equation}
where $h_{\iI\jI}$ is a Lorentzian metric. We introduce a normal $\vec{N}$ to $\Sigma$ with respect to this metric and we assume that it is a timelike vector.
Assume that \eqref{AFG:data} satisfy constraints (well-defined because we know sufficiently many derivatives)
\begin{equation}\label{AFG:constraints}
H(\vec{N},\cdot)|_\Sigma=0,
\end{equation}
We will consider a specific (local) coordinate system given by conditions introduced in \cite{Anderson-Chrusciel}
\begin{equation}\label{AFG:gauge1}
R=0,\ \forall_\mu\ \square(x^{\iI})=0,
\end{equation}
where $R$ is the Ricci scalar.
As it is shown in \cite{Anderson-Chrusciel} we can always locally transform the metric by diffeomorphism and rescaling, such that these conditions are satisfied. We then show that equations \eqref{eq:AFG} are hyperbolic in this gauge.
The standard analysis \cite{Choquet-Bruhat} thus shows:

\begin{thm}\label{thm:AFG}
The AFG equation \eqref{eq:AFG} with initial data \eqref{AFG:data} and subject to constraints \eqref{AFG:constraints} forms a $C^\infty$ well-posed system. Every two local solutions differ by diffeomorphisms and conformal transformations.
\end{thm}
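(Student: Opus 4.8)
The plan is to follow the Choquet--Bruhat strategy for the Einstein equations: replace the gauge-invariant system \eqref{eq:AFG} by a gauge-fixed \emph{reduced} system whose well-posedness can be established directly, then recover genuine solutions of the original equation by showing that the gauge propagates. First I would invoke \cite{Anderson-Chrusciel} to place the metric in the gauge \eqref{AFG:gauge1}, in which the obstruction tensor takes the form $\square^{d/2}h_{\iI\jI}+F_{\iI\jI}(D^{d-1}h)=0$ with principal part $\square^{d/2}$. The difficulty, flagged by the Example, is that this principal symbol is hyperbolic but its characteristic roots each occur with multiplicity $d/2$, so it is not strictly hyperbolic; hyperbolicity of the principal part alone is therefore insufficient, and one must verify Levi-type conditions on the lower-order term $F_{\iI\jI}$.

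The central structural claim, and the step I expect to be the main obstacle, is that the lower-order terms produced by $H_{\iI\jI}$ in this gauge do satisfy such Levi conditions. Concretely, I would show that every contribution to $F_{\iI\jI}$ of top order $d-1$ factors through at least one power of $\square$, i.e. can be written as $\square$ applied to lower jets, or equivalently in terms of the combinations $\partial\,\square^{j}h$. The Example is ill-posed precisely because its lower term $\partial_2(\partial_1+\partial_2)^3$ does not factor through $\square$; the content here is that the rigid algebraic structure of the Fefferman--Graham tensor, read off from the ambient-metric construction of section \ref{sec:ambient} (equivalently from the recursive structure of the GJMS operators), forces exactly the required factorization. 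I would extract this by expanding the obstruction tensor order by order in the ambient expansion and tracking which derivative combinations can appear at top order.

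Granting the Levi conditions, I would reduce the $d/2$-th order system to a first-order symmetric hyperbolic system by introducing the auxiliary fields $u_0=h,\ u_1=\square h,\ \ldots,\ u_{d/2-1}=\square^{d/2-1}h$, so that $\square u_j=u_{j+1}$ for $j<d/2-1$ and $\square u_{d/2-1}=-F$. The state variables are the $u_j$ together with their first derivatives, and the Levi conditions are exactly what is needed to express the top-order part of $F$ through first derivatives of the $u_j$, closing the system. Standard energy estimates for the resulting coupled wave system then yield existence, uniqueness and continuous dependence, that is $C^\infty$ well-posedness in the sense of \cite{Choquet-Bruhat}.

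Finally I would establish gauge propagation. Using that $H_{\iI\jI}$ is identically divergence-free and trace-free, the reduced equations imply a homogeneous hyperbolic system for the gauge quantities $R$ and $\square x^{\iI}$; since these vanish on $\Sigma$ together with sufficiently many derivatives by the constraints \eqref{AFG:constraints}, they vanish throughout the development, so a solution of the reduced system solves \eqref{eq:AFG}. Conversely, any solution of \eqref{eq:AFG} can be brought to the gauge \eqref{AFG:gauge1} by a diffeomorphism and a conformal rescaling, and uniqueness for the reduced system then forces two such solutions to coincide in that gauge. This gives both existence and the claimed rigidity up to diffeomorphism and conformal transformation. The delicate point throughout remains the verification of the Levi conditions, since the number of lower-order terms that must be controlled grows with $d$.
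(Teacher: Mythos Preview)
Your overall Choquet--Bruhat architecture (gauge fix, solve a reduced system, propagate the gauge) matches the paper. The substantive divergence is in how you handle the non-strict hyperbolicity of the reduced system, and here your proposal has a genuine gap.

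You assert that the top-order contributions to $F_{\iI\jI}$ ``factor through at least one power of $\square$'', and that this Levi condition is what allows the auxiliary variables $u_j=\square^j h$ to close the system. But you do not prove this, and the condition as stated is not sharp enough: what is actually required is a graded structure in which the equation governing $u_j$ involves second derivatives of $u_l$ only for $l<j$, first derivatives only for $l\leq j$, and undifferentiated $u_l$ only for $l\leq j+1$. Mere factorisation of the top-order piece through $\square$ does not by itself control the cross-terms that arise when you expand $\square^{d/2}h$ and $F(D^{d-1}h)$ in the new variables. The paper sidesteps this entirely: rather than proving Levi conditions on the obstruction tensor, it takes the ambient-metric expansion coefficients $\g^{[k]}_{\iI\jI}$, $k=0,\ldots,d/2-1$, as the dynamical variables. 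Because the ambient Ricci tensor $\RS_{\iI\jI}$ is second-order in the ambient metric, the resulting system $\tE^{[k]}_{\iI\jI}=0$ automatically has the graded structure above (the paper's ``generalized hyperbolic'' form, Proposition~\ref{prop:hip} and Lemma~\ref{lm:E-weel-posed}), and well-posedness follows by an induction on the number of levels rather than by verifying algebraic identities on $H_{\iI\jI}$.

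Your gauge-propagation step also underestimates the difficulty. The quantities $R$ and $\square x^{\iI}$ are scalar and vector fields on $M$, but the reduced system lives at all ambient levels $k=0,\ldots,d/2-1$, and the gauge freedom is correspondingly stratified. The paper builds gauge functions $\tG^{[k]}_{\iI}$ and $\tS^{[k]}$ out of the \emph{transverse} ambient Ricci components $\RS_{\iI\infty}$, $\RS_{\infty\infty}$ (equations \eqref{eq:tS}--\eqref{eq:tG}); the point is that the Bianchi identity has already been partially consumed in the ambient construction, so one cannot simply invoke $\nabla^{\iI}H_{\iI\jI}=0$ and $H^{\iI}_{\iI}=0$ to get a closed propagation system for $R$ and $\square x^{\iI}$ alone. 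The propagation system for $\tG^{[k]}_{\iI}$, $\tS^{[k]}$ is itself of generalized hyperbolic type (Lemma~\ref{lm:B-closed}), and its derivation from the ambient Bianchi identities (Proposition~\ref{lm:B-follows}) is a separate computation.
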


We will now describe our approach to the problem. Fefferman-Graham obstruction tensor is obtained by ambient metric construction. We consider expansion coefficient of the ambient metric $g^{[k]}_{\mu\nu}$ for $k=0,\ldots \frac{d}{2}-1$	and then the equation of vanishing of the Fefferman-Graham tensor are equivalent to
\begin{equation}\label{eq:FG-1}
S^{[k]}_{\mu\nu}=0,\quad k=0,\ldots, \frac{d}{2}-1,
\end{equation}
where $S^{[k]}_{\mu\nu}$ is  a part of the expansion of the Ricci tensor for the ambient metric. We will briefly describe the ambient construction in Section \ref{sec:ambient}. The AFG equation is obtained by recursive determination of $g^{[k]}_{\mu\nu}$ for $k=1,\ldots \frac{d}{2}-1$ in terms of $g^{[0]}_{\mu\nu}=h_{\mu\nu}$. 

Instead of solving recursively, we will consider these equations as a dynamical system for $g^{[k]}_{\mu\nu}$ for $k=0,\ldots \frac{d}{2}-1$. Every equation in \eqref{eq:FG-1} as coming from Ricci tensor is of second order, but the system is not of hyperbolic type. It is not surprising because we need gauge fixing. Following standard Choquet-Bruhat method we write
\begin{equation}
S^{[k]}_{\mu\nu}=E^{[k]}_{\mu\nu}+\partial_\mu G^{[k]}_\nu+\partial_\nu G^{[k]}_\mu+...,
\end{equation}
where $E^{[k]}_{\mu\nu}$ is of hyperbolic type and $G^{[k]}_{\mu}$ are gauge fixing functions. The addition $\ldots$  comes from an additional conformal gauge fixing term that is basically of the form $g^{[0]}_{\mu\nu}\gamma^{[k]}$, for some additional (scale) gauge fixing functions $\gamma^{[k]}$. As in the standard method we will solve system $E^{[k]}_{\mu\nu}=0$. However, the system is still not strictly hyperbolic. Fortunately, it is some generalized type of hyperbolic equation for which we provide a proof of well-posedness in Section \ref{sec:proof-hip}. The standard method now use Bianchi identity to show that gauge fixing functions $G^{[k]}_\mu$ and $\gamma^{[k]}$ propagate by a linear hyperbolic system. 

Here the next problem appears. We use only part of the Ricci tensor from the ambient metric. The Bianchi identities are already used to deduce that the remaining parts $S^{[k]}_{\mu\infty}$ and $S^{[k]}_{\infty\infty}$ vanish. We denote by $\infty$ the ambient direction. In some way, the ambient metric is already in the partially gauge fixed form and additional gauge fixing is excessive. We circumvent this problem by building $G^{[k]}_{\mu}$ and $\gamma^{[k]}$  from these remaining parts of the ambient Ricci tensor in such a way that Bianchi identities provide generalized hyperbolic system for the gauge fixing functions (see Section \ref{sec:gauge}). 

Our method is more general and allows to prove well-posedness of various equations constructed with aid of the ambient construction. In particular, it is true for Graham-Jenne Mason-Sparling (GJMS) \cite{GJMS1} equation and its various generalizations. It is a linear system of the similar type as gauge fixed Anderson-Fefferman-Graham equation thus it has the unique development with a given initial data, which is global on globally hyperbolic spacetimes. Using this result, we show that there always exists a scale with vanishing Branson $Q$-curvature \cite{BransonQ} for a given globally hyperbolic spacetime. Another application is propagation of covariantly constant tractor \cite{Gover2004, Graham2011} from the initial Cauchy surface. Existence of the covariantly constant tractor is equivalent for metric to be conformal to Einsteinian metric, except when certain nondegeneracy condition is not satisfied what corresponds to conformal boundary (see \cite{Curry2015}). In this way one can show that the condition of being Einstein propagates to the whole development of AFG equation in a uniform way. The initial Cauchy surface can now also cross the conformal boundary (see Proposition \ref{prop:tractor}).

\section{Generalized hyperbolic systems}\label{sec:proof-hip}

We use abstract index notation and Einstein summation convention in the paper.
We denote indices in $M$ by Greek letters.  We use symbol $D^m u$ to denote $m$ jets on $M$ of the field $u$ on $M$.

We will consider a bit more general situation then a standard second order hyperbolic system. We consider a system involving family of multifields $u^{(k)}$ for $k=0,\ldots N$, where each multifield can be in other space. Consider a system of equations on $M$ for multifields $u^{(k)}$, $k=0,\cdots N$ 
\begin{equation}\label{eq:hiper}
K_k=-\frac{1}{2}g^{\iI\jI}(x,u^{(0)})\partial_{\iI}\partial_{\jI} u^{(k)}
+F_k\left(x,u^{(l)}, \partial_\mu u^{(l)}, \partial_\mu\partial_\nu u^{(l)}\right)=0,\quad k=0,\ldots, N,
\end{equation}
where functions $F_k(x,u^{(l)},v_\mu^{l)},w_{\mu\nu}^{(l)})$ depends smoothly on coordinates $x$ and
$u^{(l)}$ for $l\leq \max(k+1,N)$, $v_\mu^{(l)}$ for $l\leq k$, $w_{\mu\nu}^{(l)}$ for $l\leq k-1$. Function $g_{\mu\nu}(x,u^{(0)})$ is a lorentzian metric smoothly depending on $x$ and $u^{(0)}$.\footnote{These conditions should hold for some open neighbourhood of values of $u$ (for some open set in the standard Fr{\'e}chet topology on smooth sections on $\Sigma$). We will take this condition as obvious in what follows.}
We will call such system $K_n=0$ generalized hyperbolic for $u^{(k)}$, $k=0,\ldots N$.

We assume that $\Sigma\subset M$ is spacelike and compact. This is a condition for $u|_\Sigma$ if the metric is $u$ dependent. 
We consider smooth initial data
\begin{equation}\label{eq:data-hiper-C}
u^{(l)}|_\Sigma=f^{(l)}_0,\quad \partial_1 u^{(l)}|_\Sigma=f^{(l)}_1,\quad l=0,\ldots N.
\end{equation}
We are interested in well-posedness in smooth category. It means a series of important properties (see \cite{Ringstroem2009}):
\begin{enumerate}
\item \textbf{Existence of a unique local solution}: There exists $I=(-T_-,T_+)$, $T_\pm>0$ such that on $M=I\times \Sigma$ we have a unique and maximal smooth solution with the given initial data at $\{0\}\times \Sigma$.
Moreover, every surface $\{s\}\times \Sigma$ is a Cauchy surface with respect to the metric $g_{\mu\nu}(x,u^{(0)})$, thus $M$ is globally hyperbolic. 
\item \textbf{The speed of propagation is equal to the speed of light}: Namely, if two initial data $u,u'$ are equal on some opens set $U\subset \Sigma$ then the $u(x)=u'(x)$ in all points such that $[J^+_{g(u)}(x)\cup J^-_{g(u)}(x)]\cap \{0\}\times \Sigma\subset U$.
From this we can deduce some version of well-posedness also for arbitrary non-compact Cauchy surfaces.
\item {\bf Smooth dependence on the initial data}: For arbitrary $T_\pm'<T_\pm$ the solution is defined for an open neighbourhood of a given initial data. The solution depends smoothly on the initial data (as a map from a Fr{\'e}chet space of smooth sections on $\Sigma$ to a Fr{\'e}chet space of smooth sections on $M$). In particular, the derivative of the family of solutions satisfies a linearized equation.
\end{enumerate}
Let us first notice that the system is non-characteristic on every Cauchy surface. Suppose that $\Sigma=\{x^1=0\}$, so $x^1$ is a time function.

\begin{lm}\label{lm:weak-hip}
There exist smooth functions $L_k\left(\bar{D}^2u^{(l)}_{l\leq k}, \bar{D}\partial_1 u^{(l)}_{l\leq k}\right)$ valued in the multifields, such that the following conditions are equivalent
\begin{enumerate}
\item $K_k|_\Sigma=0$ for $k=0,\ldots N$
\item $\partial_1^2u^{(k)}|_\Sigma=L_k(\bar{D}^2u^{(l)}_{l\leq k}|_\Sigma, \bar{D}\partial_1 u^{(l)}_{l\leq k}|_\Sigma)$  for $k=0,\ldots N$,
\end{enumerate}
where $\bar{D}^n$ denotes $n$ jets on $\Sigma$.
\end{lm}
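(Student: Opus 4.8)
The plan is to solve the equations $K_k|_\Sigma=0$ algebraically for the pure normal second derivatives $\partial_1^2 u^{(k)}|_\Sigma$, proceeding by induction on $k$. The whole argument rests on two structural facts already built into the definition of a generalized hyperbolic system: that $\Sigma$ is non-characteristic (so the coefficient of $\partial_1^2$ is invertible), and that the \emph{second-order} arguments of $F_k$ reach only down to the lower fields $u^{(l)}$ with $l\le k-1$. The first-order and zeroth-order arguments of $F_k$ cause no trouble, since on $\Sigma$ they all lie in the permitted jets.

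First I would decompose the principal operator using $x^1$ as a time function. Writing tangential indices as $a,b\neq 1$,
\begin{equation}
g^{\iI\jI}\partial_\iI\partial_\jI u^{(k)} = g^{11}\partial_1^2 u^{(k)} + 2 g^{1a}\partial_1\partial_a u^{(k)} + g^{ab}\partial_a\partial_b u^{(k)}.
\end{equation}
Since $\Sigma$ is spacelike its conormal $dx^1$ is timelike, so $g^{11}(0,u^{(0)}|_\Sigma)\neq 0$; this is precisely the non-characteristic condition that lets us invert the coefficient of $\partial_1^2 u^{(k)}$. On $\Sigma$ the two remaining terms $\partial_1\partial_a u^{(k)}|_\Sigma$ and $\partial_a\partial_b u^{(k)}|_\Sigma$ are, respectively, a tangential derivative of $\partial_1 u^{(k)}$ and a tangential $2$-jet of $u^{(k)}$, hence already components of $\bar{D}\partial_1 u^{(k)}|_\Sigma$ and $\bar{D}^2 u^{(k)}|_\Sigma$. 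In other words, the only second derivative of $u^{(k)}$ not determined by $\bar D^2 u^{(k)}|_\Sigma$ and $\bar D\partial_1 u^{(k)}|_\Sigma$ is exactly $\partial_1^2 u^{(k)}|_\Sigma$.

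Next I would track the second-order content of $F_k$, which by hypothesis involves $\partial_\iI\partial_\jI u^{(l)}$ only for $l\le k-1$. Restricted to $\Sigma$, each such term is either tangential or mixed — and hence in $\bar D^2 u^{(l)}|_\Sigma$, $\bar D\partial_1 u^{(l)}|_\Sigma$ — or is a pure normal derivative $\partial_1^2 u^{(l)}|_\Sigma$, the only potentially problematic quantity. Here the induction enters. For $k=0$ the function $F_0$ carries no second-derivative argument at all, so $K_0|_\Sigma=0$ reads $-\tfrac12 g^{11}\partial_1^2 u^{(0)}|_\Sigma + (\text{a smooth function of } \bar D^2 u^{(0)}|_\Sigma,\ \bar D\partial_1 u^{(0)}|_\Sigma)=0$, and division by the nonvanishing $g^{11}$ yields a smooth $L_0$. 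Assuming $L_l$ constructed for all $l<k$ with the stated dependence, in $K_k|_\Sigma=0$ I would substitute $\partial_1^2 u^{(l)}|_\Sigma = L_l$ for every $l\le k-1$. Since each $L_l$ depends only on jets of $u^{(m)}$ with $m\le l<k$, all substituted quantities stay inside the admissible jets $\bar D^2 u^{(l)}|_\Sigma$, $\bar D\partial_1 u^{(l)}|_\Sigma$ for $l\le k$. What remains is a smooth expression in those tangential jets, linear in $\partial_1^2 u^{(k)}|_\Sigma$ through the single term $-\tfrac12 g^{11}\partial_1^2 u^{(k)}|_\Sigma$; solving gives the smooth $L_k$, establishing $(1)\Rightarrow(2)$. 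The converse is immediate, since setting $\partial_1^2 u^{(k)}|_\Sigma=L_k$ reconstitutes $K_k|_\Sigma=0$, so the two statements are equivalent field by field.

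The main obstacle is bookkeeping rather than analysis: one must check that the triangular structure of the second-order dependence genuinely closes the recursion, i.e.\ that back-substituting the previously found $L_l$ never reintroduces a normal second derivative $\partial_1^2 u^{(l)}$ with $l\ge k$, nor a jet lying outside the list admissible for $L_k$. Everything else — invertibility of $g^{11}$ and smoothness of the resulting $L_k$ — follows from spacelikeness of $\Sigma$ and smoothness of the $F_k$, together with the inductively established smoothness of the $L_l$ for $l<k$.
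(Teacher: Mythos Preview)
Your proof is correct and follows essentially the same approach as the paper's own argument: both proceed by induction on $k$, isolate $\partial_1^2 u^{(k)}|_\Sigma$ via the non-characteristic condition $g^{11}\neq 0$, observe that the second-order arguments of $F_k$ involve only $u^{(l)}$ with $l\le k-1$, and then substitute the inductively constructed $L_l$ for the lower normal second derivatives. Your treatment is slightly more explicit about the tangential/normal decomposition and about why the converse is immediate, but there is no substantive difference in method.
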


\begin{proof}
We will prove by induction in $k_0$ the statement:
\begin{equation}
\left(\forall k< k_0\colon K_k|_\Sigma=0\right)\Longleftrightarrow
\left(\forall k< k_0\colon\partial_1^2u^{(k)}|_\Sigma=L_k(\bar{D}^2u^{(l)}_{l\leq k}|_\Sigma, \bar{D}\partial_1 u^{(l)}_{l\leq k}|_\Sigma)\right).
\end{equation}
The statement for $k_0=0$ is tautological. Suppose that it is true for some $k_0\geq 0$.
Using $g^{11}\not=0$ we can write
\begin{equation}
-\frac{2}{g^{11}}K_{k_0+1}|_\Sigma=\partial_1^2u^{(k_0+1)}|_\Sigma+\ldots,
\end{equation}
where $\ldots$ is a smooth function of $\bar{D}^1\partial_1 u^{(k)}$ for $k\leq k_0+1$, $\bar{D}^2u^{(k)}$ for $k\leq k_0+1$ and $\partial_1^2 u^{(k)}$ for $k\leq k_0$. We can express $\partial_1^2 u^{(k)}$ for $k\leq k_0$ by $L_k$ by induction hypothesis, thus we obtain desired function finishing the induction proof.
\end{proof}

\begin{prop}\label{prop:hip}
The generalized hyperbolic system $K_n=0$ \eqref{eq:hiper} for the multifields $u^{(k)}$ for $k=0,\ldots N$ is well-posed in the smooth category. If the system is linear (or linear with a source term) then the solution is defined on the whole globally hyperbolic spacetime.
\end{prop}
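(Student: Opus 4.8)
The plan is to obtain the three well-posedness properties from weighted energy estimates that convert the triangular derivative structure of \eqref{eq:hiper} into Levi conditions, and then to feed these estimates into the standard quasilinear existence/uniqueness machinery. The starting point is Lemma \ref{lm:weak-hip}: since $\Sigma=\{x^1=0\}$ is spacelike and $x^1$ timelike we have $g^{11}\neq 0$, so the system is non-characteristic, can be solved for the top normal derivatives $\partial_1^2 u^{(k)}$, and its formal Taylor expansion at $\Sigma$ is determined by the data \eqref{eq:data-hiper-C}. The principal part is block-diagonal, every block being the same operator $\square_g=g^{\mu\nu}(x,u^{(0)})\partial_\mu\partial_\nu$; the principal symbol is therefore $\tfrac12 g(\xi,\xi)$ with multiplicity $N+1$, perturbed by a strictly lower-triangular nilpotent contribution coming from the dependence of $F_k$ on $\partial_\mu\partial_\nu u^{(l)}$, $l\le k-1$. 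Its determinant is $\left(\tfrac12 g(\xi,\xi)\right)^{N+1}$, so this is exactly the non-strictly hyperbolic, multiple-characteristics situation of the example above, where a naive fixed-regularity estimate loses derivatives.

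The core step is the energy estimate. For the $k$-th component I would use the geometric energy $E^{(k)}_m$ of the wave operator $\square_g$, measured at a component-dependent Sobolev level $m$, for which one has the standard bound $\frac{d}{dt}E^{(k)}_m \lesssim E^{(k)}_m + \|F_k\|_{H^m}^2$ with $E^{(k)}_m$ controlling $\|u^{(k)}\|_{H^{m+1}}+\|\partial_1 u^{(k)}\|_{H^m}$. Everything reduces to bounding $\|F_k\|_{H^{m_k}}$, and here the admissible orders are decisive: $F_k$ contains $\partial\partial u^{(l)}$ only for $l\le k-1$, $\partial u^{(l)}$ only for $l\le k$, and $u^{(l)}$ only for $l\le k+1$. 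I would measure the $k$-th field at level $m_k=s+(N-k)$ and set $\mathcal E_s=\sum_{k=0}^N E^{(k)}_{s+N-k}$. The step is forced to be exactly one: the downward second-order coupling needs $m_l\ge m_k+1$ for $l<k$ (so that $\|\partial\partial u^{(l)}\|_{H^{m_k}}\lesssim\|u^{(l)}\|_{H^{m_k+2}}$ is controlled by $E^{(l)}_{m_l}$), while the upward zeroth-order coupling needs $m_{k+1}\ge m_k-1$ (so that $\|u^{(k+1)}\|_{H^{m_k}}$ is controlled by $E^{(k+1)}_{m_{k+1}}$); together these give $m_{k-1}=m_k+1$. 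With $m_k=s+N-k$ all three couplings close tightly, and the diagonal first-order term $\partial u^{(k)}$ is absorbed by $E^{(k)}_{m_k}$ itself. The quasilinear dependence of $\square_g$ on $u^{(0)}$ is treated by Moser and commutator estimates, using that $u^{(0)}$ carries the top regularity $m_0=s+N$. This yields the a priori estimate $\frac{d}{dt}\mathcal E_s\le C(\mathcal E_s)\,\mathcal E_s$.

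With this estimate the remaining steps are routine. For existence I would iterate, solving at each stage the decoupled linear wave equations obtained by evaluating $g$ and every $F_k$ at the previous iterate; the staircase energy gives uniform bounds on $M=I\times\Sigma$, while successive differences satisfy a linear system of the same triangular form, one derivative lower, and are controlled by the same estimate, giving contraction in a weaker norm. The limit solves \eqref{eq:hiper} in the staircase spaces, and since the data is smooth and the estimate holds for every $s$, a bootstrap produces the $C^\infty$ solution; maximality and global hyperbolicity of $M$ follow as usual. Uniqueness and the light-speed propagation property (item 2) come from applying the linear difference estimate on truncated causal cones, so that data agreeing on $U\subset\Sigma$ yield solutions agreeing on $\{x:\,[J^+_{g}(x)\cup J^-_{g}(x)]\cap(\{0\}\times\Sigma)\subset U\}$. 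Smooth dependence on the data (item 3) holds because the derivative of the solution map solves the linearization of \eqref{eq:hiper}, which is again generalized hyperbolic and hence well-posed by the same estimates. In the linear (or linear-with-source) case there is no nonlinear feedback into the metric, so the energy estimate rules out finite-time breakdown and the local solutions glue, by uniqueness, to a single solution on the entire globally hyperbolic spacetime.

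The main obstacle is precisely the estimate of the second paragraph: since the characteristics of $\square_g$ have multiplicity $N+1$ the system is only weakly hyperbolic, and the downward second-order coupling would wreck any estimate at a fixed Sobolev level. The point of the argument is that the derivative orders allowed in $F_k$ are exactly the Levi conditions compensating for this, quantified by the step-one staircase $m_k=s+N-k$; this is the feature overlooked in \cite{Anderson, Anderson-Chrusciel}, where $\square^{d/2}$ was treated as if strongly hyperbolic.
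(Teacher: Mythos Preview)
Your proposal is correct and takes a genuinely different route from the paper. The paper does not carry out energy estimates at all; instead it argues by induction on $N$. For a system of order $N_0$, it differentiates the first $N_0$ equations once, introduces $p_\mu^{(k)}=\partial_\mu u^{(k)}$ for $k\le N_0-1$ as new unknowns, and regroups into multifields ${u'}^{(k)}=\{u^{(k)},p_\mu^{(k)}\}$ for $k<N_0-1$ and ${u'}^{(N_0-1)}=\{u^{(N_0-1)},p_\mu^{(N_0-1)},u^{(N_0)}\}$; the second-order coupling $\partial\partial u^{(l)}$ in the last equation is replaced by $\partial p^{(l)}$, so the resulting system is again generalized hyperbolic but of order $N_0-1$, and the induction hypothesis applies. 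The constraint $p_\mu^{(k)}=\partial_\mu u^{(k)}$ is recovered by observing that the defect $w_\mu^{(k)}=\partial_\mu u^{(k)}-p_\mu^{(k)}$ satisfies a linear generalized hyperbolic system of order $N_0-1$ with vanishing initial data, hence vanishes. The base case $N=0$ is the standard quasilinear wave result.

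Your staircase energy $m_k=s+N-k$ is essentially the unwound version of this reduction: each step of the paper's differentiation trades one order of regularity on the lower-index fields for one unit of $N$, which is exactly your step-one ladder. Your approach buys Sobolev well-posedness directly---precisely what the paper defers to ``further investigations'' in the remark following the proof---at the cost of having to carry Moser and commutator estimates through the staircase. The paper's reduction is tidier for the $C^\infty$ statement, since it pushes all analytic work into the classical $N=0$ case and never needs to track Sobolev indices explicitly.
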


\begin{proof}
We will prove Proposition \ref{prop:hip} by induction with respect to the order $N$. For $N=0$ it is a known result (see for example \cite{taylor} Chapter 16.1-16.3 and \cite{Choquet-Bruhat} Appendix III). The following system is well-posed:
\begin{equation}\label{eq:wave}
-\frac{1}{2}g^{\mu\nu}(x,u)\partial_\mu\partial_\nu u+F(x, D^1u)=0,
\end{equation}
where $g_{\mu\nu}(x,u)$ is a lorentzian metric smoothly depending on coordinates $x$ and $u$ and $F$ is a smooth function of  coordinates and $D^1u$ (first jets of $u$). 

We assume now that we proved the statement for all $0\leq N<N_0$. Consider generalized hyperbolic system with $N_0$ multifields 
\begin{equation}
-\frac{1}{2}g^{\iI\jI}(x,u^{(0)})\partial_{\iI}\partial_{\jI} u^{(k)}
+F_k\left(x,u^{(l)}, \partial_\mu u^{(l)}, \partial_\mu\partial_\nu u^{(l)}\right)=0,\quad k=0,\ldots N_0,
\end{equation}
for some functions $F_k(x,u^{(l)},v_\mu^{l)},w_{\mu\nu}^{(l)})$ depending on variables described in the definition of generalized hyperbolic system.
Differentiating equation for $u^{(k)}$ with respect to $\partial_\rho$ we get
\begin{align}
-\frac{1}{2}g^{\mu\nu}\partial_\mu\partial_\nu \partial_\rho u^{(k)}&-\frac{1}{2}
\frac{\partial g^{\mu\nu}}{\partial u^{(0)}}\partial_\rho u^{(0)}\partial_\mu \partial_\nu u^{(k)}-\frac{1}{2}\frac{\partial g^{\iI\jI}}{\partial x^\rho}\partial_{\iI}\partial_{\jI} u^{(k)}+\nonumber\\
&+\frac{\partial F_k}{\partial u^{(l)}}\partial_\rho u^{(l)}+\frac{\partial F_k}{\partial v_\mu^{(l)}}\partial_\mu \partial_\rho u^{(l)}+\frac{\partial F_k}{\partial w_{\mu\nu}^{(l)}}\partial_\mu\partial_\nu \partial_\rho u^{(l)}
+\frac{\partial F_k}{\partial x^\rho}=0,
\end{align}
where summation over $l$ is implicitely assumed (as well as standard Einstein summation convention).
Introducing $p_\mu^{(k)}=\partial_\mu u^{(k)}$ for $k\leq N_0-1$ we can write it as
\begin{equation}
-\frac{1}{2}g^{\mu\nu}\partial_\mu\partial_\nu p_\rho^{(k)}+G_{\rho,k}\left(x,u^{(l)}, \partial_\mu u^{(l)}, \partial_\mu\partial_\nu u^{(l)}, p_\rho^{(l)}, \partial_\mu p_\rho^{(l)}, \partial_\mu\partial_\nu p_\rho^{(l)}\right)=0,
\end{equation}
where
\begin{align}
&G_{\rho,k}(x,u^{(l)},v_\mu^{(l)},w_{\mu\nu}^{(l)},p_\rho^{(l)},q_{\mu\rho}^{(l)},s_{\mu\nu\rho}^{(l)})=\nonumber\\
&=-\frac{1}{2}\frac{\partial g^{\mu\nu}}{\partial u^{(0)}}p_\rho^{(0)}q^{(k)}_{\mu\nu}+\frac{\partial F_k}{\partial u^{(l)}}p_\rho^{(l)}+\frac{\partial F_k}{\partial v_\mu^{(l)}}q_{\mu\rho}^{(l)}+\frac{\partial F_k}{\partial w_{\mu\nu}^{(l)}}s_{\mu\nu\rho}^{(l)}-\frac{1}{2}\frac{\partial g^{\iI\jI}}{\partial x^\rho}q^{(k)}_{\iI\jI}+\frac{\partial F_k}{\partial x^\rho},
\end{align}
where all derivatives of $F_k$ and $g_{\mu\nu}$ retain their original variables.
We introduce new multifields
\begin{align}\label{eq:fields-new}
{u'}^{(k)}=\{u^{(k)}, p_\mu^{(k)}\}\text{ for } k< N_0-1,\\
{u'}^{(N_0-1)}=\{u^{(N_0-1)}, p_\mu^{(N_0-1)},u^{(N_0)}\},
\end{align}
and the system of equations
\begin{align}
&-\frac{1}{2}g^{\mu\nu}(x,u^{(0)})\partial_\mu\partial_\nu p_\rho^{(k)}+G_{\rho,k}\left(x,u^{(l)}, \partial_\mu u^{(l)}, \partial_\mu\partial_\nu u^{(l)}, p_\rho^{(l)}, \partial_\mu p_\rho^{(l)}, \partial_\mu\partial_\nu p_\rho^{(l)}\right)=0,\quad k\leq N_0-1,\label{eq:p}\\
&-\frac{1}{2}g^{\iI\jI}(x,u^{(0)})\partial_{\iI}\partial_{\jI} u^{(k)}
+F_k\left(x,u^{(l)}, \partial_\mu u^{(l)}, \partial_\mu \partial_\nu u^{(l)}\right)=0,\quad k\leq N_0-1,\label{eq:u}\\
&-\frac{1}{2}g^{\iI\jI}(x,u^{(0)})\partial_{\iI}\partial_{\jI} u^{(N_0)}
+F_{N_0}\left(x,u^{(l)}, \partial_\mu u^{(l)}, \partial_\mu p_\nu^{(l)}\right)=0.\label{eq:u-last}
\end{align}
It is of the generalized form but the order is now $N_0-1$.

From solution of original system we can form solution of this system by taking
\begin{equation}
p_\mu^{(k)}=\partial_\mu u^{(k)}.
\end{equation}
We proved uniqueness by induction hypothesis.

In order to prove existence we construct initial data
\begin{equation}
p_\mu^{(k)}|_\Sigma=\partial_\mu u^{(k)}|_\Sigma,\quad \partial_1 p_\mu^{(k)}|_\Sigma=\partial_1\partial_\mu u^{(k)}|_\Sigma,
\end{equation}
where $\partial_1\partial_\mu u^{(k)}|_\Sigma$ we compute from Lemma \ref{lm:weak-hip} (the original system is non-characteristic with respect to a Cauchy surface, thus it is possible).

Now we notice that the difference $\partial_\rho \eqref{eq:u}-\eqref{eq:p}$ is equal
\begin{align}\label{eq:wmuk}
-\frac{1}{2}g^{\iI\jI}\partial_\mu\partial_\nu w_\mu^{(k)}&-\frac{1}{2}\frac{\partial g^{\mu\nu}}{\partial u^{(0)}}w_\rho^{(0)}\partial_\mu \partial_\nu u^{(k)}_\nu -\frac{1}{2}\frac{\partial g^{\mu\nu}}{\partial u^{(0)}}p_\rho^{(0)}\partial_\mu w^{(k)}_\nu-\frac{1}{2}\frac{\partial g^{\iI\jI}}{\partial x^\rho}\partial_{\iI}w^{(k)}_{\jI}+\nonumber\\
&+\frac{\partial F_k}{\partial u^{(l)}}w_\rho^{(l)}+\frac{\partial F_k}{\partial v_\mu^{(l)}}\partial_\mu w_\rho^{(l)}+\frac{\partial F_k}{\partial w_{\mu\nu}^{(l)}}\partial_\mu\partial_\nu w_\rho^{(l)}=0,\quad k\leq N_0-1,
\end{align}
where $w_\mu^{(k)}=\partial_\mu u^{(k)}-p_\mu^{(k)}$, $k=0,\ldots N_0-1$. These equations form linear generalized hyperbolic system for $w_\mu^{(k)}$ and as the initial data 
\begin{equation}
w_\mu^{(k)}|_\Sigma=\partial_\mu u^{(k)}-p_\mu^{(k)}|_\Sigma=0,\quad 
\partial_1w_\mu^{(k)}|_\Sigma=\partial_1\partial_\mu u^{(k)}-\partial_1p_\mu^{(k)}|_\Sigma=0,
\end{equation}
we have by uniqueness of solution (due to induction hypothesis)
\begin{equation}
p_\mu^{(k)}=\partial_\mu u^{(k)},
\end{equation}
and the solution of the lower order system gives the solution of the original one. 

The solution of $u'$ system depends smoothly on the initial data, so it is also true for the original system. The induction is complete.
\end{proof}

\begin{remark}
In fact one can show that it is well-posed in Sobolev spaces, but of different order for every $u^{(k)}$. We leave the details for further investigations.
\end{remark}

\subsection{The ambient construction}

We consider an ambient space\footnote{We will consider later another ambient space $\mM=\R\times\tilde{M}$ which was introduced by Fefferman and Graham \cite{Fefferman-Graham}.}
\begin{equation}
\tilde{M}= M\times \R.
\end{equation}
with coordinates  $(x^{\iI},\rho)$ where $x^{\iI}$ are coordinates on $M$. 
We will denote fields on $\tilde{M}$ with $\tilde{\phantom{\phi}}$. We regard them as a formal series in $\rho$. 

We denote differentiation over $\rho$ by $\partial_\infty$ or $'$. We denote indices in $M$ by greek letters. We assume that $x^1$ is a time coordinate and in what follows $\Sigma=\{x^1=0\}\subset M$. We use symbol $D^m u$ to denote $m$ jets on $M$ of the field $u$ on $M$.

Let us consider a multifield (a collection of fields) $\tilde{v}$ on $\tilde{M}$, we can write an expansion
\begin{equation}
\tilde{v}=\sum_{m=0} \tilde{v}^{[m]}(x^{\iI})\rho^m + O(\rho^\infty),
\end{equation}
where $\tilde{v}^{[m]}$ are rescaled Taylor expansion coefficients and $O(\rho^\infty)$ means a term that vanishes to infinite order at $\rho=0$ surface. Every term in the expansion is a function on $M$. In what follows we will be interested in such formal series.

\begin{df}\label{df:order}
Let $\tilde{u}$ be multifield on $\tilde{M}$.
We say that a formal series $\tilde{F}$ in $\rho$
is of order $N$ in $\tilde{u}$ if $\tilde{F}^{[n]}$ is a function of $x$ and
\begin{equation}
\left\{D^{m}\tilde{u}^{[l]}\right\},\quad m=\min\{n+N-l,2\},\ l\leq n+N.
\end{equation}
\end{df}

Let $\tilde{D}^m \tilde{u}$ denotes $m$ jets on $\tilde{M}$ of the field $\tilde{u}$. If
$\tilde{F}(x,\tilde{D}^2\tilde{u})$ is a smooth function then it is of order $2$.
Let $\tilde{f}^{\mu\nu}(x,\tilde{u})$  be a  smooth tensor function then
\begin{equation}\label{eq:g-exp}
\tilde{f}^{\mu\nu}\partial_\mu\partial_\nu \tilde{u}=[\tilde{f}^{\mu\nu}]^{[0]}\partial_\mu\partial_\nu \tilde{u}+\tilde{F},
\end{equation}
where $\tilde{F}$ is of order $1$.

Important example of generalized hyperbolic systems can be obtained from $\tilde{K}^{[n]}=0$, $n=0,\ldots, N$ for multifields $\tilde{u}^{[n]}$, $n=0,\ldots, N$ if
\begin{equation}\label{eq:ambient-hip}
\tilde{K}=-\frac{1}{2}\g^{\iI\jI}\partial_{\iI}\partial_{\jI}\tilde{u}+\tilde{F},
\end{equation}
and $\tilde{F}$ is of order $1$ and $\tilde{F}^{[n]}$ for $n\leq N$ decouple in the sense that they does not depend on $O(\rho^{N+1})$ part of the multifield.

\subsection{The derived equation}

The equations of interest have also another important property:

\begin{df}\label{df:recursive}
We say that a system $K_n(x, D^2u)=0$, $n=0,\ldots, N$ for $u^{(k)}$, $k=0,\ldots, N$ is recursive if 
\begin{enumerate}
\item For every $n<N$, $K_n$ is a function of $D^2u^{(k)}$ for $k\leq n$ and $u^{(n+1)}$,
\item For every $n<N$, $K_n$ depends linearly on $u^{(n+1)}$ and we can determine 
$u^{(n+1)}$ from equation $K_n=0$ in terms of other variables.
\end{enumerate}
\end{df}

We will consider in this paper the generalized hyperbolic systems given by
\begin{equation}\label{eq:long}
\tilde{K}^{[k]}=-\frac{1}{2}g^{\mu\nu}(\tilde{u}^{[0]})\partial_\mu\partial_\nu \tilde{u}^{[k]}+[\tilde{F}]^{[k]},
\end{equation}
which are recursive till order $N$. In order to determine this property it is enough to study $\tilde{F}$.

The property \eqref{df:recursive} allows us to determine higher order variables from sufficiently high jets of the lowest order $u^{(0)}$. In our application we need a local version of this procedure that is described by a following lemma:

\begin{lm}\label{lm:recursive}
Let $K_n$ be recursive in $u^{(k)}$ for $0\leq k\leq N$ till order $N$.
There exist smooth functions $H^n_{\tilde{K}}$ for $0<n\leq N$ depending on $x\in M$ and on variables $D^{2n}u^{(0)}$,
such that for any point $x\in M$ and an integer $N'>0$ the following conditions are equivalent
\begin{enumerate}
\item $D^{N'-2k-2}K_k(x)=0$ for $0\leq k\leq N-1$, $2k+2\leq N'$,
\item $D^{N'-2k}\left(u^{(k)}-H_{\tilde{K}}^k\left(x,D^{2k}u^{(0)}\right)\right)(x)=0$ for $1\leq k\leq N$, $2k\leq N'$.
\end{enumerate}
where $D^m$ denotes $m$-th jets. In the case of linear system the functions $H^n_{\tilde{K}}$ are also linear. If the system does not directly depend on $x$ then the same is true for $H^n_{\tilde{K}}$.
\end{lm}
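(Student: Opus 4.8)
The plan is to construct the functions $H^k_{\tilde{K}}$ by iterating the recursive solvability, and then to prove the stated equivalence by induction on the number of conditions, matching jet orders on the two sides. By Definition \ref{df:recursive}, for each $k$ with $0\le k\le N-1$ the equation $K_k=0$ is linear in $u^{(k+1)}$ and solvable for it, so I would write
\[
K_k=A_k\,\bigl(u^{(k+1)}-\Phi^{k+1}\bigr),
\]
where $A_k$ is a pointwise invertible smooth coefficient and $\Phi^{k+1}(x,D^2u^{(0)},\ldots,D^2u^{(k)})$ is the solution for $u^{(k+1)}$; both depend only on $x$ and on $D^2u^{(j)}$ for $j\le k$. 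I would then define $H^k_{\tilde{K}}$ recursively by $H^1_{\tilde{K}}=\Phi^1$ and $H^{k+1}_{\tilde{K}}=\Phi^{k+1}(x,D^2u^{(0)},D^2H^1_{\tilde{K}},\ldots,D^2H^k_{\tilde{K}})$, that is, by substituting the already-constructed lower functions for $u^{(1)},\ldots,u^{(k)}$ while leaving $u^{(0)}$ untouched. Since each substitution $u^{(j)}\mapsto H^j_{\tilde{K}}$ replaces a $D^2u^{(j)}$ by $D^2$ of a function of $D^{2j}u^{(0)}$, i.e.\ by a function of $D^{2j+2}u^{(0)}$, an easy induction shows $H^k_{\tilde{K}}$ depends only on $x$ and $D^{2k}u^{(0)}$, as claimed; linearity (in the linear case) and independence of $x$ are manifestly preserved by these operations.

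Next I would record two elementary facts. First, if $A$ is smooth and invertible at $x$, then for any field $v$ and any $m$ one has $D^m(Av)(x)=0\iff D^mv(x)=0$; this follows from the Leibniz rule together with induction on the jet order, using invertibility of $A(x)$. Second, writing $e^{(k)}=u^{(k)}-H^k_{\tilde{K}}$, Hadamard's lemma applied to the smooth difference $\Phi^k-H^k_{\tilde{K}}$ (which differ only through $D^2u^{(j)}$ versus $D^2H^j_{\tilde{K}}$ for $1\le j\le k-1$) yields $\Phi^k-H^k_{\tilde{K}}=\sum_{j=1}^{k-1}C_j\,D^2e^{(j)}$ for smooth coefficients $C_j$. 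In particular $D^m(\Phi^k-H^k_{\tilde{K}})(x)$ is governed by the jets $D^{m+2}e^{(j)}(x)$ for $j\le k-1$.

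I would then prove the equivalence by induction on the cumulative number $K$ of included conditions (the base case $K=0$ being vacuously true), noting that the families are indexed so that the $k=K-1$ term of (1) and the $k=K$ term of (2) both have jet order $N'-2K$ and enter, as $K$ runs up to $\min(N,\lfloor N'/2\rfloor)$, under exactly the same constraints $K\le N$, $2K\le N'$. The inductive step reduces, via fact (a) and $K_{K-1}=A_{K-1}(u^{(K)}-\Phi^K)$, to showing that under the lower conditions $D^{N'-2j}e^{(j)}(x)=0$ for $j\le K-1$ one has
\[
D^{N'-2K}K_{K-1}(x)=0\iff D^{N'-2K}e^{(K)}(x)=0 .
\]
Since $u^{(K)}-\Phi^K=e^{(K)}-(\Phi^K-H^K_{\tilde{K}})$ and, by fact (b), $D^{N'-2K}(\Phi^K-H^K_{\tilde{K}})(x)$ is controlled by $D^{N'-2K+2}e^{(j)}(x)$ for $j\le K-1$, the key inequality $N'-2j\ge N'-2(K-1)=N'-2K+2$ (valid precisely for $j\le K-1$) shows these error terms vanish under the inductive hypothesis, giving the single-step equivalence.

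The step I expect to be the main obstacle is this last piece of bookkeeping: ensuring that the $D^2$-shift incurred by the chain-rule error $\Phi^K-H^K_{\tilde{K}}$ is always absorbed by the higher vanishing order $N'-2j$ (with $j<K$) of the already-established lower conditions. Once the index/order matching $N'-2k-2=N'-2(k+1)$ and the inequality $N'-2j\ge N'-2K+2$ are in place, the remainder is the routine verification that the ranges of $k$ in the two conditions open and close together, so that adding the $(K-1)$-st equation of (1) corresponds exactly to adding the $K$-th equation of (2).
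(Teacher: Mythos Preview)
Your proposal is correct and follows essentially the same approach as the paper: both proceed by induction, solve $K_{k-1}=0$ for $u^{(k)}$ in terms of lower data (your $\Phi^k$, the paper's $G^{k}$), and then obtain $H^k_{\tilde K}$ by recursively substituting the lower-order functions. The paper's proof is much terser---it records only the single-step equivalence $D^{N'-2k_0}K_{k_0-1}(x)=0\Leftrightarrow D^{N'-2k_0}(u^{(k_0)}-G^{k_0})(x)=0$ ``due to linear dependence'' and then dispatches the rest with ``by inserting recursively variables from the lower orders''; your facts (a) and (b) and the jet-order inequality $N'-2j\ge N'-2K+2$ are precisely the details that flesh out these two sentences.
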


\begin{remark}
We will use subscript for $H^k$ to indicate which system is used to determine recursive functions.
\end{remark}

\begin{proof}
We proceed by induction in $k_0$. Suppose that 
\begin{equation}
D^{N'-2k}u^{(k)}(x)=D^{N'-2k}H_{\tilde{K}}^k\left(x,D^{2k}u^{(0)}\right)|_x,\quad 1\leq k\leq  k_0-1
\end{equation}
is equivalent to
\begin{equation}
D^{N'-2k-2} K_k(x)=0,\quad 0\leq k\leq k_0-2.
\end{equation}
Solving equation $K_{k_0-1}=0$ for $u^{(k_0)}$ we introduce functions $G^{k_0}$
\begin{equation}
u^{(k_0)}=G^{k_0}\left(x,\left\{D^{2}u^{(l)}\right\}_{ l\leq k_0-1}\right)
\end{equation}
Let us notice that for $k_0\leq N$ and $2k_0\leq N'$
\begin{equation}
 D^{N'-2k_0}K_{k_0-1}(x)=0,\Longleftrightarrow  D^{N'-2k_0}(u^{(k_0)}-G^{k_0})|_x=0,
\end{equation}
due to linear dependence.
By inserting recursively variables from the lower orders we show the result. Induction starts with $k_0=0$ where it is a trivial statement.
\end{proof}

This lemma allows us to determine initial data for the generalized hyperbolic system from the sufficienty high jets of the lowest order field $u^{(0)}$ on the Cauchy surface. Important is that the evolved generalized system will have $u^{(0)}$ in agreement with this data. This property is guaranteed by the following fact:

\begin{lm}\label{lm:bijective}
Let $K_n=0$ be  the generalized hyperbolic system, recursive in the multifield $u^{(k)}$, $k=0,\ldots N$. Then a solution with initial data
\begin{equation}
u^{(k)}|_\Sigma=H_{\tilde{K}}^k\left(\cdot,D^{2k}v|_\Sigma\right),\quad \partial_1u^{(k)}|_\Sigma=\partial_1H_{\tilde{K}}^k\left(\cdot,D^{2k}v|_\Sigma\right),
\end{equation}
on a Cauchy surface satisfies
\begin{equation}
D^{2N+1}u^{(0)}|_\Sigma=D^{2N+1}v|_\Sigma.
\end{equation}
\end{lm}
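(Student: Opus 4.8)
The plan is to reduce the full jet identity to a family of scalar identities about normal derivatives and then to run a single weighted induction that is arranged so as never to invoke the top equation $K_N$.

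First I would note that tangential differentiation commutes with restriction to $\Sigma$, so $D^{2N+1}u^{(0)}|_\Sigma=D^{2N+1}v|_\Sigma$ is equivalent to the identities $\partial_1^b u^{(0)}|_\Sigma=\partial_1^b v|_\Sigma$ for $0\le b\le 2N+1$ (once these hold as functions on $\Sigma$, applying any tangential $\bar D^\gamma$ recovers the remaining components of the $(2N+1)$-jet). To compare the solution with $v$ I would introduce the recursive extension $\hat u^{(l)}:=H^l_{\tilde K}(\cdot,D^{2l}v)$, with the convention $\hat u^{(0)}=v$, defined near $\Sigma$ from the jets of $v$. By the very construction of the $H^l_{\tilde K}$ in Lemma \ref{lm:recursive} — solving $K_{l-1}=0$ for $u^{(l)}$ and substituting the lower recursive functions — the tuple $\hat u$ satisfies all the recursive equations $K_l=0$ for $0\le l\le N-1$ identically near $\Sigma$, while by the hypothesis on the initial data it has the same Cauchy data as the solution $u$, namely $u^{(l)}|_\Sigma=\hat u^{(l)}|_\Sigma$ and $\partial_1 u^{(l)}|_\Sigma=\partial_1\hat u^{(l)}|_\Sigma$ for all $l$. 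The solution $u$ of course satisfies $K_l=0$ for all $l$, in particular for $l\le N-1$.

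Next I would assign to each quantity $\partial_1^b u^{(l)}$ the weight $b+2l$ and prove by induction on $W=0,1,\dots,2N+1$ the statement $A(W)$: $\partial_1^b u^{(l)}|_\Sigma=\partial_1^b\hat u^{(l)}|_\Sigma$ for every pair with $b+2l\le W$. Pairs with $b\le 1$ are immediate from the matching Cauchy data. For a genuinely new pair, with $b+2l=W$ and $b\ge 2$, one has $l\le N-1$, so $K_l=0$ is a recursive equation obeyed by both $u$ and $\hat u$; solving $\partial_1^{b-2}K_l=0$ for the top normal derivative (possible since $g^{11}\neq 0$ on the non-characteristic $\Sigma$) expresses $\partial_1^b u^{(l)}|_\Sigma$ as a fixed smooth function of quantities $\partial_1^{b'}u^{(l')}|_\Sigma$. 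The dependence structure of $F_l$ — undifferentiated $u^{(l+1)}$, first derivatives only up to $u^{(l)}$, second derivatives only up to $u^{(l-1)}$ — together with the $u^{(0)}$-dependence of the metric, forces every such argument to have weight strictly below $W$, hence to be controlled by $A(W-1)$, with the single exception of $\partial_1^{b-2}u^{(l+1)}$, which has weight exactly $W$ but strictly larger field index. A secondary induction on decreasing $l$ within the fixed level $W$ (treating $l+1$ before $l$; note that at the top $\partial_1^{W-2N}u^{(N)}$ has $W-2N\le 1$ and is again Cauchy data) closes this last gap. Since $u$ and $\hat u$ solve the same $K_l=0$ and all arguments already agree, the two values of $\partial_1^b u^{(l)}|_\Sigma$ coincide. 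Specialising to $l=0$ at $W=2N+1$ gives $\partial_1^b u^{(0)}|_\Sigma=\partial_1^b v|_\Sigma$ for all $b\le 2N+1$, which is the claim.

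The heart of the argument, and the step I expect to be the main obstacle, is the term-by-term weight count inside $\partial_1^{b-2}K_l=0$. One must verify carefully that differentiating the second-order terms $\partial_\mu\partial_\nu u^{(l')}$ with $l'\le l-1$ produces at most $\partial_1^b u^{(l')}$ of weight $b+2l'\le W-2$, that the Leibniz terms falling on the metric only reintroduce $\partial_1^{\le b-2}u^{(0)}$ of weight $\le b-2$, and that the undifferentiated $u^{(l+1)}$ is the unique source of a same-weight contribution (the boundary case $b=2$, where $u^{(l+1)}|_\Sigma$ is itself Cauchy data, must be checked explicitly). The clean point that produces the number $2N+1$ is that the top equation $K_N$ would be needed only to evaluate $\partial_1^b u^{(N)}$ with $b\ge 2$, that is at weight $\ge 2N+2$; it therefore never enters at levels $W\le 2N+1$, and this is exactly why the evolved solution reproduces the jet of $v$ up to — and only up to — order $2N+1$.
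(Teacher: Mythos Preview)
Your proof is correct and follows essentially the same strategy as the paper's: introduce the recursive extension $\hat u^{(l)}=H^l_{\tilde K}(\cdot,D^{2l}v)$, observe that both $u$ and $\hat u$ satisfy $K_l=0$ for $l\le N-1$ on the relevant jets, and then run a double induction on normal derivatives to force agreement. The only difference is bookkeeping: the paper argues by contradiction from a minimal bad pair in the lexicographic order on $(m,k)$ (first $m$, then $k$ ascending), whereas you induct directly on the weight $W=b+2l$ with a secondary descending induction on $l$; both orderings are compatible with the dependency structure of $\partial_1^{b-2}K_l$, and your weight count is accurate.
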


\begin{proof}
Let $u^{(k)}$ be a development. We denote
\begin{equation}
\partial_1^mv^{(k)}|_\Sigma=\partial_1^mH_{\tilde{K}}^k\left(\left\{D^{m}v|_\Sigma\right\}_{m\leq 2k}\right),\quad m+2k\leq 2N+1.
\end{equation}
Consider a set
\begin{equation}
A=\{(m,k)\in \Z_+\times \Z_+\colon m\geq 0,\ k\geq 0,\ m+2k\leq 2N+1,\ \partial_1^mv^{(k)}|_\Sigma\not=\partial_1^mu^{(k)}|_\Sigma\}.
\end{equation}
We should show that this set is empty. By contradition assume otherwise and define
\begin{equation}
m_0=\min(m\colon \exists k,\ (m,k)\in A),\quad k_0=\min(k\colon (m_0,k)\in A).
\end{equation}
We notice that $m_0\geq 2$ because of the way $u^{(k)}|_\Sigma$ and $\partial_1u^{(k)}|_\Sigma$ are defined.
Consider $\partial_1^{m_0-2}K_{k_0}$ in terms of $u^{(k_0)}$
\begin{equation}
\partial_1^{m_0-2}K_{k_0}|_\Sigma=-\frac{1}{2}g^{11}\partial_1^{m_0}u^{(k_0)}|_\Sigma+\ldots,
\end{equation}
where $\ldots$ is a function of the terms which do not belong to $A$ by definition. Similarly
\begin{equation}
\partial_1^{m_0-2}K_{k_0}|_\Sigma=-\frac{1}{2}g^{11}\partial_1^{m_0}v^{(k_0)}|_\Sigma+\ldots,
\end{equation}
and as $\ldots$ of the same property. From the definition of $(m_0,k_0)$ the remainders $\ldots$ are equal thus
\begin{equation}
-\frac{1}{2}g^{11}\partial_1^{m_0}u^{(k_0)}|_\Sigma=-\frac{1}{2}g^{11}\partial_1^{m_0}v^{(k_0)}|_\Sigma,
\end{equation}
and as $g^{11}|_\Sigma$ is nonvanishing we obtain a contradition.
\end{proof}

If the equation system $K_n=0$ is recursive till order $N$ and it decouples at this order,
then the equation for $n=N$ gives us by Lemma \ref{lm:recursive} the equation of higher order for $u^{(0)}$
\begin{equation}\label{eq:G}
0=K_N\left(x,\left\{D^{2}u^{(l)}\right\}_{l\leq N}\right)=H^{N+1}_{\tilde{K}}\left(x,D^{2N+2}u^{(0)}\right).
\end{equation}
We will called it the derived equation from the system. In case of a linear system the derived equation is also linear. 

If the system $K_n=0$ , $n=0,\ldots , N$ is satisfied, then the equation \eqref{eq:G} for $u^{(0)}$ also holds. From solution of \eqref{eq:G} we can obtain solution to the system. Initial conditions for this equation provide also initial conditions for the system, if we know sufficiently high jets on the initial surface.

\section{The Fefferman-Graham ambient metric construction}\label{sec:ambient}

We are working in even $d$ dimensions. Moreover we assume that $d\geq 4$. Let us introduce an ambient space $\mM$ for the spacetime $M$
\begin{equation}
\mM=\R_+\times \tilde{M},\quad \tilde{M}= M\times \R,
\end{equation}
with coordinates $(t, x^{\iI},\rho)$ and $(x^{\iI},\rho)$ respectively, where $x^{\iI}$ are coordinates on $M$ and the metric on $\mM$ takes the form
\begin{equation}
\mg_{IJ} dx^Idx^J=2\rho dt^2+2tdtd\rho+t^2\tilde{g}_{\iI\jI}(x^{\iI},\rho)dx^{\iI}dx^{\jI}.
\end{equation}
In the following we will denote objects on $\tilde{M}$ with $\tilde{\phantom{\phi}}$ and objects on $\mM$ bold. We denote by $\mg_{IJ}$, $\mnabla_{I}$, $\mRS_{IJ}$ metric covariant derivative and Ricci tensor respectively on $\mM$. Indices $I=0,\infty$ or $\iI$ in the case of index on $M$. We use $\mg_{IJ}$ to raise or lower indices. The metric $\g_{\iI\jI}$, connection $\tnabla_{\iI}$ and Ricci tensor $\RR_{\iI\jI}$ are depending on $\rho$ objects on $M$. We use $\g_{\iI\jI}$ to raise and lower indices for such objects.

Let $h_{\iI\jI}$ be a given metric on $M$. The ambient metric on $\mM$ is a metric $\rm$that satisfies
$\g_{\iI\jI}^{[0]}=h_{\iI\jI}$ and
\begin{equation}\label{eq:FG}
\mRS_{IJ}=O(\rho^{d/2-1}),\quad \mRS=O(\rho^{d/2}),
\end{equation}
where $\mRS_{IJ}$ and $\mRS$ are Ricci tensor and Ricci scalar of the metric on $\mM$.
Symbol $\mF=O(\rho^n)$ means that  $\lim_{\rho\rightarrow 0}\rho^{-n}\mF$ exists.

One can show that $\mRS_{0 I}=0$ and that $\mRS_{IJ}$ is $t$ independent.   Essentially, it is a function on $\tilde{M}$ (see \cite{Fefferman-Graham})
\begin{equation}
\mRS_{\iI\jI}:=\RS_{\iI\jI},\quad \mRS_{\iI\infty}:=\RS_{\iI\infty},\quad \mRS_{\infty\infty}:=\RS_{\infty\infty}.
\end{equation}
We have  (eq. 3.17 in \cite{Fefferman-Graham})
\begin{align}
&\RS_{\iI\jI}=\rho \g_{\iI\jI}''-\rho \g^{\kI\lI}\g_{\kI\iI}'\g_{\lI\jI}'-\left(\frac{d}{2}-1\right)\g_{\iI\jI}'-\frac{1}{2}\g^{\kI\lI}\g_{\kI\lI}'\g_{\iI\jI}+\RR_{\iI\jI},\label{eq:S-form}\\
&\RS_{\iI\infty}=\frac{1}{2}\g^{\kI\lI}\left(\tnabla_{\kI}\g_{\iI\lI}'-\tnabla_{\iI}\g_{\kI\lI}'\right),\\
&\RS_{\infty\infty}=-\frac{1}{2}\g^{\kI\lI}\g_{\kI\lI}''+\frac{1}{4}\g^{\kI\lI}\g^{\iI\jI}\g_{\iI\kI}'\g_{\jI\lI}',\label{eq:S-infty2}
\end{align}
where $\RR_{\iI\jI}$ denotes the Ricci tensor in the metric $\g_{\iI\jI}$ depending on $\rho$.
The equations \eqref{eq:FG} are equivalent to
\begin{align}
&\RS_{\iI\jI}^{[n]}=0,\quad n=0,\ldots d/2-2,\label{eq:FG-2}\\
&(\g^{[0]})^{\iI\jI}\RS_{\iI\jI}^{[d/2-1]}=0,
\end{align}
and then other components automatically vanish. Namely, (see \cite{Fefferman-Graham} and compare with Proposition \ref{prop:Bianchi}),
\begin{equation}\label{eq:Bianchi-FG}
\RS_{\iI\infty}=O(\rho^{d/2-1}),\quad\RS_{\infty\infty}=O(\rho^{d/2-1}).
\end{equation}
One can check that $\RS_{\iI\jI}^{[n]}$ is recursive till order $d/2-1$. Thus, we obtain \footnote{This recurrence procedure breaks down for $n=d/2-1$ and this is the source of the obstruction tensor.}
\begin{equation}\label{eq:g-S-iter}
\g_{\iI\jI}^{[n]}=H_{\iI\jI\RS}^n(D^{2n}\g^{[0]}),\quad n\leq d/2-1,
\end{equation}
so higher orders of the metric are determined through $\g_{\iI\jI}^{[0]}=h_{\iI\jI}$.
The last equation $(\g^{[0]})^{\iI\jI}\RS_{\iI\jI}^{[d/2-1]}=0$ allows to compute the trace $\operatorname{tr} \g^{[d/2]}=(\g^{[0]})^{\iI\jI}\g_{\iI\jI}^{[d/2]}$. 
The formula for $\RS_{\iI\jI}^{[d/2-1]}$ depends on $\g^{[d/2]}$ only through the trace. It does not depend on the choice of the ambient metric. 

The Fefferman-Graham obstruction tensor for $h_{\iI\jI}$ is defined by
\begin{equation}
H_{\iI\jI}=\RS_{\iI\jI}^{[d/2-1]}.
\end{equation}
The constraints $H_{\iI\jI}=0$ are equivalent to 
\begin{equation}
\RS_{\iI\jI}^{[n]}=0,\quad n=0,\ldots d/2-1.
\end{equation}
Let us notice, that the specific combination
\begin{equation}\label{eq:comb}
\RS_{\iI\jI}^{[d/2-1]}-\frac{1}{d/2-1}\g_{\iI\jI}^{[0]}\RS_{\infty\infty}^{[d/2-2]}
\end{equation}
depends only on $\g^{[k]}$ for $k\leq d/2-1$ and its derivatives. Importantly,
\begin{equation}\label{eq:obstruction}
H_{\iI\jI}=\RS_{\iI\jI}^{[d/2-1]}-\frac{1}{d/2-1}\g_{\iI\jI}^{[0]}\RS_{\infty\infty}^{[d/2-2]},
\end{equation}
because $\RS_{\infty\infty}^{[d/2-2]}=0$ by \eqref{eq:Bianchi-FG}.
This form of the obstruction tensor does not involve $\operatorname{tr} \g^{[d/2]}$.

\subsection{Hyperbolicity of the Anderson-Fefferman-Graham equation}

The system of \eqref{eq:FG-1} is not hyperbolic for the same reason as Einstein's gravity, because of the gauge transformations. The first step is to introduce a hyperbolic system in a specific gauge. We will use a natural gauge introduced in \cite{Anderson}. Then we show that this gauge is preserved in the evolution and as a result the obtained solution is also a solution of the Anderson-Fefferman-Graham equations.  It is a standard treatment in gravity (see \cite{Choquet-Bruhat} for application to Einstein's equations).

Let us remind the following  known identity (see \cite{Choquet-Bruhat})
\begin{equation}\label{eq:R-property}
\RR_{\iI\jI}=-\frac{1}{2}\g^{\kI\lI}\partial_{\kI}\partial_{\lI} \g_{\iI\jI}+\frac{1}{2}\left(\partial_{\iI} \tF_{\jI}+\partial_{\jI}\tF_{\iI}\right)+\ldots,
\end{equation}
where $...$ means terms of order $1$ (see Definition \ref{df:order}) and
\begin{equation}
\tF_{\iI}=\g^{\kI\lI}\left(\partial_{\kI} \g_{\lI\iI}-\frac{1}{2}\partial_{\iI} \g_{\kI\lI}\right).
\end{equation}
We notice that
\begin{equation}
\partial_\infty^2 \tF_{\iI}=\g^{\kI\lI}\left(\partial_{\kI} \g_{\lI\iI}''-\frac{1}{2}\partial_{\iI} \g_{\kI\lI}''\right)+\ldots,
\end{equation}
where $\ldots$ denotes terms of order at most $2$.
Comparing it with
\begin{align}
&\partial_\infty \RS_{\iI\infty}=\frac{1}{2}\g^{\kI\lI}\left(\partial_{\kI}\g_{il}''-\partial_{\iI}\g_{\kI\lI}''\right)+\ldots,\\
&\partial_{\iI}\RS_{\infty\infty}=-\frac{1}{2}\g^{\kI\lI}\partial_{\iI}\g_{\kI\lI}''+\ldots,
\end{align}
we obtain the formula
\begin{equation}\label{eq:tF-tG}
\partial_\infty^2 \tF_{\iI}=2\partial_\infty\RS_{\iI\infty}-\partial_{\iI}\RS_{\infty\infty}+\ldots,
\end{equation}
where $\ldots$ denotes terms of order at most $2$. 

In order to write a slightly modified $\tF_{\iI}$ in terms of $\RS_{\iI\infty}$ and $\RS_{\infty\infty}$ we
extend the notion of derivatives with respect to $\rho$. For $n>0$ we introduce $n$-times integration of a multifield $\tilde{u}$ (a collection of fields on $\tilde{M}$) 
\begin{equation}
\partial_\infty^{-n}\tilde{u}(x^{\iI},\rho)=\int_0^\rho d\rho'\ \frac{(\rho-\rho')^{n-1}}{(n-1)!}\tilde{u}(x^{\iI},\rho'),
\end{equation}
that is $\partial_\infty^{-n} \sum_{k=0} u^{[k]}\rho^k=\sum_{k=0} \frac{1}{(k+1)\cdots (k+n)}u^{[k]}\rho^{k+n}$.
Suppose that $\tilde{F}$ is of order $N$ then $\partial_\infty^{-n}\tilde{F}$ is of order $N-n$.

We introduce additional tensors
\begin{align}
\tS&=-\frac{1}{2}\g^{[0]\kI\lI}\g_{\kI\lI}^{[1]}+\partial_\infty^{-1}\RS_{\infty\infty},\label{eq:tS}\\
\tG_{\iI}&=\tF^{[0]}_\mu+2\partial_\infty^{-1}\RS_{\iI\infty}-\partial_{\iI}\partial^{-1}_\infty\tS,\label{eq:tG}\\
\tE_{\iI\jI}&=\RS_{\iI\jI}-\frac{1}{2}(\tnabla_{\iI}\tG_{\jI}+\tnabla_{\jI}\tG_{\iI})-\g_{\iI\jI}\tS,\label{eq:tE}
\end{align}
These tensors will be used in our analysis of the AFG equations.  The reason for occurence of additional term $\g_{\iI\jI}\tS$ is explained in the proof below.

\begin{lm}\label{lm:E-weel-posed}
The equation system 
\begin{equation}
\label{eq:Hip}
\tE_{\iI\jI}=O(\rho^{d/2})
\end{equation}
is generalized hyperbolic and recursive in $\g_{\iI\jI}^{[n]}$, $n=0,\ldots, d/2-1$ till order $d/2-1$ and it decouples at this order. Thus, it is well-posed.
\end{lm}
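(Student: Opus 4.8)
The plan is to recognise $\tE_{\iI\jI}=O(\rho^{d/2})$ as one of the ambient generalized hyperbolic systems \eqref{eq:long}, to check it is recursive and decouples in the sense of Definition~\ref{df:recursive}, and then to quote Proposition~\ref{prop:hip}. Thus the goal is to establish
\[
\tE_{\iI\jI}=-\tfrac12[\g^{\kI\lI}]^{[0]}\partial_{\kI}\partial_{\lI}\g_{\iI\jI}+\tilde F_{\iI\jI},
\]
with $\tilde F_{\iI\jI}$ of order $1$ (Definition~\ref{df:order}). Since $\tE_{\iI\jI}$ is manifestly a second order expression in the $M$-derivatives of $\g$, being of order $1$ amounts to showing that, at each $\rho$-coefficient $n$, the only second $M$-derivative of the top field $\g^{[n]}$ that survives is the wave operator, and that no undifferentiated $\g^{[n+2]}$ nor first derivative $\partial\g^{[n+1]}$ occurs.

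For the principal part I would start from \eqref{eq:S-form} and \eqref{eq:R-property}: the Ricci term contributes $-\tfrac12\g^{\kI\lI}\partial_{\kI}\partial_{\lI}\g_{\iI\jI}+\tfrac12(\partial_{\iI}\tF_{\jI}+\partial_{\jI}\tF_{\iI})$ modulo order $1$, while the explicitly $\rho$-differentiated terms of \eqref{eq:S-form} (the $\rho\g''$ and $\g'$ pieces) are order $1$ because each $\partial_\infty$ raises the $\rho$-power and introduces no $M$-derivative; freezing $\g^{\kI\lI}$ to $[\g^{\kI\lI}]^{[0]}$ costs only order $1$ by \eqref{eq:g-exp}. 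The sole obstruction to hyperbolicity is the symmetrized gradient $\tfrac12(\partial_{\iI}\tF_{\jI}+\partial_{\jI}\tF_{\iI})$, which must be cancelled by $-\tfrac12(\tnabla_{\iI}\tG_{\jI}+\tnabla_{\jI}\tG_{\iI})$ up to order $1$; the Christoffel part of $\tnabla\tG$ and the term $\g_{\iI\jI}\tS$ carry no second $M$-derivative of $\g^{[n]}$ and are order $1$ at once. The crux is that the gauge covector \eqref{eq:tG} matches $\tF_{\iI}$ to top order: at every coefficient $n$, $\tG^{[n]}_{\iI}$ and $\tF^{[n]}_{\iI}$ share the same leading $\partial\g^{[n]}$ part, so that $\tF_{\iI}-\tG_{\iI}$ is of order $0$ and its symmetrized gradient is order $1$. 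This is exactly what identity \eqref{eq:tF-tG} provides: integrating it twice in $\rho$ rewrites $\tF_{\iI}$ through $2\partial_\infty^{-1}\RS_{\iI\infty}$ and $-\partial_{\iI}\partial_\infty^{-2}\RS_{\infty\infty}$ modulo order-$0$ remainders (using that $\partial_\infty^{-n}$ lowers the order by $n$ and that $\RS_{\iI\infty},\RS_{\infty\infty}$ carry at most one $M$-derivative), and the explicit $\tF^{[0]}_{\iI}$ together with $\partial_{\iI}\partial_\infty^{-1}\tS$ and the algebraic piece of $\tS$ in \eqref{eq:tS} repair the two lowest coefficients $\rho^0,\rho^1$ that double integration cannot reconstruct.

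For the recursive structure I would read off from \eqref{eq:S-form} the coefficient of the undifferentiated $\g^{[n+1]}_{\iI\jI}$ in $\tE^{[n]}_{\iI\jI}$, coming from the $\rho\g''$ and $-(\tfrac d2-1)\g'$ terms: on the trace-free part it is $(n+1)\bigl(n-(\tfrac d2-1)\bigr)$, with an additional trace contribution. For $0\le n<\tfrac d2-1$ this linear map on symmetric tensors is invertible, so $\g^{[n+1]}$ is determined from $\tE^{[n]}=0$, giving recursiveness. At $n=\tfrac d2-1$ the trace-free coefficient vanishes, and the residual $\tr\g^{[d/2]}$ dependence of $\RS^{[d/2-1]}_{\iI\jI}$ is cancelled precisely by the $\partial_\infty^{-1}\RS_{\infty\infty}$ contribution of $-\g_{\iI\jI}\tS$; this is the combination \eqref{eq:obstruction}, which is $\tr\g^{[d/2]}$-independent. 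Hence $\tE^{[d/2-1]}$ does not involve $\g^{[d/2]}$ and the system decouples and closes among $\g^{[0]},\dots,\g^{[d/2-1]}$.

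With the generalized hyperbolic form and the decoupling in hand, well-posedness follows from Proposition~\ref{prop:hip}. The hard part is the top-order matching $\tG^{[n]}_{\iI}\equiv\tF^{[n]}_{\iI}$ modulo lower order at every $n$: one must bookkeep orders carefully through the repeated integrations $\partial_\infty^{-1},\partial_\infty^{-2}$ and verify that the exact numerical coefficients in \eqref{eq:tG} and \eqref{eq:tS} conspire to annihilate the $\partial\g^{[n]}$ part of $\tF_{\iI}-\tG_{\iI}$, with special attention to the lowest coefficients $n=0,1$ that \eqref{eq:tF-tG} leaves uncontrolled and that dictate the presence of $\tF^{[0]}_{\iI}$ and of the algebraic term $-\tfrac12\g^{[0]\kI\lI}\g^{[1]}_{\kI\lI}$ in the definitions.
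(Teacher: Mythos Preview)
Your proposal is correct and follows essentially the same approach as the paper's proof: establish that $\tF_\iI-\tG_\iI$ is of order $0$ by handling the $\rho^0$ and $\rho^1$ coefficients explicitly and invoking \eqref{eq:tF-tG} for the higher ones, compute the $\g^{[n+1]}$ coefficient of $[\RS_{\iI\jI}-\g_{\iI\jI}\tS]^{[n]}$ to verify recursiveness and decoupling, and then apply Proposition~\ref{prop:hip}. One small sharpening: the paper obtains $(n-\tfrac d2+1)(n+1)\g^{[n+1]}_{\iI\jI}$ as the coefficient on the \emph{full} tensor---the $\g_{\iI\jI}\tS$ subtraction cancels the trace piece $-\tfrac12\g^{\kI\lI}\g'_{\kI\lI}\g_{\iI\jI}$ of $\RS_{\iI\jI}$ at every $n$, not only at $n=\tfrac d2-1$---so invertibility for $n<\tfrac d2-1$ is immediate without splitting into trace-free and trace parts.
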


\begin{proof}
We will first prove that it is recursive till order $d/2-1$ and that it decouples at this order.
Functions $\tG_{\iI}$ and $\tS$ are of order $1$. Hence, $\tE_{\iI\jI}$ is of order $2$. Moreover, $\tG_{\iI}^{[n]}$ does not depend on $\g_{\iI\jI}^{[k]}$, $k>n$ (nor its derivatives). The dependence of $\RS_{\iI\jI}^{[n]}$ and $\tS^{[n]}$ on $D^m\g_{\iI\jI}^{[k]}$, $k>n$ is only by linear terms in $\g_{\iI\jI}^{[n+1]}$. In fact, we can compute  
\begin{equation}
[\RS_{\iI\jI}-\g_{\iI\jI}\tS]^{[n]}=\left(n-\frac{d}{2}+1\right)(n+1)\g_{\iI\jI}^{[n+1]}+\ldots,
\end{equation}
where $\ldots$ are terms depending on $D^2\g_{\iI\jI}^{[l]}$ for $l\leq n$.
For $n<d/2-1$ we can uniquely determined $\g_{\iI\jI}^{[n+1]}$. Additionally, $[\RS_{\iI\jI}-\g_{\iI\jI}\tS]^{[d/2-1]}$ depends only on $\g_{\iI\jI}^{[k]}$ for $k\leq d/2-1$ and their derivatives (see also \eqref{eq:comb}). Thus, the system is recursive and it decouples.

We need to show that $\tE_{\iI\jI}$ is of the form \eqref{eq:ambient-hip}.
As a preliminary step we prove that
\begin{equation}\label{eq:F-G}
\tF_{\jI}=\tG_{\jI}+\ldots,
\end{equation}
where $\ldots$ denotes term of order $0$. Indeed,
$\tF_{\jI}^{[0]}=\tG_{\jI}^{[0]}$. Direct computation gives
\begin{equation}
\tG_{\iI}^{[1]}=[\partial_\infty \tG_{\iI}]^{[0]}=2\RS_{\iI\infty}^{[0]}-\partial_{\iI}\tS^{[0]},
\end{equation}
which can be compared to
\begin{equation}
\tF_{\iI}^{[1]}=2\RS_{\iI\infty}^{[0]}+\frac{1}{2}\partial_{\iI}(\g^{[0]\kI\lI}\g_{\kI\lI}^{[1]})+\ldots,
\end{equation}
where $\ldots$ denotes terms depending on $x$ and  $D^m\g_{\iI\jI}^{[k]}$ for $m+k\leq 1$.
Finally, by \eqref{eq:tF-tG} we obtain
\begin{equation}
\partial_\infty^2\tF_{\jI}=\partial_\infty^2\tG_{\jI}+\ldots,
\end{equation}
where $\ldots$ denotes term of order $2$. This shows \eqref{eq:F-G}.

We thus have
\begin{equation}
\tnabla_{\iI} \tF_{\jI}=\tnabla_{\iI}\tG_{\jI}+\ldots,
\end{equation}
where $\ldots$ denotes term of order $1$ (both $\tF_{\iI}$ and $\tG_{\iI}$ depends only on up to first derivatives of the metric). 
Taking this   and  \eqref{eq:R-property} into account, the following yields
\begin{equation}
\tE_{\iI\jI}=\RS_{\iI\jI}-\frac{1}{2}(\tnabla_{\iI}\tG_{\jI}+\tnabla_{\jI}\tG_{\iI})-\g_{\iI\jI}\tS=\RR_{\iI\jI}-\frac{1}{2}(\tnabla_{\iI}\tF_{\jI}+\tnabla_{\jI}\tF_{\iI})+\ldots=
-\frac{1}{2}\g^{\kI\lI}\partial_{\kI}\partial_{\lI} \g_{\iI\jI}+\ldots,
\end{equation}
where $\ldots$ is of order $1$.
Expanding first term the form described in \eqref{eq:ambient-hip} is obtained. 
Well-posedness follows from Proposition \ref{prop:hip}.
\end{proof}

\subsubsection{Propagation of the gauge}\label{sec:gauge}

In this section we will explain that $\tG_{\iI}=O(\rho^{d/2})$ and $\tS=O(\rho^{d/2-1})$ provided that these functions vanish on the initial surface together with their time derivatives and secondly $\tE=O(\rho^{d/2})$. As usual this is achieved by showing that these variables obey a system of linear hyperbolic equations.

Let us introduce two tensors
\begin{align}
&\tB^1_{\iI}=-\frac{1}{2}\tnabla^{\kI}\tnabla_{\kI} \tG_{\iI}-\frac{1}{2}\RR_{\iI}^{\jI}\tG_{\jI}-\left(\frac{d}{2}-1-\rho\partial_\infty\right)\partial_\infty \tG_{\iI}+\frac{1}{2}\g^{\kI\lI}\g'_{\kI\lI}\rho\partial_\infty \tG_{\iI}+\frac{1}{2}\rho\g^{\kI\lI}\g'_{\kI\lI}\partial_{\iI}\tS,\\
&\tB^2=-\frac{1}{2}\tnabla^{\iI}\partial_{\iI}\tS
-\left(\frac{d}{2}-2-\rho\partial_\infty\right)\partial_\infty\tS+\g^{\iI\jI}\g_{\iI\jI}'\rho\partial_\infty\tS
+\frac{1}{2}\tQ^{\iI}\tG_{\iI}+\frac{1}{2}\g_{\iI\jI}'\tnabla^{\iI} \tG^{\jI}+\frac{1}{2}\g_{\iI\jI}'\g^{\iI\jI}\tS,
\end{align}
where $\tQ^{\iI}=\partial_\infty (\g^{\jI\kI}\tilde{\Gamma}^{\iI}_{\jI\kI})$  and $\tilde{\Gamma}^{\iI}_{\kI\lI}$ is $\rho$-dependent Christoffel symbol.

We will prove in Proposition \ref{lm:B-follows} in Section \ref{sec:Bianchi} that
if $\tE=O(\rho^{d/2})$ then $\tB^1_{\iI}=O(\rho^{d/2})$ and $\tB^2=O(\rho^{d/2-1})$. The detailed property is presented below:

\begin{lm}\label{lm:B-closed}
The equation system
\begin{equation}\label{eq:BB}
\tB^1_{\iI}=O(\rho^{d/2}),\ \tB^2=O(\rho^{d/2-1})
\end{equation}
is linear generalized hyperbolic for $\tG_{\iI}^{[n]}$, $n=0,\ldots d/2-1$ and $\tS^{[n]}$, $n=0,\ldots d/2-2$. Moreover, if \eqref{eq:BB} is satisfied and in a point $x\in M$ the following equations are true
\begin{equation}
D^{d-2} \tG_{\iI}^{[0]}(x)=0,\quad D^{d-3} \tS^{[0]}(x)=0,
\end{equation}
then 
\begin{equation}
D^{d-2k-2}\tG_{\iI}^{[k]}(x)=0,\ k=0,\ldots d/2-1,\quad D^{d-2k-3} \tS^{[k]}(x)=0,\ k=0,\ldots d/2-2.
\end{equation}
\end{lm}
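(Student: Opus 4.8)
The plan is to expand everything as a formal power series in $\rho$ and read off, order by order, a coupled linear system of wave equations on $M$ for the two towers of coefficients $\tG_{\iI}^{[n]}$ and $\tS^{[n]}$. Concretely, I would group the unknowns into the multifields $u^{(n)}=\{\tG_{\iI}^{[n]},\tS^{[n]}\}$ (with $\tS^{[n]}$ absent at the top value $n=\frac{d}{2}-1$) and verify that the equations $[\tB^1_{\iI}]^{[n]}=0$ and $[\tB^2]^{[n]}=0$ fit the template \eqref{eq:ambient-hip}. The two structural facts I need from the explicit formulas for $\tB^1_{\iI}$ and $\tB^2$ are: (i) the only second-order $M$-derivatives at order $n$ are those of $\tG_{\iI}^{[n]}$ (resp. $\tS^{[n]}$) through $-\frac{1}{2}\tnabla^{\kI}\tnabla_{\kI}$ (resp. $-\frac{1}{2}\tnabla^{\iI}\partial_{\iI}$), whose leading symbol is the fixed Lorentzian d'Alembertian $-\frac{1}{2}(\g^{[0]})^{\kI\lI}\partial_{\kI}\partial_{\lI}$; and (ii) the coupling to the next order is undifferentiated, carried by $(\frac{d}{2}-1-\rho\partial_\infty)\partial_\infty$ for $\tG$ and $(\frac{d}{2}-2-\rho\partial_\infty)\partial_\infty$ for $\tS$.

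For Part 1 (generalized hyperbolicity) I would compute the $\rho$-coefficient of these $\partial_\infty$ terms: at order $n$ they contribute $-(\frac{d}{2}-1-n)(n+1)\tG_{\iI}^{[n+1]}$ and $-(\frac{d}{2}-2-n)(n+1)\tS^{[n+1]}$ respectively. The cross-couplings are all of lower differential and $\rho$-order: in $\tB^1_{\iI}$ the scalar enters only through $\frac{1}{2}\rho\g^{\kI\lI}\g'_{\kI\lI}\partial_{\iI}\tS$, i.e. by first $M$-derivatives of $\tS^{[l]}$ with $l\leq n-1$ (the explicit $\rho$ lowers the order), while in $\tB^2$ the covector enters through $\frac{1}{2}\tQ^{\iI}\tG_{\iI}$ and $\frac{1}{2}\g'_{\iI\jI}\tnabla^{\iI}\tG^{\jI}$, i.e. by at most first $M$-derivatives of $\tG^{[l]}$ with $l\leq n$. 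Hence, with the ordering $u^{(n)}=\{\tG^{[n]},\tS^{[n]}\}$, the system has exactly the defining form \eqref{eq:hiper} of a generalized hyperbolic system: second derivatives only of strictly lower $u^{(l)}$, first derivatives of $u^{(l)}$ with $l\leq n$, and undifferentiated coupling to $u^{(n+1)}$. Crucially, the coefficients $(\frac{d}{2}-1-n)(n+1)$ and $(\frac{d}{2}-2-n)(n+1)$ vanish at $n=\frac{d}{2}-1$ and $n=\frac{d}{2}-2$ respectively, so $\tG^{[d/2]}$ and $\tS^{[d/2-1]}$ drop out and the system decouples at order $N=\frac{d}{2}-1$. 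As $\tB^1_{\iI},\tB^2$ are manifestly linear in the unknowns with coefficients built from the fixed ambient data $\g$, this is a linear generalized hyperbolic system of the form \eqref{eq:ambient-hip}, and well-posedness follows from Proposition \ref{prop:hip}.

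For Part 2 I would turn the same order-by-order equations into an algebraic recursion at the point $x$. Since $(\frac{d}{2}-1-n)(n+1)\neq0$ for $0\leq n\leq\frac{d}{2}-2$, the equation $[\tB^1_{\iI}]^{[n]}=0$ solves for $\tG_{\iI}^{[n+1]}$ as a second-order $M$-differential expression in $\tG^{[l]}$, $l\leq n$, and a first-order expression in $\tS^{[l]}$, $l\leq n-1$; likewise $[\tB^2]^{[n]}=0$ (nonzero coefficient for $0\leq n\leq\frac{d}{2}-3$) solves for $\tS^{[n+1]}$ in terms of second $M$-derivatives of $\tS^{[l]}$, $l\leq n$, and first $M$-derivatives of $\tG^{[l]}$, $l\leq n$. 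Because the equations hold identically, all their $M$-jets vanish at $x$, so I can induct on $k$: assuming $D^{d-2l-2}\tG^{[l]}(x)=0$ and $D^{d-2l-3}\tS^{[l]}(x)=0$ for $l<k$, the $\tG^{[k]}$-recursion loses exactly two jet orders off the highest-index input $\tG^{[k-1]}$ (giving $D^{d-2k-2}\tG^{[k]}(x)=0$) while the $\tS^{[\leq k-2]}$ contribution, entering by one derivative, vanishes to even higher order; symmetrically the $\tS^{[k]}$-recursion loses two orders off $\tS^{[k-1]}$ (giving $D^{d-2k-3}\tS^{[k]}(x)=0$) while the first-derivative $\tG^{[\leq k-1]}$ contribution is over-killed. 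This closes the induction exactly at $k=\frac{d}{2}-1$ for $\tG$ and $k=\frac{d}{2}-2$ for $\tS$, the asserted ranges.

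The routine-but-delicate part, and the main obstacle, is the explicit $\rho$-expansion of $\tB^1_{\iI}$ and $\tB^2$: I must confirm that (a) the leading symbol is genuinely the background d'Alembertian with no spurious second-order cross-terms between the $\tG$ and $\tS$ towers, (b) the recursion coefficients are exactly $(\frac{d}{2}-1-n)(n+1)$ and $(\frac{d}{2}-2-n)(n+1)$ so that they vanish precisely at the top order, which is simultaneously what decouples the hyperbolic system and what makes the recursion solvable on the whole range, and (c) the one-unit offset in derivative weight between the two towers, coming from the explicit factor $\rho$ in the $\tS$-coupling of $\tB^1_{\iI}$ and from $\tS$ carrying one fewer $\rho$-order than $\tG$ in its definition \eqref{eq:tS}--\eqref{eq:tG}, is correctly bookkept so that the jet losses line up with the stated orders $d-2k-2$ and $d-2k-3$.
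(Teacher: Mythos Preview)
Your proposal is correct and follows essentially the same approach as the paper's own proof: identify the principal part as $-\tfrac12(\g^{[0]})^{\kI\lI}\partial_\kI\partial_\lI$, compute the $\rho$-expansion of the $\partial_\infty$ terms to obtain the coefficients $(\tfrac{d}{2}-1-n)(n+1)$ and $(\tfrac{d}{2}-2-n)(n+1)$, observe that the cross-couplings ($\tS$ into $\tB^1$ via the explicit $\rho$, $\tG$ into $\tB^2$ via $\g'_{\iI\jI}\tnabla^\iI\tG^\jI$ and $\tQ^\iI\tG_\iI$) respect the required order/derivative bounds, and then for the second part run the obvious induction on $k$ using the nonvanishing of those coefficients below the top level. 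Your jet bookkeeping in Part~2 is in fact spelled out more carefully than in the paper, which simply says ``taking up to $d-2k_0-2$ derivatives of $[\tB_{\iI}^1]^{[k_0-1]}=0$ and up to $d-2k_0-3$ derivatives of $[\tB^2]^{[k_0-1]}=0$ we get due to the induction hypothesis'' the next step.
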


\begin{proof}
Inspection of the equations shows that
\begin{equation}
\tB^1_{\iI}=-\frac{1}{2}g^{\mu\nu}\partial_\mu\partial_\nu \tG_{\iI}+\tF_{\iI}^1,\quad \tB^2=-\frac{1}{2}g^{\mu\nu}\partial_\mu\partial_\nu \tG_{\iI}+\tF^2,
\end{equation}
where $\tF^1_{\iI}$ and $\tF^2$ are of order $1$. Generalized hyperbolicity follows from two facts which ensure that the system decouples:
\begin{enumerate}
\item The dependence of $[\tB^1_{\iI}]^{[n]}$ on $D^m\tG_{\jI}^{[k]}$ for $k>n$ is by
\begin{equation}
\left[\left(\frac{d}{2}-1-\rho\partial_\infty\right)\partial_\infty \tG_{\iI}\right]^{[n]}=
\left(\frac{d}{2}-1-n\right)(n+1)\left[\tG_{\iI}\right]^{[n+1]},
\end{equation}
and it does not depend on $D^m\tS^{[k]}$ for $k\geq n$.
\item The dependence of $[\tB^2]^{[n]}$ on $D^m\tS^{[k]}$ for $k>n$ is by
\begin{equation}
\left[\left(\frac{d}{2}-2-\rho\partial_\infty\right)\partial_\infty \tS\right]^{[n]}=\left(\frac{d}{2}-2-n\right)(n+1)\left[\tS\right]^{[n+1]},
\end{equation}
and it does not depend on $D^m\tG_{\iI}^{[k]}$ for $k>n$.
\end{enumerate}
Let us now prove the second statement of the lemma by induction on $k_0$. Suppose that for all $0\leq k<k_0$
\begin{equation}\label{eq:DB-ind}
D^{d-2k-2}\tG_{\iI}^{[k]}=0,\quad D^{d-2k-3} \tS^{[k]}(x)=0,
\end{equation}
then taking up to $d-2k_0-2$ derivatives of $[\tB_{\iI}^1]^{[k_0-1]}=0$ and up to $d-2k_0-3$ derivatives of $[\tB^2]^{[k_0-1]}=0$ we get due to \eqref{eq:DB-ind}
\begin{equation}
D^{d-2k_0-2}\tG_{\iI}^{[k_0]}=0,\quad D^{d-2k_0-3} \tS^{[k_0]}(x)=0,
\end{equation}
which shows the induction together with a trivial statement for $k_0=1$.
\end{proof}

As a result, we obtain:

\begin{lm}\label{lm:gauge-prop}
Suppose that $\tE_{\iI\jI}=O(\rho^{d/2})$ and on the initial surface
\begin{equation}
\tG_{\iI}|_\Sigma=\partial_1\tG_{\iI}|_\Sigma=O(\rho^{d/2}),\ 
\tS|_\Sigma=\partial_1\tS|_\Sigma=O(\rho^{d/2-1}),
\end{equation}
then $\tG_{\iI}=O(\rho^{d/2})$ and $\tS=O(\rho^{d/2-1})$.
\end{lm}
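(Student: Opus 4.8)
The plan is to run the standard propagation-of-gauge argument: the gauge functions $\tG_{\iI}$ and $\tS$ satisfy a homogeneous linear hyperbolic system driven by $\tE$, they carry vanishing Cauchy data to the relevant order in $\rho$, and so uniqueness for linear generalized hyperbolic systems forces them to vanish everywhere to that order. Concretely, I would assemble three ingredients that are already in place: the Bianchi-type identity of Proposition \ref{lm:B-follows}, the hyperbolicity of the $\tB$-system established in Lemma \ref{lm:B-closed}, and the uniqueness clause contained in the well-posedness of Proposition \ref{prop:hip}.

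First I would invoke Proposition \ref{lm:B-follows}: from the hypothesis $\tE_{\iI\jI}=O(\rho^{d/2})$ it yields $\tB^1_{\iI}=O(\rho^{d/2})$ and $\tB^2=O(\rho^{d/2-1})$, so the actual $\tG_{\iI}$, $\tS$ produced by the evolution constitute a solution of \eqref{eq:BB}. By Lemma \ref{lm:B-closed} this system is a \emph{linear} generalized hyperbolic system for the finite family of multifields $\tG_{\iI}^{[n]}$, $n=0,\ldots,d/2-1$, and $\tS^{[n]}$, $n=0,\ldots,d/2-2$. Being linear and homogeneous in $\tG_{\iI}$ and $\tS$, with coefficients built solely from the fixed background metric $\g_{\iI\jI}^{[k]}$ (already determined through Lemma \ref{lm:E-weel-posed}), it admits the identically-zero multifield as a solution.

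Next I would translate the initial hypotheses into Cauchy data. The conditions $\tG_{\iI}|_\Sigma=\partial_1\tG_{\iI}|_\Sigma=O(\rho^{d/2})$ say exactly that $\tG_{\iI}^{[n]}|_\Sigma=\partial_1\tG_{\iI}^{[n]}|_\Sigma=0$ for $n\le d/2-1$, and likewise $\tS|_\Sigma=\partial_1\tS|_\Sigma=O(\rho^{d/2-1})$ gives $\tS^{[n]}|_\Sigma=\partial_1\tS^{[n]}|_\Sigma=0$ for $n\le d/2-2$. These are precisely the Cauchy data for the multifields of the $\tB$-system, and they vanish. Since both our solution and the zero solution solve the same well-posed linear system with the same zero Cauchy data, the uniqueness part of Proposition \ref{prop:hip} forces $\tG_{\iI}^{[n]}=0$ for $n\le d/2-1$ and $\tS^{[n]}=0$ for $n\le d/2-2$. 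Reassembling the formal $\rho$-series then yields $\tG_{\iI}=O(\rho^{d/2})$ and $\tS=O(\rho^{d/2-1})$, as claimed.

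The genuinely hard content lies upstream, in Proposition \ref{lm:B-follows} (the Bianchi identity producing the closed linear system from $\tE=O(\rho^{d/2})$) and in the decoupling and hyperbolicity verification of Lemma \ref{lm:B-closed}; both are assumed here. Within this lemma the only points requiring care are bookkeeping ones: confirming that the order thresholds match so that the stated hypotheses deliver exactly the vanishing Cauchy data the $\tB$-system needs — an off-by-one in the powers of $\rho$ between $\tG_{\iI}$ truncated at order $d/2-1$ and $\tS$ truncated at order $d/2-2$ would break the argument — and verifying the homogeneity that makes the zero multifield an admissible competitor. As an alternative to global uniqueness, one could instead run the second, pointwise jet-vanishing statement of Lemma \ref{lm:B-closed} inductively outward from $\Sigma$, but the uniqueness route is cleaner and simultaneously delivers the conclusion on the whole globally hyperbolic development.
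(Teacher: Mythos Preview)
Your proposal is correct and follows essentially the same approach as the paper: invoke Proposition~\ref{lm:B-follows} to obtain $\tB^1_{\iI}=O(\rho^{d/2})$ and $\tB^2=O(\rho^{d/2-1})$, then apply linearity and the uniqueness part of the generalized hyperbolic theory (Lemma~\ref{lm:B-closed} and Proposition~\ref{prop:hip}) with vanishing Cauchy data to conclude that the gauge functions vanish to the stated orders. Your write-up is simply more explicit about the bookkeeping (matching the $\rho$-order thresholds to the multifield indices and noting that homogeneity makes zero an admissible solution) than the paper's terse two-line version.
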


\begin{proof}
It follows from Proposition \ref{lm:B-follows} that $\tB^1_{\iI}=O(\rho^{d/2})$ and $\tB^2=O(\rho^{d/2-1})$ and by generalized hyperbolicity the solution is unique. From linearity, it is just zero.
\end{proof}

Additionally, we have 
\begin{equation}
\tS^{[d/2-1]}=-\frac{1}{2(d/2-1)}g^{[0]\iI\jI}\g^{[d/2]}_{\iI\jI}+\ldots,
\end{equation}
where $\ldots$ denote terms depending only on $\g^{[n]}_{\iI\jI}$ for $n\leq d/2-1$. By modifying $\g^{[d/2]}_{\iI\jI}$ we can assume that $\tS^{[d/2-1]}=0$.

\subsubsection{Gauge fixing conditions}

We assume that on the initial surface
\begin{equation}\label{cond:G0}
D^{d-2} \tG_{\iI}^{[0]}|_\Sigma=0,\quad D^{d-3} \tS^{[0]}|_\Sigma=0,
\end{equation}
and prove that in such a case
\begin{equation}\label{cond:G1}
\tG_{\iI}|_\Sigma=\partial_1\tG_{\iI}|_\Sigma=O(\rho^{d/2}),\ 
\tS|_\Sigma=\partial_1\tS|_\Sigma=O(\rho^{d/2-1}).
\end{equation}
We show it by noticing that the equations $\tB^1_{\iI}=O(\rho^{d/2})$, $\tB^2=O(\rho^{d/2-1})$ hold. We can invoke Lemma \ref{lm:B-closed} to show that
\begin{equation}
D^{d-2k-2}\tG_{\iI}^{[k]}|_\Sigma=0,\quad D^{d-2k-3} \tS^{[k]}|_\Sigma=0,
\end{equation}
Comparing with \eqref{cond:G1} we see that
the missing condition is $\partial_1\tG_{\iI}^{[d/2-1]}|_\Sigma=0$. 

\begin{lm}\label{lm:H-gauge}
Suppose that $\tE_{\iI\jI}=O(\rho^{d/2})$ and on the Cauchy surface $\Sigma$
\begin{equation}
H(\vec{N},\cdot)|_\Sigma=0,\quad \partial_1^k\tG_{\iI}^{[n]}|_\Sigma=0,\ k+2n\leq d-2,\quad  \partial_1^k\tS^{[n]}|_\Sigma=0,\ k+2n\leq d-2,
\end{equation}
then $\partial_1\tG_{\iI}^{[d/2-1]}|_\Sigma=0$ and $D^{d-2k-1} \g^{[k]}_{\iI\jI}|_\Sigma=D^{d-2k-1} H^k_{\RS,\ \iI\jI}(\g^{[0]}_{\kI\lI})|_\Sigma$.
\end{lm}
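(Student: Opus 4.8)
The plan is to prove the two conclusions in turn: first the jet-matching $D^{d-2k-1}\g^{[k]}_{\iI\jI}|_\Sigma=D^{d-2k-1}H^k_{\RS,\iI\jI}(\g^{[0]})|_\Sigma$, which identifies the evolved coefficients with the recursively determined ones on $\Sigma$, and then the missing gauge condition $\partial_1\tG^{[d/2-1]}_{\iI}|_\Sigma=0$, which is where the constraint $H(\vec N,\cdot)|_\Sigma=0$ enters. The common engine is the gauge relation $\RS_{\iI\jI}=\tE_{\iI\jI}+\tnabla_{(\iI}\tG_{\jI)}+\g_{\iI\jI}\tS$ coming from \eqref{eq:tE}, together with the fact that $\tE=O(\rho^{d/2})$ kills every $\tE^{[k]}_{\iI\jI}$ for $k\leq d/2-1$ identically, so on $\Sigma$ the Ricci coefficients are controlled purely by the gauge functions, which the hypotheses force to vanish to high order.

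For the jet-matching I would apply Lemma \ref{lm:recursive} to the Ricci system $\RS^{[k]}_{\iI\jI}=0$ (recursive till order $d/2-1$, with recursive functions $H^k_{\RS,\iI\jI}$) with $N'=d-1$. Its condition (2) reads $D^{d-1-2k}(\g^{[k]}_{\iI\jI}-H^k_{\RS,\iI\jI})|_\Sigma=0$ for $1\le k\le d/2-1$, which is exactly the desired identity, so it suffices to verify its condition (1), namely $D^{d-2k-3}\RS^{[k]}_{\iI\jI}|_\Sigma=0$ for $0\le k\le d/2-2$. Substituting the gauge relation at order $[k]$, the term $\tE^{[k]}_{\iI\jI}$ vanishes, and both $[\tnabla_{(\iI}\tG_{\jI)}]^{[k]}$ and $[\g_{\iI\jI}\tS]^{[k]}$ are built from $\tG^{[m]},\tS^{[m]}$ with $m\le k$ carrying at most one extra derivative. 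A derivative count shows that after applying $\partial_1^j\bar{D}^i$ with $i+j\le d-2k-3$, each monomial places at most $d-2-2m$ time derivatives on some $\tG^{[m]}$ or $\tS^{[m]}$ (the worst case $j+1=d-2k-2$ at $m=k$ being exactly the threshold), so by the hypotheses every such factor vanishes on $\Sigma$, spatial derivatives preserving the zero restriction. Thus condition (1) holds and Lemma \ref{lm:recursive} delivers the matching.

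For the missing condition I would compute $H_{\iI\jI}|_\Sigma$ from the obstruction form \eqref{eq:obstruction}. Since the constant piece of $\tS$ in \eqref{eq:tS} only affects $\tS^{[0]}$, one has $\tfrac{1}{d/2-1}\RS^{[d/2-2]}_{\infty\infty}=\tS^{[d/2-1]}$, so \eqref{eq:obstruction} becomes $H_{\iI\jI}=\RS^{[d/2-1]}_{\iI\jI}-\g^{[0]}_{\iI\jI}\tS^{[d/2-1]}$. Inserting the gauge relation at order $[d/2-1]$ and expanding $[\g_{\iI\jI}\tS]^{[d/2-1]}=\g^{[0]}_{\iI\jI}\tS^{[d/2-1]}+\sum_{b\le d/2-2}\g^{[d/2-1-b]}_{\iI\jI}\tS^{[b]}$, the $\g^{[0]}_{\iI\jI}\tS^{[d/2-1]}$ terms cancel, leaving $H_{\iI\jI}=\tE^{[d/2-1]}_{\iI\jI}+[\tnabla_{(\iI}\tG_{\jI)}]^{[d/2-1]}+\sum_{b\le d/2-2}\g^{[d/2-1-b]}_{\iI\jI}\tS^{[b]}$. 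On $\Sigma$ the first term vanishes, every $\tS^{[b]}|_\Sigma$ ($b\le d/2-2$) and every $\tG^{[m]}|_\Sigma$ vanish, so the Christoffel pieces of $\tnabla_{(\iI}\tG_{\jI)}$ drop and only $H_{\iI\jI}|_\Sigma=\partial_{(\iI}\tG^{[d/2-1]}_{\jI)}|_\Sigma$ remains; since tangential derivatives of the $\Sigma$-restriction of $\tG^{[d/2-1]}$ are zero, $\partial_{\iI}\tG^{[d/2-1]}_{\jI}|_\Sigma=\delta_{\iI}^{1}X_{\jI}$ with $X_{\iI}:=\partial_1\tG^{[d/2-1]}_{\iI}|_\Sigma$. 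Contracting with $\vec N$ and imposing $H(\vec N,\cdot)|_\Sigma=0$ gives $N^1 X_{\jI}+\delta_{\jI}^{1}N^{\iI}X_{\iI}=0$; the spatial components force $X_a=0$, whence the time component forces $2N^1X_1=0$, and as $N^1\neq 0$ all $X_{\iI}=0$, which is the claim $\partial_1\tG^{[d/2-1]}_{\iI}|_\Sigma=0$.

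The step I expect to be the main obstacle is the second one: one must check with care that the only contribution to $H_{\iI\jI}|_\Sigma$ surviving the vanishing of the gauge functions is precisely the single normal derivative $\partial_1\tG^{[d/2-1]}_{\iI}$, i.e. that the potentially dangerous second normal derivative of $\g^{[d/2-1]}$ entering $\RS^{[d/2-1]}_{\iI\jI}$ through $\RR_{\iI\jI}$ (which the first step controls only to first order) is exactly the piece absorbed into $\tE^{[d/2-1]}=0$. Here the jet-matching of the first step is what guarantees that the constraint $H(\vec N,\cdot)|_\Sigma$ computed from the initial data agrees with this evolved expression to the order used. Once the cancellation and reduction to $\partial_{(\iI}\tG^{[d/2-1]}_{\jI)}$ are secured, the remaining contraction with the timelike normal and the solution of the rank-one-plus-diagonal linear system are elementary.
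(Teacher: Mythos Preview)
Your proposal is correct and follows essentially the same route as the paper's proof. Both arguments first verify $D^{d-2n-3}\RS^{[n]}_{\iI\jI}|_\Sigma=0$ via the gauge relation \eqref{eq:tE} and the vanishing hypotheses on $\tG,\tS$, then invoke Lemma~\ref{lm:recursive} to obtain the jet-matching; both then reduce $H_{1\jI}|_\Sigma$ (respectively $H(\vec N,\cdot)|_\Sigma$) through the same gauge relation at level $[d/2-1]$ to the single surviving term $\partial_1\tG^{[d/2-1]}_{\jI}|_\Sigma$ and solve the resulting linear system. Your cancellation of $\g^{[0]}_{\iI\jI}\tS^{[d/2-1]}$ against $\tfrac{1}{d/2-1}\g^{[0]}_{\iI\jI}\RS^{[d/2-2]}_{\infty\infty}$ via $\tS^{[m]}=\tfrac{1}{m}\RS^{[m-1]}_{\infty\infty}$ is exactly the paper's observation that $\RS^{[d/2-2]}_{\infty\infty}|_\Sigma=0$ because $\tS^{[d/2-1]}|_\Sigma=0$. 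The only cosmetic difference is that the paper simplifies by taking $\vec N=\partial_1$ and treats $\jI=1$ and $\jI\neq 1$ separately, whereas you keep a general normal and solve the rank-one system $N^1X_\jI+\delta^1_\jI N^\iI X_\iI=0$ at the end. The identification step you flag as the ``main obstacle'' is precisely what the paper addresses with the remark that $\RS^{[d/2-1]}_{1\jI}$ depends only on first $x^1$-derivatives of $\g^{[d/2-1]}_{\iI\jI}$, so the jet-matching to order $D^1\g^{[d/2-1]}|_\Sigma$ suffices to equate the evolved expression with the data constraint.
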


\begin{proof}
We will assume for simplicity that $\vec{N}=\partial_1$. The modification of the proof for the general case is minor.
 We have the identity for jets
\begin{equation}
D^{d-2n-3}\RS_{\iI\jI}^{[n]}|_\Sigma=D^{d-2n-3}\left[\tE_{\iI\jI}+\frac{1}{2}(\tnabla_{\iI}\tG_{\jI}+\tnabla_{\jI}\tG_{\iI})+\g_{\iI\jI}\tS\right]^{[n]}|_\Sigma=0
\end{equation}
From Lemma \ref{lm:recursive} the jets of the expansion of the ambient metric
\begin{equation}
D^{d-2k-1} \g^{[k]}_{\iI\jI}|_\Sigma,
\end{equation}
agree with the Fefferman-Graham ambient metric construction. As $\RS_{1\jI}^{[d/2-1]}$ depends only on first derivatives of $\g_{\iI\jI}^{[d/2-1]}$ in $x^1$ direction, we deduce
\begin{equation}
0=H_{1\jI}(h)|_\Sigma=\RS_{1\jI}^{[d/2-1]}|_\Sigma-\frac{1}{d/2-1}\g_{1\jI}^{[0]}\RS_{\infty\infty}^{[d/2-2]}|_\Sigma=\RS_{1\jI}^{[d/2-1]}|_\Sigma,
\end{equation}
by using \eqref{eq:obstruction} and
$\RS_{\infty\infty}^{[d/2-2]}|_\Sigma=0$ (due to $\tS^{[d/2-1]}|_\Sigma=0$).
Additionally, from $\tS|_\Sigma=O(\rho^{d/2})$ and $\tG_{\iI}|_\Sigma=O(\rho^{d/2})$ it follows that
\begin{equation}
0=\RS_{1\jI}^{[d/2-1]}|_\Sigma=\tE_{1\jI}^{[d/2-1]}|_\Sigma+\frac{1}{2}([\tnabla_{1}\tG_{\jI}]^{[d/2-1]}|_\Sigma+[\tnabla_{\jI}\tG_{1}]^{[d/2-1]}|_\Sigma)+[\g_{1\jI}\tS]|_\Sigma^{[d/2-1]}=
\frac{1}{2}\partial_{1}\tG_{\jI}^{[d/2-1]}|_\Sigma,
\end{equation}
for $\jI\not=1$. Similarly, for $\jI=1$ we get
\begin{equation}
0=\RS_{11}^{[d/2-1]}|_\Sigma=\tE_{11}^{[d/2-1]}|_\Sigma+\frac{1}{2}([\tnabla_{1}\tG_{1}]^{[d/2-1]}|_\Sigma+[\tnabla_{1}\tG_{1}]^{[d/2-1]}|_\Sigma)+[\g_{11}\tS]|_\Sigma^{[d/2-1]}=
\partial_{1}\tG_{1}^{[d/2-1]}|_\Sigma,
\end{equation}
showing the statement of the lemma.
\end{proof}

We can summarize the results obtained so far in the following proposition:

\begin{prop}\label{prop:AFG}
We consider initial data $\partial_1^kh_{\iI\jI}|_\Sigma$, $k=0,\ldots d-1$.
Suppose that on the initial surface $\Sigma$
\begin{equation}
D^{d-2} (\square x^{\jI})|_\Sigma=0,\quad D^{d-3} (R)|_\Sigma=0,\ H(\vec{N},\cdot)|_\Sigma=0
\end{equation}
The conditions are well-defined because $\square x^{\iI}$ depends on on up to first jets of the metric, $R$ depends on up to the second jets and $H(\vec{N},\cdot)$ depends on $d-1$ jets of the metric on $\Sigma$.
Then there exists a unique solution to AFG system $H_{\iI\jI}=0$ with the given initial data and which satisfies $\square x^{\jI}=0$, $R=0$.
\end{prop}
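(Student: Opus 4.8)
The plan is to assemble the statement from the gauge-fixed well-posedness already established in Lemma~\ref{lm:E-weel-posed} together with the gauge-propagation results of Lemmas~\ref{lm:B-closed}, \ref{lm:gauge-prop} and \ref{lm:H-gauge}; the only genuinely new ingredient is the dictionary between the geometric gauge conditions $\square x^{\jI}=0$, $R=0$ and the auxiliary fields $\tG_{\iI}$, $\tS$. First I would record this dictionary at leading order. Since $\tG_{\iI}^{[0]}=\tF_{\iI}^{[0]}=\g^{[0]\kI\lI}\big(\partial_{\kI}\g^{[0]}_{\lI\iI}-\tfrac12\partial_{\iI}\g^{[0]}_{\kI\lI}\big)=-h_{\iI\lI}\square x^{\lI}$ and $h_{\iI\lI}$ is invertible, the jets of $\tG^{[0]}_{\iI}$ and of $\square x^{\iI}$ vanish to the same order; and since $\tS^{[0]}=-\tfrac12 h^{\kI\lI}\g^{[1]}_{\kI\lI}$, with $\g^{[1]}_{\kI\lI}$ fixed by the recursion and of trace proportional to $R$, is a nonzero multiple of $R$, the jets of $\tS^{[0]}$ and of $R$ vanish to the same order. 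Hence the hypotheses $D^{d-2}(\square x^{\jI})|_\Sigma=0$ and $D^{d-3}(R)|_\Sigma=0$ are precisely the conditions \eqref{cond:G0}.

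Next I would construct Cauchy data for the dynamical variables $\g^{[n]}_{\iI\jI}$, $n=0,\ldots,d/2-1$, from the prescribed $d-1$ jets $\partial_1^k h_{\iI\jI}|_\Sigma$ by means of the recursion functions of Lemma~\ref{lm:recursive}; Lemma~\ref{lm:bijective} guarantees that the evolved lowest-order field reproduces the given $h_{\iI\jI}$ to order $d-1$ on $\Sigma$. Lemma~\ref{lm:E-weel-posed} then yields a unique solution of $\tE_{\iI\jI}=O(\rho^{d/2})$ on some $I\times\Sigma$. Because $\tE=O(\rho^{d/2})$, Proposition~\ref{lm:B-follows} gives $\tB^1_{\iI}=O(\rho^{d/2})$ and $\tB^2=O(\rho^{d/2-1})$, so Lemma~\ref{lm:B-closed} applied to \eqref{cond:G0} produces $D^{d-2k-2}\tG^{[k]}_{\iI}|_\Sigma=0$ and $D^{d-2k-3}\tS^{[k]}|_\Sigma=0$. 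These are exactly the initial conditions \eqref{cond:G1} apart from the single datum $\partial_1\tG^{[d/2-1]}_{\iI}|_\Sigma=0$, which is what Lemma~\ref{lm:H-gauge} supplies using the remaining constraint $H(\vec N,\cdot)|_\Sigma=0$ (the same lemma records that the jets of $\g^{[k]}$ on $\Sigma$ agree with the Fefferman--Graham construction).

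With \eqref{cond:G1} in hand, Lemma~\ref{lm:gauge-prop} propagates the gauge off the initial surface: $\tG_{\iI}=O(\rho^{d/2})$ and $\tS=O(\rho^{d/2-1})$ throughout $I\times\Sigma$. Feeding this back into the definition \eqref{eq:tE} gives $\RS^{[n]}_{\iI\jI}=0$ for $n\leq d/2-2$ and $\RS^{[d/2-1]}_{\iI\jI}=\g^{[0]}_{\iI\jI}\tS^{[d/2-1]}$; reading $\RS^{[d/2-2]}_{\infty\infty}=(d/2-1)\tS^{[d/2-1]}$ off \eqref{eq:tS} and substituting into the obstruction identity \eqref{eq:obstruction}, the two trace terms cancel and $H_{\iI\jI}=0$ on the whole development. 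Simultaneously $\tG^{[0]}_{\iI}=0$ and $\tS^{[0]}=0$ return the gauge conditions $\square x^{\iI}=0$ and $R=0$. For uniqueness I would run the converse: any AFG solution with $\square x^{\iI}=0$, $R=0$ has $\RS_{\iI\jI}=O(\rho^{d/2})$, and by \eqref{eq:Bianchi-FG} together with \eqref{eq:tS}--\eqref{eq:tG} satisfies $\tG_{\iI}=O(\rho^{d/2})$, $\tS=O(\rho^{d/2})$, hence solves $\tE_{\iI\jI}=O(\rho^{d/2})$ with the same data; uniqueness in Proposition~\ref{prop:hip} then forces it to coincide with the solution just built.

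The step I expect to be the main obstacle is the final harvest: the exact cancellation of the trace contribution $\g^{[0]}_{\iI\jI}\tS^{[d/2-1]}$ in $\RS^{[d/2-1]}_{\iI\jI}$ against the $\RS^{[d/2-2]}_{\infty\infty}$ term of \eqref{eq:obstruction}. This cancellation is the entire reason the term $\g_{\iI\jI}\tS$ was built into $\tE_{\iI\jI}$, and it relies on carrying the correct combinatorial factors (in particular $\tS^{[d/2-1]}=\tfrac{1}{d/2-1}\RS^{[d/2-2]}_{\infty\infty}$) through the integration operators $\partial_\infty^{-1}$ — quantities that moreover involve the non-dynamical trace $\operatorname{tr}\g^{[d/2]}$, so that one must verify, as \eqref{eq:obstruction} ensures, that $H_{\iI\jI}$ is independent of that freedom.
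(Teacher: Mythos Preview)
Your outline is correct and follows the paper's proof closely: construct data for the $\tE$-system via Lemma~\ref{lm:recursive}, evolve by Lemma~\ref{lm:E-weel-posed}, recover the $d-1$ jets of $h$ on $\Sigma$ via Lemma~\ref{lm:bijective}, translate the geometric gauge hypotheses into \eqref{cond:G0}, and then feed Lemmas~\ref{lm:B-closed}, \ref{lm:H-gauge}, \ref{lm:gauge-prop} in the order you describe.

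The only substantive difference is in the final harvest. The paper does not chase the cancellation you flag as the main obstacle; instead it observes (immediately after Lemma~\ref{lm:gauge-prop}) that $\tS^{[d/2-1]}$ depends on the non-dynamical quantity $\operatorname{tr}\g^{[d/2]}$ only, and simply chooses that trace so that $\tS^{[d/2-1]}=0$. Then $\RS^{[n]}_{\iI\jI}=\tE^{[n]}_{\iI\jI}+\tfrac12(\tnabla_{\iI}\tG_{\jI}+\tnabla_{\jI}\tG_{\iI})^{[n]}+(\g_{\iI\jI}\tS)^{[n]}=0$ for all $n\le d/2-1$, and $H_{\iI\jI}=\RS^{[d/2-1]}_{\iI\jI}=0$ with no appeal to \eqref{eq:obstruction}. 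Your route through \eqref{eq:obstruction} also works, since once $\RS^{[n]}_{\iI\jI}=0$ for $n\le d/2-2$ the coefficients $\g^{[k]}$, $k\le d/2-1$, coincide with the Fefferman--Graham ones and the identity \eqref{eq:obstruction} applies; the factor you need is exactly $\RS^{[d/2-2]}_{\infty\infty}=(d/2-1)\tS^{[d/2-1]}$ from $\partial_\infty\tS=\RS_{\infty\infty}$. One small point to tighten: your dictionary $\tS^{[0]}\propto R$ is not a bare identity but uses $D^{d-2}\tG^{[0]}_{\iI}|_\Sigma=0$ together with $\tE^{[0]}=0$ (this is how the paper derives $D^{d-3}\tS^{[0]}|_\Sigma=-\tfrac{1}{d-2}D^{d-3}R|_\Sigma$), so establish the $\tG^{[0]}$ vanishing first. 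Your uniqueness sketch is also what the paper does, in the separate subsection preceding the existence discussion.
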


\begin{proof}
As $\tE_{\iI\jI}$ is recursive till order $d/2-1$ we compute by Lemma \ref{lm:recursive}  the initial value data for the system by
\begin{equation}
\tilde{h}_{\iI\jI}^{[k]}|_\Sigma=H_{\tE\ \iI\jI}^{k}(D^{2k} h_{\iI\jI}|_\Sigma),\quad \partial_1\tilde{h}_{\iI\jI}^{[k]}|_\Sigma=\partial_1H_{\tE\ \iI\jI}^{k}(D^{2k} h_{\iI\jI}|_\Sigma),\quad k\leq d/2-1.
\end{equation}
This allows us to determine the initial data for the equation $\tE=O(\rho^{d/2})$. 
We consider a unique solution $\g_{\iI\jI}$ of this equation. 
Since the system is generalized hyperbolic, we get
\begin{equation}
\partial_1^n \g^{[0]}_{\iI\jI}|_\Sigma=\partial_1^n h_{\iI\jI}|_\Sigma,\quad n\leq d-1,
\end{equation}
 by Lemma \ref{lm:bijective}.
Thus, the solution has prescribed initial data. 

In particular, it is true that
\begin{equation}
D^{d-2} \tG_{\iI}^{[0]}|_\Sigma=D^{d-2}\left(h_{\iI\jI}h^{\kI\lI}\Gamma^{\jI}_{\kI\lI}\right)=-D^{d-2}(h_{\iI\jI}\square x^{\jI})=0.
\end{equation}
and
\begin{equation}
D^{d-3}\left[R_{\iI\jI}-\left(\frac{d}{2}-1\right)\g_{\iI\jI}^{[1]}\right]|_\Sigma=D^{d-3}\left[\tE_{\iI\jI}+\frac{1}{2}(\tnabla_{\iI}\tG_{\jI}+\tnabla_{\jI}\tG_{\iI})\right]^{[0]}|_\Sigma=0.
\end{equation}
We take the trace of this equality to derive
\begin{equation}
D^{d-3}\tS^{[0]}|_\Sigma=D^{d-3}\left[-\frac{1}{2}\g^{[0]\kI\lI}\g_{\kI\lI}^{[1]}\right]|_\Sigma=-D^{d-3}\frac{R}{d-2}|_\Sigma=0,
\end{equation}
This means that the condition \eqref{cond:G0} is satisfied.
From Lemma \ref{lm:gauge-prop} and Lemma \ref{lm:H-gauge} we conclude
\begin{equation}
\tG_{\iI}=O(\rho^{d/2}),\ \tS=O(\rho^{d/2-1}).
\end{equation}
We can always assume $\tS^{[d/2-1]}=0$. Taking this into account, we obtain
\begin{equation}
\RS_{\iI\jI}^{[n]}=\tE_{\iI\jI}^{[n]}+\frac{1}{2}([\tnabla_{\iI}\tG_{\jI}]^{[n]}+[\tnabla_{\jI}\tG_{\iI}]^{[n]})+[\g_{\iI\jI}\tS]^{[n]}=0,
\end{equation}
for $n\leq d/2-1$.
We have a solution with
$\RS_{\iI\jI}=O(\rho^{d/2})$.
\end{proof}

\subsection{The AFG equation in the Anderson-Chru{\'s}ciel gauge}

In this section, the correspondence of our gauge functions to $R=0$ and $\square x^\mu=0$ gauge will be investigated. This gauge was proposed in \cite{Anderson} and \cite{Anderson-Chrusciel}.

\subsubsection{Uniqueness of the solution of the AFG equation}

Assume that we have a solution $H_{\iI\jI}(h)=0$ with the given initial conditions at $\Sigma$. 

\begin{lm}[\cite{Anderson}, \cite{Anderson-Chrusciel}]
Suppose that we have a local coordinate system $y^2,\ldots y^d$ on $\Sigma$.
Locally there exists a coordinate system $x^{\iI}$ and the choice of the conformal factor $f$ such that for $h'_{\iI\jI}=e^{2f}h_{\iI\jI}$
\begin{enumerate}
\item $\square'x^{\iI}=0$, $R'=0$,
\item $x^{\kI}|_\Sigma=y^{\kI}$ for $\kI=2,\ldots d$, $x^1|_\Sigma=0$,
\item $\partial_1$ is a unit normal vector to $\Sigma$
\end{enumerate}
Here $\square'$ is a scalar d'Alembert operator with respect to the metric $h'$.
\end{lm}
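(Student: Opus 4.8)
The plan is to construct the coordinate system and conformal factor in two stages. First I would fix the conformal gauge $R'=0$, then, holding that scale, solve for harmonic coordinates $\square' x^{\iI}=0$ with the prescribed initial and boundary behaviour. The crux is that the two gauge conditions decouple in the right order: the Ricci scalar is conformally covariant in a way that lets me find $f$ by solving a scalar equation, while the harmonic coordinate condition is then a linear wave equation for each $x^{\iI}$ in the fixed metric $h'$.

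\textbf{Step 1 (scalar flattening).} To achieve $R'=0$ with $h'_{\iI\jI}=e^{2f}h_{\iI\jI}$, recall the standard conformal transformation of the scalar curvature in dimension $d$,
\begin{equation}
R'=e^{-2f}\left(R-2(d-1)\square f-(d-1)(d-2)|\nabla f|^2\right).
\end{equation}
Setting $R'=0$ is thus a semilinear scalar equation for $f$. I would prescribe $f|_\Sigma$ and $\partial_1 f|_\Sigma$ freely (consistent with requiring $\partial_1$ to have unit norm after rescaling, which fixes $f|_\Sigma$ up to the obvious normalization) and invoke the well-posedness of this scalar wave-type equation — it is exactly of the form \eqref{eq:wave} covered by the $N=0$ case in the proof of Proposition \ref{prop:hip}, since the principal part is $-2(d-1)\square_h f$ and the metric $h$ is Lorentzian. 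This yields a smooth $f$ near $\Sigma$ with $R'=0$.

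\textbf{Step 2 (harmonic coordinates).} With the scale fixed and $h'$ now a definite metric, I would construct the coordinates $x^{\iI}$ by solving the linear wave equations $\square' x^{\iI}=0$, treating each $x^{\iI}$ as a scalar unknown. I prescribe the Cauchy data $x^{\kI}|_\Sigma=y^{\kI}$ and $x^1|_\Sigma=0$, and choose the normal derivatives $\partial_{\vec N}x^{\iI}|_\Sigma$ so that the Jacobian $(\partial x^{\iI}/\partial(\text{old coords}))$ is nondegenerate at $\Sigma$ and so that $\partial_1$ becomes the unit normal. Linear well-posedness again follows from the $N=0$ base case of Proposition \ref{prop:hip}. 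Shrinking the neighbourhood so the map is a diffeomorphism gives the coordinate system, and in these coordinates the harmonicity condition is precisely $\square' x^{\iI}=0$ by construction.

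\textbf{The main obstacle} is checking the compatibility of the two gauge conditions at $\Sigma$ — one must verify that the data chosen to make $\partial_1$ the unit normal in Step 2 is consistent with the scale already fixed in Step 1, and that the resulting transformation can be arranged so that condition (3) holds simultaneously with (1) and (2). This is a matter of counting the free data on $\Sigma$ (the conformal factor $f$ and its normal derivative, versus the $d$ coordinate functions and their normal derivatives) and observing there is enough freedom; the normalization of $\partial_1$ to unit length fixes $f|_\Sigma$, after which Step 1 determines $f$ off $\Sigma$, and the coordinate construction in Step 2 is then unconstrained in its normal derivatives apart from the nondegeneracy and unit-normal requirements. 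Since the statement is attributed to \cite{Anderson, Anderson-Chrusciel}, I would present these two solvability steps and the data-counting compatibility check, deferring the routine verification that the resulting map is a local diffeomorphism.
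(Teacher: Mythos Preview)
Your proposal is correct and follows essentially the same two-step approach as the paper: first solve the Yamabe-type semilinear wave equation for the conformal factor $f$ to achieve $R'=0$, then solve the linear wave equations $\square' x^{\iI}=0$ for the harmonic coordinates with the natural Cauchy data. The paper is slightly more explicit about the initial data for Step~2 (taking $N^{\iI}\partial_{\iI}x^{\kI}|_\Sigma=0$ for $\kI\geq 2$ and $N^{\iI}\partial_{\iI}x^1|_\Sigma=1$, with $N$ the $h'$-unit normal), which directly forces $\partial_1=N$ on $\Sigma$; note that this means the unit-normal condition (3) is achieved entirely through the coordinate data in Step~2 and does \emph{not} constrain $f|_\Sigma$ as you suggest---$f|_\Sigma$ and its normal derivative are genuinely free initial data for the Yamabe equation, so your ``compatibility obstacle'' is in fact no obstacle at all.
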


\begin{proof}
First we find $f$ as a solution of Yamabe problem, which is a nonlinear hyperbolic system for $f$ (see \cite{Anderson-Chrusciel} for discussion). Define $x^{\kI}$ for $\kI\geq 2$ as a unique solution to $\square' \phi=0$ with initial conditions (local development)
\begin{equation}
\phi|_U=y^{\kI},\quad N^{\iI}\partial_{\iI}\phi|_U=0,
\end{equation}
where $N^{\iI}$ is a unit normal to $\Sigma$. Finally, we define $x^1$ as a unique solution with initial value
\begin{equation}
\phi|_U=0,\quad N^{\iI}\partial_{\iI}\phi|_U=1.
\end{equation}
One can check that at $U'\subset U$ this new coordinates are independent, so it is also true in the small neighborhood of the Cauchy surface.
\end{proof}

We can thus work in this gauge.
From the solution to $H_{\mu\nu}=0$ we now construct iteratively 
\begin{equation}
\g_{\iI\jI}^{[n+1]}\quad n\leq d/2-2.
\end{equation}
by \eqref{eq:g-S-iter}.
Let us notice that $\g^{[0]}_{\iI\jI}=h_{\iI\jI}$ and $\g^{[1]}_{\iI\jI}=P_{\iI\jI}$ where $P_{\iI\jI}=\frac{1}{d-2}\left(R_{\iI\jI}-\frac{1}{2(d-1)}Rh_{\iI\jI}\right)$ is the Schouten tensor \cite{Fefferman-Graham}. Due to the gauge condition, one obtains
\begin{equation}
\tG^{[0]}_{\iI}=h_{\iI\jI}h^{\kI\lI}\Gamma^{\jI}_{\kI\lI}=-h_{\iI\jI}\square x^{\jI}=0,
,\quad \tS^{[0]}=\frac{1}{2}h^{\iI\jI}P_{\iI\jI}=\frac{1}{2(d-1)}R=0.
\end{equation}
Moreover, from $\RS_{\iI\infty}=O(\rho^{d/2-1})$ and $\RS_{\infty\infty}=O(\rho^{d/2-1})$ we have
\begin{equation}
\tG_{\iI}=O(\rho^{d/2}),\quad \tS=O(\rho^{d/2}).
\end{equation}
Additionally,  $\RS_{\iI\jI}=O(\rho^{d/2})$ and so
\begin{equation}\label{eq:E-again}
\tE_{\iI\jI}=O(\rho^{d/2}).
\end{equation}
From the uniqueness of the solution of \eqref{eq:E-again} we obtain the uniqueness of the solution of AFG equation (in the given gauge).

\subsubsection{Existence of the solutions of the AFG equation}

Let us now assume that the initial data is given by
\begin{equation}
\partial_1^nh_{\iI\jI}|_\Sigma,\quad n=0,\ldots d-1,
\end{equation}
which satisfies the constraints $H(\vec{N},\cdot)|_\Sigma=0$ and the
gauge is satisfied:
\begin{align}\label{eq:x-1}
\partial_1^n\square x^{\iI}|_\Sigma=0,\quad n=0,\ldots d-2,\\
\partial_1^n R|_\Sigma=0,\quad n=0,\ldots d-3.\label{eq:x-2}
\end{align}
The conditions are well-defined because $\square x^{\iI}$ depends on up to first jets of the metric and $R$ depends on up to the second jet.

By a change of coordinates in jets of $\Sigma$ we can always assume these conditions together with $\vec{N}=\partial_1$ where $N$ is a normal vector. In this way, the constraints take the form
\begin{equation}
H_{1\iI}|_\Sigma=0.
\end{equation}
We also compute
\begin{equation}
D^{d-2} \tG_{\iI}^{[0]}|_\Sigma=-D^{d-2}(h_{\iI\jI}\square x^{\jI})|_\Sigma=0,\quad D^{d-3} \tS^{[0]}|_\Sigma=\frac{1}{2(d-2)}D^{d-3}R|_\Sigma=0,
\end{equation}
by \eqref{eq:x-1} and \eqref{eq:x-2}. The existence of the solution of AFG equation would follow from Proposition \ref{prop:AFG} if we were able to use this gauge globally on a compact $\Sigma$. However, it is not possible, so we need to apply some version of a gluing argument.

\subsubsection{Proof of the Theorem \ref{thm:AFG}}

The harmonic gauge is well-suited for $\Sigma=\R^{d-1}$. If we want to apply our result, we need to extend the notion of this gauge to compact Cauchy surfaces. This can be done in the case of $\Sigma$ being a torus, where $x^\mu$ for $\mu=2,\ldots d$ are defined modulo $2\pi$. 
Due to the finite speed of propagation, the method provides an existence and uniqueness result also for open subsets of the torus. Uniqueness of the development allows to apply the standard gluing argument \cite{Choquet-Bruhat} to obtain Theorem \ref{thm:AFG}.

\subsection{Infinite order extension of the ambient metric}

Suppose that the obstruction tensor vanishes.
The results of \cite{Fefferman-Graham} show that Taylor expansions of the metrics, which are Ricci flat of the order $O(\rho^\infty)$, are in one-to-one correspondence with the traceless symmetric tensors $k_{\mu\nu}$ satisfying
\begin{equation}\label{eq:k}
\nabla^\mu k_{\mu\nu}=D_\nu,
\end{equation}
where $D_\nu$ is a certain $1$-form (defined in eq. 3.36 in \cite{Fefferman-Graham}). The tensors $k_{\mu\nu}$ define trace-free part of $\g^{[d/2]}_{\iI\jI}$ since the trace is already determined. In the case of Euclidean manifolds, it is not obvious that such a tensor exists. We will prove that this is the case for any globally hyperbolic spacetime.

\begin{prop}
There exists $k_{\mu\nu}$ satisfying \eqref{eq:k} on a AFG globally hyperbolic spacetime.
\end{prop}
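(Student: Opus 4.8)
The plan is to exhibit the required tensor $k_{\mu\nu}$ as the initial data of an auxiliary generalized hyperbolic system and then propagate it off the Cauchy surface, rather than attempting to solve the underdetermined elliptic-looking equation \eqref{eq:k} directly. The point is that on a globally hyperbolic spacetime the difficulty of producing a tensor with a prescribed divergence is really a difficulty on a single Cauchy surface $\Sigma$; once one has a solution whose divergence equals $D_\nu$ there, one wants an evolution equation that both preserves tracelessness and propagates the constraint $\nabla^\mu k_{\mu\nu}-D_\nu=0$. This is exactly the pattern already used for the gauge functions $\tG_{\iI}$ and $\tS$ in Section \ref{sec:gauge}: impose a wave-type equation on $k_{\mu\nu}$, show via a Bianchi-type identity that the divergence-constraint itself satisfies a linear homogeneous hyperbolic equation, and conclude that if the constraint vanishes initially it vanishes everywhere.

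Concretely, I would first choose on $\Sigma$ a traceless symmetric tensor $k_{\mu\nu}|_\Sigma$ together with its normal derivative so that \eqref{eq:k} holds on $\Sigma$ and so that the relevant compatibility conditions inherited from the vanishing of the obstruction tensor are met; here the existence on a single hypersurface is an easier, essentially local/constraint-theoretic problem, since $\Sigma$ is $(d-1)$-dimensional and one has freedom in the normal component. Next I would write down a wave equation $\square k_{\mu\nu}=(\text{lower order})$ with the source chosen so that the true metric $\g^{[d/2]}_{\iI\jI}$ furnished by \cite{Fefferman-Graham} (which exists to order $O(\rho^\infty)$ once $H_{\iI\jI}=0$) is a solution; by Proposition \ref{prop:hip} this linear system has a global solution on the whole globally hyperbolic development from the chosen data. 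Tracelessness is propagated by taking the trace of the wave equation and invoking uniqueness for the resulting homogeneous scalar wave equation.

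The main obstacle, and the step I would spend the most care on, is showing that the divergence constraint $\nabla^\mu k_{\mu\nu}=D_\nu$ propagates. This requires computing $\square(\nabla^\mu k_{\mu\nu}-D_\nu)$ and checking that it closes into a linear homogeneous hyperbolic equation for the constraint quantity $C_\nu:=\nabla^\mu k_{\mu\nu}-D_\nu$, using the contracted Bianchi identity and the specific structure of the one-form $D_\nu$ (eq. 3.36 in \cite{Fefferman-Graham}). Because $D_\nu$ is built from curvature data that is itself constrained by $H_{\iI\jI}=0$, one expects the inhomogeneous terms to cancel, leaving $\square C_\nu=(\text{linear in }C)$; then vanishing of $C_\nu$ and $\partial_1 C_\nu$ on $\Sigma$ forces $C_\nu\equiv 0$ by uniqueness. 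This is the analogue of Proposition \ref{lm:B-follows} for the present setting.

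I would close by remarking that this construction produces the trace-free part of $\g^{[d/2]}_{\iI\jI}$ globally, which together with the already-determined trace yields a genuine infinite-order Ricci-flat ambient extension over the whole AFG development, thereby settling the existence question that is not automatic in the Euclidean case. The compactness or noncompactness of $\Sigma$ plays no essential role beyond guaranteeing, via global hyperbolicity, that the hyperbolic evolutions used above are globally defined.
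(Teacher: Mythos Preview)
Your plan has a circularity at its core. You propose to choose the source in the wave equation for $k_{\mu\nu}$ ``so that the true metric $\g^{[d/2]}_{\iI\jI}$ furnished by \cite{Fefferman-Graham}\ldots\ is a solution.'' But the Fefferman--Graham theorem you invoke is a \emph{correspondence}: an infinite-order Ricci-flat ambient extension exists precisely when a traceless symmetric $k_{\mu\nu}$ with $\nabla^\mu k_{\mu\nu}=D_\nu$ exists. Producing such a $k$ is exactly the content of the proposition, so you cannot assume the extension exists in order to design the equation you intend to solve. Without that input there is no canonical wave equation for $k_{\mu\nu}$ waiting to be written down, and a naive choice such as $\square k_{\mu\nu}=0$ fails your propagation step: computing $\square C_\nu$ for $C_\nu=\nabla^\mu k_{\mu\nu}-D_\nu$ leaves commutator terms linear in $k$ (not in $C$) and an inhomogeneous $\square D_\nu$ contribution, so the system for $C_\nu$ does not close. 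The gauge-propagation argument of Section~\ref{sec:gauge} works because the ambient Bianchi identity supplies the needed relation; here you have not identified any analogous identity tying an evolution of $k_{\mu\nu}$ to its divergence.

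The paper sidesteps all of this with a short algebraic trick: it looks for $k_{\mu\nu}$ in the special form
\[
k_{\mu\nu}=\nabla_\mu u_\nu+\nabla_\nu u_\mu-\tfrac{2}{d}\,g_{\mu\nu}\nabla^\rho u_\rho,
\]
which is automatically symmetric and traceless. The equation $\nabla^\mu k_{\mu\nu}=D_\nu$ then becomes a second-order equation for the covector $u_\nu$; introducing the auxiliary scalar $A=\tfrac{1}{d}\nabla^\rho u_\rho$ turns it into a genuinely hyperbolic linear system for $(u_\nu,A)$, which therefore has a global solution on the globally hyperbolic spacetime. A single subtraction shows $\nabla^\rho u_\rho-dA$ satisfies the homogeneous scalar wave equation, so with matching initial data $A\equiv\tfrac{1}{d}\nabla^\rho u_\rho$ and \eqref{eq:k} follows. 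No initial-constraint problem on $\Sigma$ and no delicate propagation-of-constraints argument are needed; the ansatz does the work.
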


\begin{proof}
We will look for the tensor given in a special form
\begin{equation}
k_{\mu\nu}=\nabla_\mu u_\nu+\nabla_\nu u_\mu-\frac{2}{d}g_{\mu\nu} \nabla^\rho u_\rho,
\end{equation}
for some covector field $u_\mu$. It is already symmetric, traceless and the equation \eqref{eq:k} takes a form
\begin{equation}\label{eq:d}
D_\nu=\nabla^\mu\nabla_\mu u_\nu+R^\rho_\nu u_\rho+\left(1-\frac{2}{d}\right)\nabla_\nu(\nabla^\rho u_\rho).
\end{equation}
Taking the divergence, we obtain an additional equation
\begin{equation}\label{eq:nablad}
\nabla^\nu D_\nu=\nabla^\mu\nabla_\mu (\nabla^\nu u_\nu)+R^{\mu\nu}(\nabla_\mu u_\nu-\nabla_\nu u_\mu)+2\nabla^\mu(R^\nu_\mu u_\nu)+\left(1-\frac{2}{d}\right)\nabla^\mu\nabla_\mu(\nabla^\nu u_\nu),
\end{equation}
where we used $\nabla^\nu\nabla^\mu\nabla_\mu  u_\nu=\nabla^\mu\nabla_\mu (\nabla^\nu u_\nu)+\nabla^\mu(R^\nu_\mu u_\nu)$. We now introduce a new variable $A=\frac{1}{d}\nabla^\mu u_\mu$ and a system of equations (equivalent to \eqref{eq:nablad} and \eqref{eq:d})
\begin{align}
&\nabla^\mu\nabla_\mu u_\nu+R^\rho_\nu u_\rho+\left(d-2\right)\nabla_\nu A=D_\nu,\label{eq:k1}\\
&2(d-1)\nabla^\mu\nabla_\mu A+2\nabla^\mu(R^\nu_\mu u_\nu)=\nabla^\nu D_\nu.\label{eq:k2}
\end{align}
It is indeed a hyperbolic second-order linear system. Thus, with the given initial data on a Cauchy surface, it has a solution. We now notice that the divergence of the left hand side of \eqref{eq:k1} minus left hand side of \eqref{eq:k2} is equal to zero:
\begin{equation}
\nabla^\mu\nabla_\mu (\nabla^\rho u_\rho-dA)=0.
\end{equation}
If we choose $A|_\Sigma=d^{-1}\nabla^\rho u_\rho|_\Sigma$ and $\partial_1 A|_\Sigma=d^{-1}\partial_1\nabla^\rho u_\rho|_\Sigma$ (computed by Cauchy-Kovalevskaya algorithm), then $A=\frac{1}{d}\nabla^\rho u_\rho$ in the whole spacetime and $\nabla^\mu k_{\mu\nu}=D_\nu$.
\end{proof}

We can thus always assume that the metric is Ricci flat to an infinite order, but it is not uniquely defined except terms  $\g^{[n]}_{\iI\jI}$ for $n\leq d/2-1$ and $\operatorname{tr} \g^{[d/2]}$.

\section{GJMS type operators for tractor bundles}

We will now concentrate on various linear systems, which arise by the ambient metric construction. They share the common property with $\tE_{\iI\jI}$, that the principal symbol is a power of the d'Alembert operator. In this part of the paper, we will also shortly describe Graham-Jenne-Mason-Sparling (GJMS) type systems. A quite general method of introducing this type of operators is by tractor calculus. We will only focus on the essential parts of this theory in terms of the ambient metric construction (see \cite{Cap2002}). For the short review of the tractor calculus in application to general relativity, we refer reader to \cite{Curry2015}.

We work on manifold $\mM=\R\times \tilde{M}$, $\mT=t\partial_t$ is a conformal Killing vector with property that for every vector field~$\mF^I$ it satisfies
\begin{equation}
\mnabla_{\mF}\mT=\mF, \quad \mnabla_I\mT_J=\mg_{IJ}.
\end{equation}
We also introduce $\mr=\frac{1}{2}\mT^I\mT_I=\rho t^2$ with the properties:
\begin{equation}
\mnabla_I\mr=\mT^J\mnabla_I\mT_J=\mT_I,\quad \mnabla^I\mnabla_I \mr=\mnabla^I\mT_I=(d+2),\quad \mnabla_I\mr\mnabla^I\mr=2\mr.
\end{equation}
Simplifying the notation, we will often skip tractor indices. We consider $n$-covector $\mX_{I_1\ldots I_n}$ with the property
\begin{equation}
\Lie_{\mT} \mX=w\mX,
\end{equation}
for $w\in \R$. We call $w$ a weight of the field $\mX$. 
The field $\mX$ is determined by its restriction to $t=1$
\begin{equation}
\tX=\mX|_{t=1},
\end{equation}
According to \cite{Cap2002},  $\tX^{[0]}$ is a section of a tractor bundle $\epsilon_{I_1\ldots I_n}[w-n]$. The ambient space (at least the uniquely determined jets of the metric) is a natural construction for the conformal structure on the manifold. The natural bundles inherit the laws of transformations under diffeomorphisms and conformal transformations of the original metric. We will not describe here the original formulation (see for example, \cite{Cap2002,Curry2015}), but we will use this description as a definition.

We will need the following results:

\begin{lm}
Suppose that $\Lie_{\mT} \mX_{I_1\ldots I_n}=w\mX_{I_1\ldots I_n}$ then
$\mnabla_{\mT}\mX=(w-n)\mX$, where $n$ is a valency of the field.
\end{lm}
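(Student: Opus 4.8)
The plan is to exploit the two defining relations for the conformal Killing vector $\mT$, namely $\mnabla_I \mT_J = \mg_{IJ}$ together with the fact that $\mX$ is a Lie eigenvector of $\mT$ with eigenvalue $w$. The key observation is that the Lie derivative $\Lie_{\mT}$ differs from the covariant derivative $\mnabla_{\mT}$ by terms that act algebraically on the tractor indices, and those terms are precisely controlled by $\mnabla_I \mT_J = \mg_{IJ}$. So I would compute $\Lie_{\mT}\mX$ explicitly in terms of $\mnabla$ and then read off the relation.

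Concretely, for an $n$-covector the Lie derivative along $\mT$ is
\begin{equation}
\Lie_{\mT}\mX_{I_1\ldots I_n} = \mT^J\mnabla_J \mX_{I_1\ldots I_n} + \sum_{a=1}^n (\mnabla_{I_a}\mT^J)\mX_{I_1\ldots J\ldots I_n},
\end{equation}
where in the $a$-th summand the index $J$ replaces $I_a$. First I would substitute $\mnabla_{I_a}\mT^J = \delta_{I_a}^J$, which is just the index-raised form of $\mnabla_I \mT_J = \mg_{IJ}$. Each of the $n$ correction terms then collapses to $\mX_{I_1\ldots I_n}$ itself, so the sum contributes exactly $n\,\mX_{I_1\ldots I_n}$. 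Recognizing $\mT^J\mnabla_J\mX = \mnabla_{\mT}\mX$, the identity becomes $\Lie_{\mT}\mX = \mnabla_{\mT}\mX + n\mX$.

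Finally I would combine this with the hypothesis $\Lie_{\mT}\mX = w\mX$ to obtain $\mnabla_{\mT}\mX = (w-n)\mX$, which is the claim. The computation is essentially a one-line manipulation once the Lie-derivative formula is written down; the only point requiring a little care is the sign and index placement in the algebraic (non-derivative) part of $\Lie_{\mT}$ acting on a covector, and making sure that raising the index in $\mnabla_I \mT_J = \mg_{IJ}$ indeed yields $\mnabla_{I}\mT^J = \delta_I^J$ so that each index contributes a clean factor of $1$. I do not expect any substantive obstacle here: the statement is a direct consequence of the torsion-free relation between the Lie and covariant derivatives specialized to a vector field whose covariant derivative is the identity.
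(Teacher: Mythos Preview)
Your argument is correct. The formula you write for $\Lie_{\mT}$ on an $n$-covector in terms of $\mnabla$ is the standard identity valid for any torsion-free connection, and the substitution $\mnabla_{I_a}\mT^J=\delta_{I_a}^J$ follows immediately from $\mnabla_I\mT_J=\mg_{IJ}$ together with metric compatibility of $\mnabla$. Each of the $n$ algebraic correction terms then returns $\mX$ itself, and the conclusion follows.

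The paper's proof is organized differently: it proceeds by induction on the valency $n$. The base case $n=0$ is trivial since $\Lie_{\mT}=\mnabla_{\mT}$ on functions. For the inductive step one first checks, from $\mnabla_{\mF}\mT=\mF$, that a weight-$0$ vector field $\mF$ satisfies $\mnabla_{\mT}\mF=\mF$; contracting $\mX$ with such an $\mF$ produces an $(n-1)$-covector of the same weight $w$, to which the inductive hypothesis applies, and the Leibniz rule for $\mnabla_{\mT}$ closes the induction. Both arguments rest on the same identity $\mnabla_I\mT^J=\delta_I^J$, but your direct computation packages the entire effect of that identity into one step, whereas the paper unwinds it index by index. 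Your route is shorter; the inductive route has the minor advantage that it never needs the explicit global formula for $\Lie_{\mT}$ on higher-valency covectors, only the scalar case and the Leibniz rule.
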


\begin{proof}
We use induction on $n$. For $n=0$ (that means $\mX$ is a functions) Lie derivative and covariant derivative agrees. Suppose that the result is true for $n<n_0$. Consider a vector field $\mF^I$ with weight $0$. We have
\begin{equation}
\mnabla_{\mT}\mF=\Lie_{\mT}\mF+\mnabla_{\mF}\mT=\mF.
\end{equation}
For any $\mX_{I_1\ldots I_{n_0}}$, $n_0$ covector with weight $w$, we denote by
$\mF\llcorner\mX$ 
\begin{equation}
\mX_{I_1\ldots I_{n_0}}\mF^{I_1}.
\end{equation}
It is a $n_0-1$ covector with weight $w$, thus
\begin{equation}
(w-(n_0-1))\mF\llcorner\mX=\mnabla_{\mT}(\mF\llcorner\mX)=\mF\llcorner(\mnabla_{\mT}\mX)+(\mnabla_{\mT}\mF)\llcorner\mX=\mF\llcorner(\mnabla_{\mT}\mX)+\mF\llcorner\mX.
\end{equation}
So $\mF\llcorner[(w-n_0)\mX-\mnabla_{\mT}\mX]=0$. As the restriction to $t=1$, $\tF^I$ is arbitrary, we show the induction.
\end{proof}

The GJMS operators are constructed with the help of the d'Alembert operator in the ambient space $\mM$.
We consider an operator $\msquare$ defined on $n$-covectors:
\begin{equation}
\msquare \mX=\mnabla^I\mnabla_I \mX.
\end{equation}
The result has the weight $w-2$. We can thus define
\begin{equation}
\tboxdot_w \tX=[\msquare \mX]|_{t=1}\text{ for } \tX=\mX|_{t=1},\ \Lie_{\mT}\mX=w\mX.
\end{equation}

\begin{prop}\label{prop:form-GJMS}
For any $n$, the operator $\tboxdot_w$ on a $\rho$ dependent $n$-tractors of weight $w$  $\tX$ has the property
\begin{equation}
\tboxdot_w \tX=\g^{\mu\nu}\partial_\mu\partial_\nu \tX+\left(d-2+2w-2n-2\rho\partial_\infty\right)\partial_\infty\tX+\tF,
\end{equation}
where $\tF^{[m]}$ depends only on $D^1\tX^{[k]}$ for $k\leq m$.
\end{prop}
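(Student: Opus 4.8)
The plan is to compute $\msquare\mX=\mnabla^I\mnabla_I\mX$ directly in the coordinates $(t,x^{\iI},\rho)$ and then restrict to $t=1$, exploiting the homogeneity forced by the weight condition. First I would record the consequence of $\Lie_{\mT}\mX=w\mX$ for $\mT=t\partial_t$: writing the Lie derivative component-wise shows that each component obeys $\mX_{I_1\ldots I_n}(t,x,\rho)=t^{\,w-n_0}\tX_{I_1\ldots I_n}(x,\rho)$, where $n_0$ is the number of $t$-indices ($I_k=0$) in that component (equivalently this follows from the covariant identity $\mnabla_{\mT}\mX=(w-n)\mX$ of the previous lemma together with $\mnabla_{\mF}\mT=\mF$). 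The upshot is that at $t=1$ every $\partial_t$ becomes multiplication by $w-n_0$ and every $t$-power collapses to $1$. I would also read off from $\mg_{00}=2\rho$, $\mg_{0\infty}=t$, $\mg_{\iI\jI}=t^2\g_{\iI\jI}$ the inverse metric $\mg^{00}=0$, $\mg^{0\infty}=t^{-1}$, $\mg^{\infty\infty}=-2\rho t^{-2}$, $\mg^{\iI\jI}=t^{-2}\g^{\iI\jI}$ (all others vanishing), together with $\sqrt{|\mg|}=t^{d+1}\sqrt{|\g|}$.

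Second, I would isolate the principal (second-order in $\mX$) part $\mg^{IK}\partial_I\partial_K\mX$. Because $\mg^{00}=0$ this equals $\tfrac{2}{t}\partial_t\partial_\infty-\tfrac{2\rho}{t^2}\partial_\infty^2+\tfrac{1}{t^2}\g^{\iI\jI}\partial_{\iI}\partial_{\jI}$; applying it to $t^{\,w-n_0}\tX$ and setting $t=1$ gives exactly $\g^{\iI\jI}\partial_{\iI}\partial_{\jI}\tX+2(w-n_0)\partial_\infty\tX-2\rho\partial_\infty^2\tX$. Here the $\partial_t\partial_\infty$ cross term produces the first $\partial_\infty\tX$ with a \emph{component-dependent} coefficient $2(w-n_0)$, which the connection corrections must repair.

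Third, and this is the main work, I would treat the connection terms. Writing $\msquare\mX=\mg^{IK}\partial_I\partial_K\mX-\mg^{IK}\Gamma^M_{IK}\partial_M\mX-2\mg^{IK}\sum_a\Gamma^L_{K I_a}\partial_I\mX_{\ldots L\ldots}+(\text{terms algebraic in }\mX)$, I observe that the purely algebraic $\partial\Gamma$ and $\Gamma\Gamma$ terms, and every piece whose surviving derivative is $\partial_t$ or $\partial_{\iI}$, contribute only $D^1\tX$ and hence land in $\tF$; so only $\partial_M=\partial_\infty$ can feed the explicit coefficient. For the trace term I would use $\mg^{IK}\Gamma^\infty_{IK}=-\tfrac{1}{\sqrt{|\mg|}}\partial_J(\sqrt{|\mg|}\mg^{J\infty})$ with the determinant above, obtaining $-\mg^{IK}\Gamma^\infty_{IK}\big|_{t=1}=(d-2)-\rho\,\g^{\kI\lI}\g_{\kI\lI}'$, i.e. the universal $(d-2)\partial_\infty\tX$ plus a $\rho\partial_\infty\tX$ remainder. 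For the index term I would compute the few needed Christoffels, finding $\Gamma^L_{0\jI}=t^{-1}\delta^L_{\jI}$, $\Gamma^L_{0\infty}=t^{-1}\delta^L_\infty$ and $\Gamma^L_{00}=0$, so the $I=\infty,\,K=0$ contraction $-2\mg^{\infty0}\Gamma^L_{0 I_a}\partial_\infty\mX_{\ldots L\ldots}$ yields $-2\partial_\infty\tX$ for each non-$t$ index and $0$ for each $t$-index, i.e. $-2(n-n_0)\partial_\infty\tX$ in total, while the $I=K=\infty$ contraction carries the explicit factor $\mg^{\infty\infty}=-2\rho t^{-2}$ and therefore only donates a $\rho\partial_\infty\tX$ term to $\tF$. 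Adding the three blocks, the component-dependent $n_0$ cancels: $2(w-n_0)+(d-2)-2(n-n_0)=d-2+2w-2n$, exactly the claimed coefficient.

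Finally I would collect all remainders into $\tF$ and check the stated order property. Every dangerous remainder carries a factor $\rho$ multiplying a single $\partial_\infty$ (or is algebraic in $\tX$ with $\rho$-dependent, metric-built coefficients); since $[\rho\,\partial_\infty\tX]^{[m]}$ involves only $\tX^{[k]}$ with $k\le m$, each $\tF^{[m]}$ depends solely on $D^1\tX^{[k]}$ for $k\le m$, as required. I expect the genuine obstacle to be the bookkeeping of this third step: keeping the $t$-powers straight, verifying that the component-dependent $n_0$ from the principal part is undone exactly by the index corrections, and confirming that no stray second-derivative or higher-order-in-$\rho$ term escapes into $\tF$. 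Checking the scalar case $n=0$ and a single covector first is the cleanest way to pin down all signs before doing the general valency.
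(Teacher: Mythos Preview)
Your approach is correct and the bookkeeping you outline does work: the principal part contributes $2(w-n_0)\partial_\infty\tX-2\rho\partial_\infty^2\tX$, the contracted Christoffel $-\mg^{IK}\Gamma^\infty_{IK}$ supplies $(d-2)$ (plus a $\rho$-remainder), and the index corrections from $K=0,\ I=\infty$ give $-2(n-n_0)$, so the component-dependent $n_0$ cancels as you say. All other pieces land in $\tF$ with the stated order property.

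The paper, however, avoids the entire Christoffel computation. It argues in two strokes. First, since $\msquare$ is second order and the only second $M$-derivatives come through $\mg^{\iI\jI}\partial_\iI\partial_\jI$, one immediately has $\tboxdot_w\tX=\g^{\mu\nu}\partial_\mu\partial_\nu\tX+\tilde G$ with $\tilde G^{[m]}$ depending at worst on $\tX^{[m+2]}$ and $D^1\tX^{[k]}$ for $k\le m+1$. Second, to pin down the dependence on the higher Taylor coefficients it tests against fields of the form $\mX=\mr^{m+1}\mF$ with $\mr=\rho t^2$, exploiting the covariant identities $\mnabla_I\mr=\mT_I$, $\mnabla^I\mr\mnabla_I\mr=2\mr$, $\msquare\mr=d+2$ together with the previous lemma $\mnabla_\mT\mF=(w-2(m+1)-n)\mF$; a three-line Leibniz computation then gives $\msquare\mX=(m+1)(d-2+2w-2n-2m)\mr^m\mF+O(\rho^{m+1})$, which is exactly the claimed coefficient. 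No coordinates, no Christoffels, and the valency $n$ enters only through that lemma, so the component-dependent $n_0$ never appears. Your direct computation is perfectly valid and perhaps more transparent for a reader unfamiliar with the ambient formalism, but the paper's covariant test-field trick is considerably shorter and makes the role of the weight and valency manifest.
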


In other words
\begin{equation}
[\tboxdot_w \tX]^{[m]}=[\g^{\mu\nu}\partial_\mu\partial_\nu \tX]^{[m]}+\left(d-2+2w-2n-2m\right)(m+1)\tX^{[m+1]}+\tF^{[m]}.
\end{equation}

\begin{proof}
Clearly, we can write
\begin{equation}\label{eq:exp-derivatives}
\tboxdot_w \tX=\g^{\mu\nu}\partial_\mu\partial_\nu \tX+\tilde{G},
\end{equation}
where $\tilde{G}^{[m]}$ depends on $D^1\tX^{[k]}$ for $k\leq m+1$ and $\tX^{[m+2]}$.

We need to determine dependence of $[\tboxdot \tX]^{[m]}$ on $D^l\tX^{[k]}$ for $k>m$. This can be done by considering fields, which depend only on the Taylor expansion coefficients for $k>m$. The most convenient way is to use $\mr=\rho t^2$ in the expansion instead of $\rho$, because $\mr$ is covariantly defined. We consider $\mX=\mr^{m+1}\mF=O(\rho^{m+1})$ where $\Lie_{\mT}\mF=(w-2(m+1))\mF$ (in order that $\mX$ has the proper weight) and thus
\begin{equation}
\mnabla_{\mT}\mF=(w-2(m+1)-n)\mF.
\end{equation}
We substitute the special form of $\mX$ in the following formula
\begin{equation}
\mnabla^I\mnabla_I \mX=(m+1)\mr^{m-1}\left[\mF\left(m\mnabla^I\mr \mnabla_I\mr+\mr(\mnabla^I\mnabla_I\mr)\right)+
 2\mr\nabla^I\mr \nabla_I\mF\right]+\mr^{m+1}\mnabla^I\mnabla_I\mF.
\end{equation}
We use  the know form of derivatives of $\mr$ to get the nice expression:
\begin{align}
&\mnabla^I\mnabla_I \mX=(m+1)\mr^m\left((2m+d+2)\mF+2\mnabla_T\mF\right)+O(\rho^{m+1})=\nonumber\\
&=(m+1)\mr^m\mF\left(d-2+2w-2n-2m\right)+O(\rho^{m+1}).
\end{align}
This shows that 
\begin{equation}
\tboxdot_w \tX=\left(d-2+2w-2n-2\rho\partial_\infty\right)\partial_\infty\tX+\tilde{H}
\end{equation}
where $\tilde{H}^{[m]}$ depends on $D^2\tX^{[k]}$ for $k\leq m$. Together with the previous expansion \eqref{eq:exp-derivatives} it proves the lemma.
\end{proof}

Together with Proposition \ref{prop:hip} this result leads immediately to a corollary:

\begin{cor}\label{cor:GJMS}
Let $n\in \Z_+\cup\{0\}$ and  $w\in\Z$ such that $N=d/2-1+w-n\in \Z_+\cup\{0\}$, then the system 
\begin{equation}
\tboxdot_w\tX+\tD(\tX)=O(\rho^{N+1})\text{ for } \tX^{[k]},\ k=0,\ldots, N
\end{equation}
for $\tX^{[k]}$ ($n$-tractors of weight $w$)
is generalized hyperbolic, recursive and linear. Here $\tD$ is linear transformation on the space of $n$-tractors. In particular, the Cauchy problem with smooth initial data is well-posed.
\end{cor}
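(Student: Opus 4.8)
The plan is to verify the three structural hypotheses---generalized hyperbolic, recursive, and linear---directly from Proposition \ref{prop:form-GJMS}, and then to conclude by quoting Proposition \ref{prop:hip}. First I would read off the order-$m$ component of each equation. Writing $K_m:=[\tboxdot_w\tX+\tD(\tX)]^{[m]}$ and inserting the explicit formula
\begin{equation}
[\tboxdot_w \tX]^{[m]}=[\g^{\mu\nu}\partial_\mu\partial_\nu \tX]^{[m]}+\left(d-2+2w-2n-2m\right)(m+1)\tX^{[m+1]}+\tF^{[m]},
\end{equation}
together with the splitting \eqref{eq:g-exp} of the principal part, I obtain
\begin{equation}
K_m=g^{[0]\mu\nu}\partial_\mu\partial_\nu \tX^{[m]}+c_m\,\tX^{[m+1]}+R_m,\qquad c_m:=(d-2+2w-2n-2m)(m+1),
\end{equation}
and multiplying by $-\tfrac12$ puts the equations into the exact normal form of \eqref{eq:ambient-hip}. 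Each ingredient of $R_m$ is mild: $\tF^{[m]}$ depends only on $D^1\tX^{[k]}$ for $k\le m$; the metric-expansion remainder of \eqref{eq:g-exp} is of order $1$ and in fact involves only $D^2\tX^{[l]}$ with $l\le m-1$; and $\tD$, being an algebraic linear map, is of order $0$. Consequently $R_m$ is of order $1$ in the sense of Definition \ref{df:order}, and, crucially, none of these pieces contributes a $\tX^{[m+1]}$ term, so the only appearance of the top field $\tX^{[m+1]}$ is through the single factor $c_m$.

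The decisive point is arithmetic. Since $N=d/2-1+w-n$ gives $2N=d-2+2w-2n$, we have $c_m=2(N-m)(m+1)$, which vanishes precisely at $m=N$ and is nonzero for every $0\le m<N$. The vanishing at $m=N$ is exactly the decoupling condition of \eqref{eq:ambient-hip}: the equation $K_N=0$ does not see the $O(\rho^{N+1})$ part of $\tX$, so the system $K_m=0$, $m=0,\ldots,N$, closes as a generalized hyperbolic system in the multifields $\tX^{[0]},\ldots,\tX^{[N]}$, its principal symbol governed by the fixed Lorentzian metric $g^{[0]}=h$. The nonvanishing for $m<N$ gives recursiveness in the sense of Definition \ref{df:recursive}: each $K_m$ depends linearly on $\tX^{[m+1]}$ through the invertible scalar $c_m$, so $\tX^{[m+1]}$ is determined from $K_m=0$ in terms of $D^2\tX^{[k]}$, $k\le m$.

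Linearity is manifest, since $\tboxdot_w$ is built from $\mnabla^I\mnabla_I$, $\tD$ is linear, and the principal coefficient $g^{[0]}=h$ is a fixed background that does not depend on the unknown $\tX^{[0]}$; hence all coefficient functions are $\tX$-independent. Having established that the system is generalized hyperbolic, recursive, and linear, I would invoke Proposition \ref{prop:hip}, which yields a well-posed Cauchy problem in the smooth category and, because the system is linear, a solution defined on the whole globally hyperbolic spacetime.

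I expect the only genuine subtlety to be the decoupling: one must confirm that no $\tX^{[m+1]}$ leaks into the order-$1$ remainder $R_m$, so that the vanishing of the single scalar $c_N$ truly closes the system on $\tX^{[0]},\ldots,\tX^{[N]}$. The value $N=d/2-1+w-n$ is forced precisely by $c_N=0$, and once Proposition \ref{prop:form-GJMS} is granted the remainder of the argument is bookkeeping in the order of Definition \ref{df:order}. A minor hazard to guard against is the notational overlap between the tractor valency $n$ and the expansion index $m$, which I keep strictly separate throughout.
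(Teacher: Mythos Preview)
Your proposal is correct and follows the same route as the paper: invoke Proposition \ref{prop:form-GJMS} to read off the structure of $[\tboxdot_w\tX]^{[m]}$, observe that the resulting system fits the template of \eqref{eq:ambient-hip}, and conclude via Proposition \ref{prop:hip}. Your version is in fact more explicit than the paper's one-line proof---in particular, your computation $c_m=2(N-m)(m+1)$ makes the recursiveness and the decoupling at $m=N$ transparent, whereas the paper leaves these checks implicit in the phrase ``the assumption about the metric is sufficient''.
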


\begin{proof}
The assumption about the metric is sufficient to satisfy the requirements for the linear generalized hyperbolic system to be well-posed in Proposition \ref{prop:hip}.
\end{proof}

\begin{remark}\label{remark:GJMS}
The ambient metric construction determines $\g^{[n]}_{\iI\jI}$ for $n=0,\ldots, d/2-1$ and $\operatorname{tr}\g^{[d/2]}$. Consequently, only equations depending on this part of the metric expansion are defined uniquely. They provide conformal equations on the spacetime $M$. 
This holds if $0>w-n>1-d/2$. Interestingly, it is also true for $\tboxdot_0$ on scalars of weight $0$, where the operator depends on the aforementioned trace. If $n\geq 1$ and $w-n\geq 0$, then the system of equations explicitly depends on the choice of the extension of the ambient metric.
\end{remark}

We will denote the derived equation for $\tsquare_w$ by $G_w$.
The initial data for the derived equation is given by $\partial_1^k\tX^{[0]}|_\Sigma$, $k\leq 2N+1$ and we get from Lemma \ref{lm:bijective} and Proposition \ref{prop:hip} the unique global development. 

\subsection{Spacetimes with the  vanishing \texorpdfstring{$Q$}{Q} curvature}\label{sec:Q}

An immediate application of Collorary \ref{cor:GJMS} is uniqueness and existence of the Cauchy development for GJMS operators as they are the derived system for $\tboxdot_w$ acting on scalar functions of weight $-k$, $0\leq k\leq d/2-1$. We can apply the result to the problem of finding a conformal factor, which yields the vanishing Branson $Q$ curvature \cite{Branson}. This is an important and quite mysterious object in conformal geometry (see \cite{Chang2008, Baum2010} for an introduction to the application and meaning of the $Q$ curvature).

\begin{prop}
On every globally hyperbolic spacetime, there exists a function $\phi$ such that $Q(e^{2\phi}h)=0$.
\end{prop}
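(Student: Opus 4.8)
The plan is to exploit the conformal transformation law of Branson's $Q$-curvature, which connects $Q(e^{2\phi}h)$ to $Q(h)$ through the critical GJMS operator $P_{d/2}$. Recall that under the conformal rescaling $h'=e^{2\phi}h$ the $Q$-curvature transforms as
\begin{equation}
e^{d\phi}Q(e^{2\phi}h)=Q(h)+P_{d/2}\phi,
\end{equation}
where $P_{d/2}$ is the critical-order GJMS operator, a conformally invariant differential operator whose principal part is $\Delta^{d/2}$ (a power of the d'Alembert operator in the Lorentzian setting). Hence the equation $Q(e^{2\phi}h)=0$ is equivalent to the \emph{linear} equation $P_{d/2}\phi=-Q(h)$ for the unknown conformal factor $\phi$. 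The whole point is thus to solve this one linear equation globally, and the preceding machinery is tailored precisely to operators of this form.

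First I would identify $P_{d/2}$ as the derived equation $G_w$ of the operator $\tboxdot_w$ acting on scalars of the appropriate weight. As noted at the start of this subsection, the GJMS operators arise as the derived system for $\tboxdot_w$ on scalars of weight $-k$, and the critical operator $P_{d/2}$ corresponds to weight $w=0$ on scalars ($n=0$), so that $N=d/2-1+w-n=d/2-1\in\Z_+\cup\{0\}$. This places us squarely in the hypotheses of Corollary \ref{cor:GJMS}: the ambient system $\tboxdot_0\tX+\tD(\tX)=O(\rho^{N+1})$ is generalized hyperbolic, recursive, and linear. By the linearity noted after Corollary \ref{cor:GJMS}, the inhomogeneous version with source term $-Q(h)$ is equally well-posed, and by the final clause of Proposition \ref{prop:hip} the solution of a linear system extends to the whole globally hyperbolic spacetime. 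The derived equation is exactly $P_{d/2}\phi=-Q(h)$, where $Q(h)$ is a fixed smooth function of the metric and its derivatives.

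Second I would set up and solve the Cauchy problem for $P_{d/2}\phi=-Q(h)$. One chooses a Cauchy surface $\Sigma$ and prescribes arbitrary smooth initial data $\partial_1^k\phi|_\Sigma$ for $k\leq d-1$ (equivalently $k\leq 2N+1$, matching the jet count required by the derived equation). As explained just before this subsection, this data for $\tX^{[0]}=\phi$ determines, via Lemma \ref{lm:bijective}, consistent initial data for the full ambient system, and Corollary \ref{cor:GJMS} together with Proposition \ref{prop:hip} then yields a unique smooth global development. Restricting the ambient solution to $t=1$ and reading off the $\rho^0$-component gives a global smooth solution $\phi$ to $P_{d/2}\phi=-Q(h)$, whence $Q(e^{2\phi}h)=0$ on all of $M$.

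The main obstacle I anticipate is not the solvability of the linear equation — that is handed to us directly by Corollary \ref{cor:GJMS} — but rather verifying cleanly that the critical GJMS operator really is the derived equation of $\tboxdot_0$ on weight-zero scalars and that the correct inhomogeneous term is $-Q(h)$. This requires invoking the ambient-metric characterization of $Q$-curvature, and here the subtlety flagged in Remark \ref{remark:GJMS} is relevant: for $\tboxdot_0$ on scalars of weight $0$ the operator depends on $\operatorname{tr}\g^{[d/2]}$, the one piece of the metric expansion that the obstruction-tensor construction does not fix. I would therefore need to check that $Q(h)$ and $P_{d/2}$ are nonetheless well-defined — which they are, being intrinsic conformal invariants of $h$ — so that any admissible choice of extension produces the same derived equation. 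Once this identification is secured, the conformal transformation law closes the argument with no further analysis.
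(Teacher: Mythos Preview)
Your proposal is correct and tracks the paper's argument closely: both reduce the problem to a linear inhomogeneous equation for a weight-$0$ scalar and then invoke the well-posedness of the associated generalized hyperbolic system (Corollary~\ref{cor:GJMS} together with the linear clause of Proposition~\ref{prop:hip}) to obtain a global solution. The only difference is packaging: you quote the classical transformation law $e^{d\phi}Q(e^{2\phi}h)=Q(h)+P_{d/2}\phi$ and solve the derived equation $P_{d/2}\phi=-Q(h)$ downstairs, whereas the paper stays upstairs and solves the ambient equation $\msquare\mF=\msquare\ln t+O(\rho^{d/2})$ directly for $\tF=\mF|_{t=1}$, reading off $\phi=\tF^{[0]}$; the paper's computation of $\msquare\ln t$ in \eqref{eq:lnt} is precisely what addresses your worry about the $\operatorname{tr}\g^{[d/2]}$ dependence.
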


\begin{proof}
For the metric  $h$ the Branson $Q$ curvature can be constructed as follows. We find a scalar function $\mf$ such that 
\begin{equation}
\Lie_{\mT}\mf=1,\quad \mf^{[0]}|_{t=1}=0,\quad \msquare \mf=O(\rho^{d/2-1}).
\end{equation}
It is always possible to find such a function and it is unique up to $O(\rho^{d/2})$. We define $Q=c_d[\msquare\mf]|_{t=1}^{[d/2-1]}$ where $c_d$ is a dimension dependent constant \cite{Fefferman2003}. Let us notice that
\begin{equation}
\Lie_{\mT}\mF=0,\quad \mF=\mf-\ln t.
\end{equation}
Thus, it is a scalar function of weight $0$.

The conformal rescaling of the metric corresponds to a diffeomorphism, which acts by $t'=te^{-\phi}$ at $\rho=0$. Suppose that $\mf$ satisfying
\begin{equation}
\Lie_{\mT}\mf=1,\quad \msquare \mf=O(\rho^{d/2}),
\end{equation}
then we can define $t'=e^{\mf^{[0]}}=te^{\mF^{[0]}}$. The metric rescaled by $e^{2\mF^{[0]}}$ has a vanishing Branson curvature.

Let us now notice that the equation
\begin{equation}
\msquare\mF=\msquare\ln t+O(\rho^{d/2})
\end{equation}
is well-posed for $\tF=\mF|_{t=1}$. For linear (affine) systems, there exists a solution on the whole globally hyperbolic spacetime, if we provide arbitrary initial data on some Cauchy surface. 
\end{proof}

For future reference, we can also compute
\begin{equation}\label{eq:lnt}
\msquare\ln t=\frac{1}{\sqrt{\mg}}\partial_I\left(\sqrt{\mg}\mg^{I0}t^{-1}\right)=t^{-2}\frac{1}{\sqrt{\g}}\partial_\infty \sqrt{\g}.
\end{equation}
We notice that up to terms of order $O(\rho^{d/2})$ it depends only on the  part of the metric, which is determined by the ambient construction.

\subsection{Propagation of the the Einsteinian condition}

Let us introduce a tractor connection (see \cite{Curry2015}). Let $X_{I_1\cdots I_n}$ be an  $n$-tractor of weight $w$ field on the spacetime $M$. We define for a vector field $Y$
\begin{equation}
\nabla^{\mathcal T}_Y X_{I_1\cdots I_n}=\left[\mnabla_\mY \mX_{I_1\cdots I_n}\right]_{t=1,\rho=0},
\end{equation}
where we choose $\mX$ such that $\Lie_{\mT}\mX=w\mX$ and $\mX|_{t=1,\rho=0}=X$ and $\mY^\mu=Y^\mu$, $\mY^0=\mY^\infty=0$. The definition does not depend on a particular choice of $\mX$. If $w=n$ then the tractor derivative has particularly nice properties. Usually, one restricts the definition to this special type of tractor. We define the tractor derivative for $1$-tractor $X_I$ of weight $1$ (see tractor derivative, for example, in \cite{Graham2011} with a correction for covectors) by the formula:
\begin{equation}\label{eq:tractor-connection}
\nabla^{\mathcal T}_{\iI} \left(\begin{array}{c}
X_0\\ X_{\jI}\\ X_\infty\end{array}\right)=\left(\begin{array}{c}
\nabla_{\iI}X_0-X_{\iI}\\ \nabla_{\iI}X_{\jI}+g_{\iI\jI}X_\infty+P_{\iI\jI}X_0\\ \nabla_{\iI}X_\infty-P_{\iI}^{\jI}X_{\jI}\end{array}\right).
\end{equation}
An almost conformally Einstein metric (according to Gover \cite{Gover2008}) is a metric for which there exists $X_I$ of weight $1$ satisfying
\begin{equation}\label{eq:tractor-const}
\nabla^{\mathcal T}_\mu X_I=0.
\end{equation}
This is equivalent to
\begin{equation}
X_0=f,\ X_\mu=\partial_\mu f,\ X_\infty=-\frac{1}{d}(\nabla^\mu\nabla_\mu f+P^\mu_\mu f),
\end{equation}
for a function $f$ satisfying
\begin{equation}
\tf (\nabla_\mu\nabla_\nu f- P_{\mu\nu}f)=0,
\end{equation}
where $\tf$ denotes  a trace-free part (see, for example \cite{Curry2015}).
It is known \cite{Gover2008} that the metric is conformal to the Einstein metric if and only if  it is almost conformally Einstein and the following non-degeneracy condition holds
\begin{equation}
X_0\not=0.
\end{equation}
The conformal factor rescaling metric to Einsteinian is given by $e^\phi=X_0$ and the cosmological constant is equal $\Lambda=c_dX^IX_I$  where $c_d$ is a dimension dependent constant \cite{Gover2004a}. Conformal boundary corresponds to $X_0=0$. This set is a hypersurface with vanishing extrinsic curvature and it enjoys very special properties \cite{Gover2004a}. In almost conformally Einstein spaces, the Fefferman-Graham obstruction tensor vanishes.

It is known \cite{Graham2011} that one can prolong the covariant tractor from $\rho=0$ surface. We will need a detailed statement of this result. We remind  the following fact (proven in \cite{Graham2011}, see also \cite{Gover2008})

\begin{prop}\label{prop:Gover-Graham}
Suppose that $\nabla^{\mathcal T}_\mu X_I=0$ for a tractor of weight $1$, then there exists $\mX_I$ of weight $1$ on the ambient space such that $\mX_I|_{t=1,\rho=0}=X_I$ and
\begin{equation}
\mnabla_{I}\mX_J=O(\rho^{d/2-1}),
\end{equation}
\end{prop}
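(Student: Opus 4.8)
The plan is to construct $\mX_I$ by solving a GJMS-type wave equation for a weight-$1$ $1$-tractor and then to show, by a constraint-propagation argument modelled on Section \ref{sec:gauge}, that the full covariant derivative $\mnabla_I\mX_J$ (and not merely its trace) vanishes to the required order. The starting observation is that for a weight-$1$ field of valency $1$ the lemma preceding Proposition \ref{prop:form-GJMS} gives $\mnabla_\mT\mX = (w-n)\mX = 0$, that is $\mnabla_0\mX_J = 0$ identically. Hence the $I=0$ row of $\mnabla_I\mX_J$ is automatically zero, and all the content of the statement lives in the $\iI$- and $\infty$-directions.

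First I would set up the principal equation $\tboxdot_1\tX + \tD(\tX) = O(\rho^{d/2})$, where $\tD$ gathers the curvature (Ricci) terms so that, by Weitzenb\"ock, the equation characterises parallel forms. By Corollary \ref{cor:GJMS} with $n=1$ and $w=1$, so that $N = d/2-1+w-n = d/2-1$, this system is linear, recursive and generalized hyperbolic. Since $X_I$ is prescribed on the whole surface $\{\rho=0\}$, I would use only the recursive structure and not the time evolution: by Proposition \ref{prop:form-GJMS} the coefficient of $\tX^{[m+1]}$ in $[\tboxdot_1\tX]^{[m]}$ is $(d-2-2m)(m+1)$, which is nonzero for $m<d/2-1$, so setting $\tX^{[0]} = X$ determines $\tX^{[1]},\ldots,\tX^{[d/2-1]}$ uniquely and produces $\mX_I$ modulo $O(\rho^{d/2})$. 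The tangential part of the initial jet is fixed by the dictionary \eqref{eq:tractor-connection}: the hypothesis $\nabla^{\mathcal T}_\iI X_I = 0$ says precisely $\mnabla_\iI\mX_J|_{\rho=0} = 0$, which together with $\mnabla_0\mX_J = 0$ gives the bulk of $\mnabla_I\mX_J|_{\rho=0} = 0$.

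Next I would introduce the constraint tensor $\mW_{IJ} = \mnabla_I\mX_J$ and show $\mW_{IJ} = O(\rho^{d/2-1})$. By construction its trace is controlled, $\mnabla^I\mW_{IJ} = \tboxdot_1\tX_J = O(\rho^{d/2})$, and $\mW_{0J}=0$, $\mW_{IJ}|_{\rho=0}=0$. Differentiating the box equation and commuting covariant derivatives yields a closed linear system for $\mW_{IJ}$ whose principal part is again $\g^{\iI\jI}\partial_\iI\partial_\jI$, so it is recursive and generalized hyperbolic in the sense of Proposition \ref{prop:hip}; the commutator (curvature) terms are harmless because the ambient metric is Ricci-flat to the relevant order, $\mRS_{IJ} = O(\rho^{d/2-1})$ by \eqref{eq:FG}. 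With the vanishing initial data this forces $\mW_{IJ} = \mnabla_I\mX_J = O(\rho^{d/2-1})$, exactly as in the gauge argument of Lemma \ref{lm:gauge-prop}. As an alternative, since a parallel form is closed, one may instead prolong $\mX_I$ as a gradient $\mnabla_I\Phi$ of a weight-$2$ scalar $\Phi$, which makes $\mW_{IJ}$ symmetric and reduces the constraint to the tracefree Hessian, trading the skew part of $\mW$ for a conformal-to-Einstein equation on $\Phi$.

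The hard part will be the constraint-propagation step. One must assemble an honestly \emph{closed} recursive system for $\mW_{IJ}$, checking that every commutator term is either proportional to $\mW$ itself or is of sufficiently high order in $\rho$ through the condition $\mRS_{IJ}=O(\rho^{d/2-1})$, so that no uncontrolled component of $\mnabla\mX$ leaks into the equation. Feeding into this, the second delicate point is to verify that the initial datum for $\mW_{IJ}$ really vanishes in the normal ($\infty$) direction and the $J\in\{0,\infty\}$ slots as well: this is not handed to us by $\nabla^{\mathcal T}_\iI X_I=0$ but must be extracted from the box equation and the tractor formula \eqref{eq:tractor-connection} at $\rho=0$, using $\mnabla_I\mr=\mT_I$ and $\mnabla_I\mnabla_J\mr = \mg_{IJ}$ to pin down the normal components.
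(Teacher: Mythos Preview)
The paper does not prove this proposition: it is quoted as a known fact from \cite{Graham2011} (see also \cite{Gover2008}), introduced with ``We remind the following fact (proven in \cite{Graham2011}\ldots)''. So there is no in-paper argument to compare against.

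Your plan is in the right spirit but goes through the second-order operator $\msquare$ where a first-order argument suffices, and this creates some slippage. The standard argument (essentially Graham--Willse) extends $\mX_I$ off $\rho=0$ by imposing the \emph{first-order} condition $\mnabla_\infty\mX_J=O(\rho^{d/2-1})$, which is a first-order recursion in $\rho$ that determines $\tX^{[1]},\ldots,\tX^{[d/2-1]}$ directly. One then uses the commutator $[\mnabla_\infty,\mnabla_\iI]\mX_J=-\mRS_{\infty\iI J}{}^K\mX_K$, the special structure of the ambient curvature (contractions with $\mT$ vanish, Ricci is $O(\rho^{d/2-1})$), and the initial condition $\nabla^{\mathcal T}_\iI X_J=0$ to propagate $\mnabla_\iI\mX_J=0$ order by order in $\rho$. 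Your route via $\tboxdot_1$ can probably be made to work, but note that the recursion alone only yields $\tboxdot_1\tX+\tD=O(\rho^{d/2-1})$, not $O(\rho^{d/2})$: the coefficient $(d-2-2m)(m+1)$ vanishes at $m=d/2-1$, so the order-$(d/2-1)$ equation is an obstruction rather than part of the recursion, and your subsequent claim $\mnabla^I\mW_{IJ}=O(\rho^{d/2})$ is not justified as written. Also, the appeal to Lemma~\ref{lm:gauge-prop} is a red herring: that lemma concerns propagation in the $M$-time direction from a Cauchy surface $\Sigma$, whereas here the datum lives on all of $\{\rho=0\}$ and the argument is a pure $\rho$-recursion with no hyperbolic evolution involved. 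Finally, in your ``alternative'' the scalar potential should have weight $1$, not $2$: this is exactly the $\mf=\mT^J\mX_J$ used in the paper after the proposition.
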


In particular, in the case of an Einstein metric $h_{\iI\jI}$, we define an infinite order extension
\begin{equation}\label{eq:g0}
\g^0_{\iI\jI}=(1+\lambda \rho)^2 h_{\iI\jI},
\end{equation}
where $\lambda=\frac{R}{2d(d-1)}$ which is Ricci flat.
Then $\mX_I=\partial_I(t(1-\lambda\rho))$ satisfies
\begin{equation}
\mnabla_{I}\mX_J=O(\rho^{\infty}),
\end{equation}
We introduce $\mf=\mT^J\mX_J=t(1-\lambda\rho)$. Let us notice that
$\mnabla_I (\mT^J\mX_J)=\mX_I+\mT^J\mnabla_I \mX_J=\mX_I+O(\rho^{\infty})$ and then $\mnabla_I\mnabla_J \mf=O(\rho^{\infty})$. In particular, we see that $\msquare\mf=O(\rho^{\infty})$.

In general, we do not have a distinguished infinite order extension unless we already know the covariant tractor. We would like to determine how many of the properties of $\mf=\mT^I\mX_I$ extend to an arbitrary Ricci flat extension and to the conformal boundary. Suppose in a first step that we work in an arbitrary Ricci flat extension of a non-degenerate almost Einstein manifold. We compute for $\mf=t(1-\lambda\rho)$
\begin{equation}
\msquare \mf=\left(1+\lambda \rho\right)\partial_\infty\ln \sqrt{\g}-d\lambda.
\end{equation}
Let us introduce $\tilde{A}^{\iI}_{\jI}:=\tilde{g}^{\iI\kI}\g_{\kI\jI}'$. For an arbitrary extension of the Einsteinian ambient metric, we have
\begin{equation}
\tilde{A}^{\iI}_{\jI}=\frac{2\lambda}{1+\lambda\rho}\delta^{\iI}_{\jI}+\tilde{k}^{\iI}_{\jI},\quad \tilde{k}^{\iI}_{\jI}=O(\rho^{d/2-1}).
\end{equation}
The condition $\RS_{\infty\infty}=O(\rho^\infty)$ gives 
\begin{equation}
-\frac{1}{2}\partial_\infty\tilde{A}^{\iI}_{\iI}-\frac{1}{4}\tilde{A}^{\iI}_{\jI}\tilde{A}^{\jI}_{\iI}=O(\rho^\infty),
\end{equation}
by \eqref{eq:S-infty2}. Consequently, we obtain a property of $\tilde{k}^{\iI}_{\jI}$:
\begin{equation}
\frac{1}{2}\partial_\infty\tilde{k}^{\iI}_{\iI}-\frac{\lambda}{1+\lambda\rho}\tilde{k}^{\iI}_{\iI}=O(\rho^{d-1})\Longrightarrow \tilde{k}^{\iI}_{\iI}=O(\rho^{d-1}).
\end{equation}
Finally, we get
\begin{equation}
\partial_\infty \ln\sqrt{\g}=\frac{1}{2}\tilde{A}^{\iI}_{\iI}=\partial_\infty \ln\sqrt{\g^0}+O(\rho^{d-1}),
\end{equation}
where $\g^0_{\iI\jI}$ is given by \eqref{eq:g0}.
Independently of the Ricci flat extension, the following is true
\begin{equation}\label{eq:mf-hip}
\msquare \mf=O(\rho^{d/2+1}).
\end{equation}
We will see that the condition \eqref{eq:mf-hip} can be also satisfied on the conformal boundary. In general, we cannot ensure vanishing of $\msquare \mf$  to higher order on the boundary unless the  metric is even with respect to the conformal boundary (see \cite{Fefferman-Graham} for the definition of even Poincare-Einstein metrics).

Equation \eqref{eq:mf-hip} is a quite important observation. As $\mf$ is a scalar of weight $1$, this is a generalized hyperbolic system for $\tilde{f}=\mf|_{t=1}$. We can evolve it from the Cauchy surface to the whole globally hyperbolic development. In order to define evolution we need to specify the choice of extension of the ambient metric. We choose arbitrary extension that is Ricci flat to all orders. We will now show that if \eqref{eq:mf-hip} holds, then $\mnabla_I\mX_J$ satisfies the linear hyperbolic equation too and then what remains is to show that the initial data for this system vanish.

Let us notice an identity for an arbitrary function $\mF$
\begin{equation}\label{eq:commutator}
\mboxplus(\mnabla_I\mnabla_J\mF)=\mnabla_I\mnabla_J(\msquare \mF)+(\mnabla_J\mRS_{I}^K+\mnabla_I\mRS_{J}^K-\mnabla^K\mRS_{IJ})\mnabla_K\mF,
\end{equation}
where we introduced an operator $\mboxplus$ on $2$- covectors $\mD_{IJ}$ on $\mM$
\begin{equation}
\mboxplus(\mD)_{IJ}=\mnabla^L \mnabla_L \mD_{IJ}+2\mRS^{K\phantom{I}L}_{\phantom{K}I\phantom{L}J}\mD_{KL}-\mRS^{K}_I \mD_{KJ}-\mRS^{K}_J \mD_{KI},
\end{equation}
where $\mRS^{K\phantom{I}L}_{\phantom{K}I\phantom{L}J}$, $\mRS^{K}_I$ are the Riemann tensor and Ricci tensor on $\mM$. If $\mRS_{IJ}=O(\rho^\infty)$ and \eqref{eq:mf-hip} holds, then from  \eqref{eq:commutator} it follows that
\begin{equation}
\mboxplus (\mD_{IJ})=\mnabla_I\mnabla_J(\msquare \mf)+O(\rho^{\infty})=O(\rho^{d/2-1}),\quad \mD_{IJ}=\mnabla_I\mnabla_J\mf.
\end{equation}
This is also a hyperbolic equation for $\tD_{IJ}^{[k]}$, $k=0,\ldots d/2-2$ ($\tD_{IJ}=\mD_{IJ}|_{t=1}$). We thus need to show that
\begin{equation}\label{eq:DD}
\tD_{IJ}|_\Sigma=O(\rho^{d/2-1}),\ \partial_1\tD_{IJ}|_\Sigma=O(\rho^{d/2-1}).
\end{equation}

\begin{prop}\label{prop:tractor}
Suppose that we have initial data $D^{d+1}f|_\Sigma$  such that 
\begin{equation}\label{eq:tractor-cond-D}
D^{d-1} \tf (\nabla_\mu\nabla_\nu f- P_{\mu\nu}f)|_\Sigma=0.
\end{equation}
Then the development of the AFG equation is almost Einstein with a covariant tractor given by
\begin{equation}
X_0=f,\ X_\mu=\partial_\mu f,\ X_\infty=-\frac{1}{d}(\nabla^\mu\nabla_\mu f+P^\mu_\mu f)
\end{equation}
where $f$ is the  solution of the scalar GJMS equation of weight $1$:
\begin{equation}
G_1f=0
\end{equation}
with the given initial data. The solution is computed in an arbitrary Ricci flat extension.
\end{prop}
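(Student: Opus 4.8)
The plan is to produce, on the AFG development, an ambient symmetric $2$-tractor $\mD_{IJ}$ whose restriction to $\rho=0$ is the tractor covariant derivative of $X_I$, to show that $\mD_{IJ}$ solves a linear generalized hyperbolic system, and to propagate its vanishing from $\Sigma$. First I would fix an extension of the development that is Ricci flat to all orders, $\mRS_{IJ}=O(\rho^\infty)$, which exists by the construction of $k_{\mu\nu}$ given above. On this fixed background I would solve the scalar GJMS equation $G_1f=0$ with the prescribed data $D^{d+1}f|_\Sigma$: by Corollary~\ref{cor:GJMS} (with $n=0$, $w=1$, hence $N=d/2$) this is a well-posed linear system, and Lemma~\ref{lm:bijective} together with Proposition~\ref{prop:hip} give a unique global weight-$1$ scalar $\mf$ with $\msquare\mf=O(\rho^{d/2+1})$, i.e.\ \eqref{eq:mf-hip}. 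I would then set $\mX_I=\mnabla_I\mf$ and $\mD_{IJ}=\mnabla_I\mnabla_J\mf$.

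The second step is the propagation equation. Substituting $\mRS_{IJ}=O(\rho^\infty)$ and \eqref{eq:mf-hip} into the commutator identity \eqref{eq:commutator} yields $\mboxplus(\mD)=\mnabla_I\mnabla_J(\msquare\mf)+O(\rho^\infty)=O(\rho^{d/2-1})$. Since the principal part of $\mboxplus$ is the ambient d'Alembertian, the computation of Proposition~\ref{prop:form-GJMS} shows that this is a linear generalized hyperbolic system for the multifields $\tD_{IJ}^{[k]}$, $k=0,\dots,d/2-2$; by Proposition~\ref{prop:hip} its solution is uniquely determined by its Cauchy data, so by linearity it suffices to establish the vanishing initial conditions \eqref{eq:DD}.

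The hard part is this last verification, and here the constraint \eqref{eq:tractor-cond-D} enters. I would first cut down the relevant components of $\mD$ by two algebraic identities: contracting with $\mT$ gives $\mT^I\mD_{IJ}=\mnabla_J(\mnabla_\mT\mf-\mf)=0$, because $\mnabla_\mT\mf=\Lie_\mT\mf=\mf$ and $\mnabla_J\mT^I=\delta_J^I$, while the trace is $\mg^{IJ}\mD_{IJ}=\msquare\mf=O(\rho^{d/2+1})$. Thus, up to a trace that is already $O(\rho^{d/2+1})$, the tensor $\mD$ is carried entirely by its symmetric trace-free, $\mT$-transverse part, whose value at $t=1,\rho=0$ is exactly the spacetime quantity $\tf(\nabla_\mu\nabla_\nu f-P_{\mu\nu}f)$ whose vanishing is the almost-Einstein condition. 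The recursion built into $\msquare\mf=O(\rho^{d/2+1})$ determines each coefficient $\mf^{[k]}$, $k\le d/2$, from jets of $f$ on $M$, so each $\tD_{IJ}^{[k]}|_\Sigma$ and $\partial_1\tD_{IJ}^{[k]}|_\Sigma$ becomes an explicit function of jets of $f$ on $\Sigma$. The constraint \eqref{eq:tractor-cond-D} controls precisely $d-1=2(d/2-1)+1$ jets of the trace-free expression, which is the exact amount needed (enough tangential derivatives, together with one normal derivative, up to $\rho$-order $d/2-1$) to force $\tD_{IJ}^{[k]}|_\Sigma=0$ and $\partial_1\tD_{IJ}^{[k]}|_\Sigma=0$ for $k\le d/2-2$. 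Checking that the trace and the $\infty$-direction components are genuinely absorbed by $\mT^I\mD_{IJ}=0$ and $\mg^{IJ}\mD_{IJ}=O(\rho^{d/2+1})$, and that the jet bookkeeping closes, is the step I expect to be the main obstacle.

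Finally, with \eqref{eq:DD} in hand, uniqueness and linearity give $\mD_{IJ}=\mnabla_I\mnabla_J\mf=O(\rho^{d/2-1})$ on the whole globally hyperbolic development. Since $d\ge4$, this vanishes at $\rho=0$, so restricting $\mD_{\mu I}$ to $t=1,\rho=0$ gives $\nabla^{\mathcal T}_\mu X_I=0$ by the definition \eqref{eq:tractor-connection} of the tractor connection. Hence $X_I$ is a covariantly constant tractor of weight $1$ with the stated components, and the AFG development is almost Einstein, as claimed.
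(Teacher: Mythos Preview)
Your overall architecture matches the paper: fix a Ricci-flat extension, solve $G_1f=0$ to get $\msquare\mf=O(\rho^{d/2+1})$, feed this into \eqref{eq:commutator} to get $\mboxplus(\mD)=O(\rho^{d/2-1})$, and propagate the vanishing of $\mD_{IJ}=\mnabla_I\mnabla_J\mf$ from $\Sigma$. The gap is exactly where you flagged it, in the verification of \eqref{eq:DD}, and it is a genuine gap rather than just bookkeeping.

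Your two algebraic identities do \emph{not} reduce $\mD^{[0]}$ to the spacetime trace-free piece. Since $\mT^I=t\,\delta^I_0$, the relation $\mT^I\mD_{IJ}=0$ kills only $\mD_{0J}$; it says nothing about $\mD_{\infty J}$. And at $\rho=0$ one has $\mg^{\infty\infty}=0$, so the trace identity $\mg^{IJ}\mD_{IJ}=O(\rho^{d/2+1})$ does not see $\mD_{\infty\infty}$ either (it only fixes the $h$-trace of $\mD_{\mu\nu}$, via $\mD_{0\infty}=0$). Thus $\tD_{\infty\mu}^{[0]}$ and $\tD_{\infty\infty}^{[0]}$ remain uncontrolled by your argument. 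These are precisely $[\mnabla_\infty\mX_J]^{[0]}$, the $\rho$-normal derivative, which is \emph{not} part of the tractor connection $\nabla^{\mathcal T}_\mu X_J$ and hence not directly touched by the constraint \eqref{eq:tractor-cond-D}.

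The paper closes this as follows. First it uses recursivity of $\mboxplus$ (Lemma~\ref{lm:recursive}) to reduce \eqref{eq:DD} to the single statement $D^{d-3}[\mD_{IJ}]^{[0]}|_\Sigma=0$; you never invoke this, and instead try to treat each $\rho$-order separately. Second, it encodes the uncontrolled $\infty$-components by writing, on $\Sigma$ and up to $D^{d-2}$-jets, $\mnabla_I\mX_J=\mT_I\mF_J+\mr\,\mD'_{IJ}$ (note $\mT_\infty|_{\rho=0}=t^2$, $\mT_\mu=\mT_0|_{\rho=0}=0$, so $\mF_J$ is exactly the missing piece). Then it uses the \emph{additional} wave equation $\msquare\mX_J=\mnabla_J(\msquare\mf)+O(\rho^\infty)=O(\rho^{d/2})$ on the one-form $\mX_J$, together with $\mnabla^I\mT_I=d+2$ and $\mT^I\mD'_{IJ}=-2\mF_J$, to force $D^{d-3}\mF_J^{[0]}|_\Sigma=0$. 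This extra step costs one derivative (another is already lost passing from \eqref{eq:tractor-cond-D} to $D^{d-2}\nabla^{\mathcal T}_\mu X_J|_\Sigma=0$ via the divergence needed for the $X_\infty$-row), and $d-3=2(d/2-2)+1$ is precisely the jet count required by the recursive system of order $d/2-2$.
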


\begin{proof}
We need to prove \eqref{eq:DD}. As $\mboxplus$ is recursive till order $d/2-2$ it is enough to show (due to Lemma \ref{lm:bijective}) that
\begin{equation}
D^{d-3} [\mnabla_I\mnabla_J\mf]|_{t=1,\Sigma}^{[0]}=0
\end{equation}
A symbol $D^n$ denote jets in directions of $M$.
We compute using $[\mnabla_I\mnabla^I\mf]^{[0]}=0$
\begin{equation}
k:=[\mf]^{[1]}|_{t=1}=-\frac{1}{d}(\nabla^\mu\nabla_\mu f+P^\mu_\mu f)
\end{equation}
Let us now define $\mX_I=\partial_I \mf$, $X_I=[\mX_I]^{[0]}|_{t=1}$.
We notice that 
\begin{equation}
X_0=f,\ X_\mu=\partial_{\mu}f,\ X_\infty=k.
\end{equation}
The condition \eqref{eq:tractor-cond-D} gives  by \eqref{eq:tractor-connection} 
\begin{equation}\label{eq:Dd-2}
D^{d-2}\nabla^{\mathcal T}_{\iI} X_J|_\Sigma=0.
\end{equation}
The nontrivial condition for $\nabla^{\mathcal T}_{\iI} X_\infty$ is shown by divergence of \eqref{eq:tractor-cond-D} (see \cite{Curry2015}). For this reason, we only get the condition for $D^{d-2}$ jets. The equation \eqref{eq:Dd-2} means
\begin{equation}\label{eq:tractor-1}
D^{d-2}(\mnabla_I\mnabla_J \mf)|_{t=1,\Sigma}=D^{d-2}(\mT_I\mF_J+\mr\mD_{IJ})|_{t=1,\Sigma},
\end{equation}
for some $\mF_I$ and $\mD_{IJ}$ such that $\Lie_{\mT}\mF=-2\mF$ and $\Lie_{\mT}\mD=-\mD$ in order that $\Lie_{\mT}\mnabla\mX=\mnabla\mX$. Moreover, from $\mT^I\mnabla_I\mnabla_J \mf=0$ we have
\begin{equation}
D^{d-2}(2\mr\mF_J+\mr \mT^I\mD_{IJ})|_{t=1,\Sigma}=0\Longrightarrow D^{d-2}(\mT^I\mD_{IJ})|_{t=1,\Sigma}=-D^{d-2}(2\mF_J)|_{t=1,\Sigma}.
\end{equation}
Finally, due to $\RS_{IJ}=O(\rho^\infty)$ we derive
\begin{equation}
\mnabla_I\mnabla^I \mX_J=\mnabla_J (\mnabla_I\mnabla^I\mf)+O(\rho^\infty)=O(\rho^{d/2}).
\end{equation}
Substituting the form of $\mnabla_I\mX_J$ from \eqref{eq:tractor-1} we obtain
\begin{equation}
O(\rho^{d/2})=D^{d-3}[\mnabla^I(\mT_I\mF_J)+\mnabla^I(\mr\mD_{IJ})]|_{t=1,\Sigma}=
D^{d-3}((d-2)\mF_J+\mr \mnabla^I\mD_{IJ})|_{t=1,\Sigma},
\end{equation}
so $D^{d-3}[\mF_J]|_{t=1,\Sigma}^{[0]}=0$ which means that
\begin{equation}
D^{d-3}[\mnabla_I\mnabla_J \mf]|_{t=1,\Sigma}^{[0]}=0,
\end{equation}
and the initial condition \eqref{eq:DD} are satisfied.
\end{proof}

\begin{remark}
In the case $\mf^{[0]}|_{t=1}\not=0$ on the Cauchy surface, we can change conformally the metric such that it satisfies Einstein constraints on the initial surface. Therefore, we can  evolve Einstein equations. The result needs to agree with the metric evolved with AFG equation up to conformal rescaling and diffeomorphism. In this way we obtain propagation of the Einsteinian condition up to the conformal boundary (see \cite{Anderson}). Surface $\mf^{[0]}|_{t=1}=0$ is more delicate. Our method has the advantage that it allows to treat all cases simultaneously. For example, the initial Cauchy surface can cross the conformal boundary.
\end{remark}

\section{Summary}

The Fefferman-Graham obstruction tensor and GJMS operators are very special objects. One additional nice property is related to their behavior as evolution systems. Both GJMS as well as Fefferman-Graham tensor in the suitable gauge are not strongly hyperbolic, but still they enjoy well-posed Cauchy problem.  In addition, the property of being almost Einstein propagates from the initial surface. We proved it in the smooth category, but with an arbitrary Cauchy surface. Namely, the Cauchy surface can cross or partially coincide with the conformal boundary of the spacetime. This allows to use AFG equation for proving the stability of asymptotically simple solutions (as was done in \cite{Anderson, Anderson-Chrusciel}), which was the initial motivation for studying AFG equations. We should notice that this is not the most effective proof of stability as the metric needs to be of high regularity. However, it provides some advantages: it is a Lagrangean theory, which allows to apply various techniques like Noether charge definition, Hamiltonian formulations on the level of conformally compactified spacetime. The meaning of such defined charges for Einsteinian solutions is still unclear. The relation to GR charges should be investigated in future.

\section*{Acknowledgement}

The author would like to thank Piotr Chru{\'s}ciel for very useful discussions, comments, and help during the development of this work. This work was supported by Project OPUS 2017/27/B/ST2/02806 of Polish National Science Centre. Data sharing not applicable to this article as no datasets were generated or analysed during the current study.

\appendix

\section{Bianchi identities}\label{sec:Bianchi}

For convenience of the reader, we provide here a proof of the Bianchi identities in the ambient space. Our first goal is to show that

\begin{prop}\label{prop:Bianchi}
The following holds
\begin{align}
&\tnabla^{\iI} \left(\RS_{\iI\jI}-\frac{1}{2}\g_{\iI\jI}\RS\right)+\rho\partial_{\iI}\RS_{\infty\infty}+(d-2-2\rho\partial_\infty)\RS_{\iI\infty}-\rho\g^{\kI\lI}\g_{\kI\lI}'\RS_{\iI\infty}=0,\label{eq:B-1}\\
&\tnabla^{\iI}\RS_{\iI\infty}+(d-2-\rho\partial_\infty)\RS_{\infty\infty}-\rho \g^{\iI\jI}\g_{\iI\jI}'\RS_{\infty\infty}-\frac{1}{2}\g_{\iI\jI}'\RS^{\iI\jI}-\frac{1}{2}\partial_\infty \RS=0,\label{eq:B-2}
\end{align}
where $\RS=\RS^{\iI}_{\iI}$.
\end{prop}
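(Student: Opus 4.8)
The plan is to obtain both identities as the $M$-directional ($J=\iI$) and $\rho$-directional ($J=\infty$) components of the once-contracted second Bianchi identity for the ambient metric,
\begin{equation}
\mnabla^I\left(\mRS_{IJ}-\tfrac12\mg_{IJ}\mRS\right)=0,
\end{equation}
which holds identically for any metric; the component $J=0$ is automatic. First I would collect the structural inputs already quoted from Fefferman-Graham: $\mRS_{0I}=0$, the tensor $\mRS_{IJ}$ is $t$-independent with $\mRS_{\iI\jI}=\RS_{\iI\jI}$, $\mRS_{\iI\infty}=\RS_{\iI\infty}$, $\mRS_{\infty\infty}=\RS_{\infty\infty}$, together with the explicit inverse ambient metric, whose only nonvanishing entries are $\mg^{0\infty}=t^{-1}$, $\mg^{\infty\infty}=-2\rho t^{-2}$ and $\mg^{\iI\jI}=t^{-2}\g^{\iI\jI}$.

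Next I would contract to express the ambient scalar curvature in $\tilde M$ terms; using $\mRS_{0I}=0$ one finds
\begin{equation}
\mRS=\mg^{\iI\jI}\RS_{\iI\jI}+\mg^{\infty\infty}\RS_{\infty\infty}=t^{-2}\left(\RS-2\rho\RS_{\infty\infty}\right),\qquad \RS=\RS^{\iI}_{\iI}.
\end{equation}
Assembling the ambient Einstein tensor then gives $\mathbf{G}_{\iI\jI}=\RS_{\iI\jI}-\tfrac12\g_{\iI\jI}\RS+\rho\,\g_{\iI\jI}\RS_{\infty\infty}$, $\mathbf{G}_{\iI\infty}=\RS_{\iI\infty}$ and $\mathbf{G}_{0I}=0$. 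I would then expand $\mg^{IK}\mnabla_K\mathbf{G}_{IJ}$ using the explicit ambient Christoffel symbols of the normal-form metric, of which the ones that genuinely couple the $M$- and $\rho$-blocks are $\boldsymbol{\Gamma}^{\iI}_{\jI\infty}=\tfrac12\g^{\iI\kI}\g'_{\kI\jI}$, $\boldsymbol{\Gamma}^{\infty}_{\iI\jI}=-\g_{\iI\jI}+\rho\,\g'_{\iI\jI}$ and the trace $\boldsymbol{\Gamma}^{\iI}_{\iI\infty}=\tfrac12\g^{\kI\lI}\g'_{\kI\lI}$, while the purely $M$-valued symbols reduce to the $\rho$-dependent Christoffels $\tilde{\Gamma}^{\iI}_{\kI\lI}$ of $\g$, so that $\mg^{\iI\jI}\mnabla_{\iI}$ assembles into $t^{-2}\tnabla^{\iI}$ acting on the $\tilde M$-tensors. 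Carrying out this expansion, the $\rho\,\g_{\iI\jI}\RS_{\infty\infty}$ piece of $\mathbf{G}_{\iI\jI}$ yields the $\rho\,\partial_{\iI}\RS_{\infty\infty}$ term (since $\tnabla_\iI$ ignores the coordinate $\rho$ and annihilates $\g$), and the mixed contractions acting on $\mathbf{G}_{\iI\infty}=\RS_{\iI\infty}$ assemble into $(d-2-2\rho\partial_\infty)\RS_{\iI\infty}-\rho\,\g^{\kI\lI}\g'_{\kI\lI}\RS_{\iI\infty}$, giving \eqref{eq:B-1}. The same procedure with $J=\infty$ produces $(d-2-\rho\partial_\infty)\RS_{\infty\infty}$, the correction $-\rho\,\g^{\iI\jI}\g'_{\iI\jI}\RS_{\infty\infty}$ and the coupling terms $-\tfrac12\g'_{\iI\jI}\RS^{\iI\jI}-\tfrac12\partial_\infty\RS$ of \eqref{eq:B-2}.

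The mechanism making this consistent is that $\mRS_{IJ}$ is $t$-independent while the metric factors carry powers of $t$, so each component equation factors as an overall power of $t$ times a $t$-independent identity on $\tilde M$; dividing out that power leaves precisely \eqref{eq:B-1} and \eqref{eq:B-2}. The operator combinations $\rho\partial_\infty$ arise from contracting $\mg^{\infty\infty}\mnabla_\infty$, and the explicit constant $d-2$ from the trace Christoffel contributions together with the dimension $d+2$ of the ambient trace.

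The \emph{main obstacle} I expect is the error-free Christoffel bookkeeping. The most delicate point is that the vanishing row $\mathbf{G}_{0I}=0$ still contributes to $\mnabla^I\mathbf{G}_{IJ}$ through Christoffel terms of the form $-\boldsymbol{\Gamma}^L_{KI}\mathbf{G}_{LJ}$, since the covariant derivative of a tensor with a vanishing component need not vanish; these cross terms (for instance $\boldsymbol{\Gamma}^{\infty}_{\infty 0}=t^{-1}$, which feeds $\RS_{\iI\infty}$ into the $J=\iI$ equation) must be tracked carefully, and one must verify that all spurious $t$-dependent and inhomogeneous pieces cancel, leaving exactly the stated coefficients. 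A clean independent check is that, once $\RS_{\iI\jI}^{[n]}=0$ for $n\le d/2-1$, the two identities force $\RS_{\iI\infty}=O(\rho^{d/2-1})$ and $\RS_{\infty\infty}=O(\rho^{d/2-1})$, reproducing \eqref{eq:Bianchi-FG}; this is exactly the consistency relied upon elsewhere in the paper and fixes all the signs.
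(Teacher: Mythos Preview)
Your overall strategy coincides with the paper's: both identities are the $J=\jI$ and $J=\infty$ components of the ambient contracted Bianchi identity $\mnabla^I\mRS_{IJ}=\tfrac12\partial_J(\mg^{KL}\mRS_{KL})$, reduced to $\tilde M$ using the explicit normal form of $\mg$, and your formula $\mRS=t^{-2}(\RS-2\rho\RS_{\infty\infty})$ for the ambient scalar curvature is the one the paper uses.

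There is, however, a concrete error in your setup. You assert $\mathbf G_{0I}=0$, but this is false: while $\mRS_{0I}=0$, the metric components $\mg_{00}=2\rho$ and $\mg_{0\infty}=t$ do not vanish, so
\[
\mathbf G_{00}=-\rho\, t^{-2}\bigl(\RS-2\rho\RS_{\infty\infty}\bigr),\qquad \mathbf G_{0\infty}=-\tfrac12 t^{-1}\bigl(\RS-2\rho\RS_{\infty\infty}\bigr)
\]
both contribute. Your direct Christoffel expansion of $\mnabla^I\mathbf G_{IJ}$ would therefore miss terms unless you repair this; it is exactly the bookkeeping landmine you anticipated in your last paragraph, but one step earlier than you expected.

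The paper sidesteps the whole issue by never forming the Einstein tensor. It keeps the two halves $\mnabla^I\mRS_{IJ}$ and $\tfrac12\partial_J\mRS$ separate, and evaluates the first by contracting with a vector field $\mX$ via the identity
\[
(\mnabla^I\mD_{IJ})\mX^J=\mnabla^I(\mD_{IJ}\mX^J)-\tfrac12\mD^{IJ}\Lie_{\mX}\mg_{IJ}.
\]
For $\mX=\partial_\jI$ or $\mX=\partial_\infty$, the one-form $\mD_{IJ}\mX^J=\mRS_{IJ}\mX^J$ genuinely has vanishing $0$-component (since $\mRS_{0I}=0$), so its divergence is computed Christoffel-free by the density formula $\mnabla^I\mF_I=(t^{d+1}\sqrt{\g})^{-1}\partial_I\bigl(t^{d+1}\sqrt{\g}\,\mg^{IJ}\mF_J\bigr)$; the Lie-derivative term only requires $\Lie_{\mX}\mg$, which is read off directly from the normal form. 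This organisation exploits $\mRS_{0I}=0$ where it actually holds and avoids the nonzero $\mathbf G_{0I}$ pieces you would otherwise have to carry around.
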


The proof will be divided in a series of lemmas.
Let us denote by $\mg_{IJ}$, $\mnabla_{I}$, $\mRS_{IJ}$ metric covariant derivative and Ricci tensor respectively in the ambient space $\mM$ . Indices are $I=0,\infty$ or $\iI$ in the case of index on $M$.

\begin{lm}\label{lm:div1}
Let $\tF_{\mu}$ be a form and $\tF_\infty$ a function on $\tilde{M}$. Define a form $\mF_I$ on $\mM$ by
\begin{equation}
\mF_{\iI}=\tF_{\iI},\quad \mF_\infty=\tF_\infty,\quad \tF_0=0.
\end{equation}
Then
\begin{equation}
\mnabla^I\mF_I=t^{-2}\left(\tnabla^{\iI} \tF_{\iI}+(d-2-2\rho\partial_\infty)\tF_\infty-\rho\g^{\iI\jI}\g_{\iI\jI}' \tF_\infty\right).
\end{equation}
\end{lm}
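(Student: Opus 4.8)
The plan is to evaluate the divergence directly in the coordinates $(t,x^\iI,\rho)$ using the coordinate formula $\mnabla^I\mF_I=\tfrac{1}{\sqrt{\mg}}\,\partial_I\bigl(\sqrt{\mg}\,\mg^{IJ}\mF_J\bigr)$, exactly as in \eqref{eq:lnt}. First I would record the inverse of the ambient metric. Reading off the block form $2\rho\,dt^2+2t\,dt\,d\rho+t^2\g_{\iI\jI}dx^\iI dx^\jI$, the $2\times2$ block in the $(t,\rho)$ directions has determinant $-t^2$, so that
\begin{equation}
\mg^{00}=0,\quad \mg^{0\infty}=t^{-1},\quad \mg^{\infty\infty}=-2\rho t^{-2},\quad \mg^{\iI\jI}=t^{-2}\g^{\iI\jI},
\end{equation}
with all remaining mixed components vanishing. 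The determinant factorizes as $\det\mg=-t^{2d+2}\det\g$, hence $\sqrt{\mg}=t^{d+1}\sqrt{\g}$.

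Next I would contract with $\mF_J$. Since $\mF_0=0$, $\mF_\iI=\tF_\iI$ and $\mF_\infty=\tF_\infty$, this gives $\mg^{0J}\mF_J=t^{-1}\tF_\infty$, $\mg^{\infty J}\mF_J=-2\rho t^{-2}\tF_\infty$ and $\mg^{\iI J}\mF_J=t^{-2}\tF^\iI$, the last index raised with $\g$. Substituting into the divergence formula, the $t$-derivative acts on $t^{d+1}\sqrt{\g}\cdot t^{-1}\tF_\infty=t^{d}\sqrt{\g}\,\tF_\infty$ and produces $d\,t^{d-1}\sqrt{\g}\,\tF_\infty$, while the $x^\iI$-derivative term reduces to $t^{d-1}\sqrt{\g}\,\tnabla^\iI\tF_\iI$ after using $\tfrac{1}{\sqrt{\g}}\partial_\iI(\sqrt{\g}\,\tF^\iI)=\tnabla^\iI\tF_\iI$; both steps use that $\sqrt{\g}$, $\tF_\infty$ and $\tF^\iI$ are independent of $t$.

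The only step needing care is the $\rho$-derivative of $t^{d+1}\sqrt{\g}\cdot(-2\rho t^{-2})\tF_\infty=-2t^{d-1}\rho\sqrt{\g}\,\tF_\infty$. Applying the product rule in $\rho$ and using $\partial_\infty\sqrt{\g}=\tfrac12\sqrt{\g}\,\g^{\iI\jI}\g_{\iI\jI}'$, the three terms coming from the derivative hitting $\rho$, $\sqrt{\g}$ and $\tF_\infty$ respectively are $-2\tF_\infty$, $-\rho\,\g^{\iI\jI}\g_{\iI\jI}'\tF_\infty$ and $-2\rho\,\partial_\infty\tF_\infty$, each multiplied by $t^{d-1}\sqrt{\g}$. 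Adding the three contributions, dividing by $\sqrt{\mg}=t^{d+1}\sqrt{\g}$, and combining $d\tF_\infty-2\tF_\infty=(d-2)\tF_\infty$ yields
\begin{equation}
\mnabla^I\mF_I=t^{-2}\left(\tnabla^{\iI}\tF_\iI+(d-2-2\rho\partial_\infty)\tF_\infty-\rho\,\g^{\iI\jI}\g_{\iI\jI}'\tF_\infty\right),
\end{equation}
which is the claim.

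There is no genuine obstacle here; the computation is routine. The only bookkeeping risks are keeping the powers of $t$ straight — in particular the overall $t^{-2}$ arising as $t^{d-1}/t^{d+1}$ — and correctly distributing the $\rho$-derivative over the product $\rho\sqrt{\g}\,\tF_\infty$, which is where both the $-2\rho\partial_\infty\tF_\infty$ and the $-\rho\,\g^{\iI\jI}\g_{\iI\jI}'\tF_\infty$ terms originate.
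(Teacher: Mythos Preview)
Your proof is correct and follows essentially the same approach as the paper: both apply the coordinate formula $\mnabla^I\mF_I=\tfrac{1}{\sqrt{\mg}}\partial_I(\sqrt{\mg}\,\mg^{IJ}\mF_J)$, record the inverse ambient metric and $\sqrt{\mg}=t^{d+1}\sqrt{\g}$, and then evaluate the $t$-, $x^\iI$-, and $\rho$-derivative contributions separately. Your version is in fact slightly more careful, correctly writing $\mg^{\iI\jI}=t^{-2}\g^{\iI\jI}$ where the paper's displayed list has a typo, and spelling out the three pieces of the $\rho$-derivative explicitly.
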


\begin{proof}
Let us notice the identity
\begin{equation}
{\mg}^{IJ}\mnabla_I \mF_J=\frac{1}{\sqrt{\mg}}\partial_{I}(\sqrt{\mg}{\mg}^{IJ}\mF_J).
\end{equation}
Now $\mg^{0\infty}=\mg^{\infty 0}=t^{-1}$, $\mg^{\infty\infty}=-2\rho t^{-2}$ and $\mg^{\iI\jI}=\g^{\iI\jI}$ the rest of the components vanishes. Moreover, $\sqrt{\mg}=t^{d+1}\sqrt{\g}$. Thus, remembering that $\mF_0=0$ it follows that
\begin{align}
&{\mg}^{IJ}\mnabla_I \mF_J=t^{-2}\frac{1}{\sqrt{\g}}\partial_{\iI}(\sqrt{\g}{\g}^{\iI\jI}\tF_{\jI})+
\frac{1}{t^{d+1}\sqrt{\g}}\partial_0(t^{d+1}\sqrt{\g}t^{-1}\tF_\infty)+
\frac{1}{\sqrt{\g}}\partial_\infty(\sqrt{\g}(-2\rho t^{-2})\tF_\infty)=\nonumber\\
&=t^{-2}\left(\tnabla^{\iI} \tF_{\iI}+(d-2-2\rho\partial_\infty)\tF_\infty-\rho\g^{\iI\jI}\g_{\iI\jI}' \tF_\infty\right),
\end{align}
where we used $\frac{1}{\sqrt{\g}}\partial_{\iI}(\sqrt{\g}{\g}^{\iI\jI}\tF_{\jI})=\tnabla^{\iI} \tF_{\iI}$.
\end{proof}

\begin{lm}\label{lm:div2}
Let $\mD_{IJ}$ be a symetric tensor and $\mX$ a vector field, then
\begin{equation}
(\mnabla^I \mD_{IJ})\mX^J=\mnabla^I (\mD_{IJ}\mX^J)-\frac{1}{2}\mD^{IJ}{\mathcal L}_{\mX}\mg_{IJ},
\end{equation}
where ${\mathcal L}_{\mX}$ is a Lie derivative
\end{lm}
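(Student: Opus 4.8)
The plan is to reduce the identity to the product (Leibniz) rule for the metric-compatible connection $\mnabla$, together with the standard expression of the Lie derivative of the metric in terms of the symmetrized covariant derivative of $\mX$.

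First I would apply the product rule to the divergence of the contracted tensor,
\begin{equation}
\mnabla^I(\mD_{IJ}\mX^J)=(\mnabla^I\mD_{IJ})\mX^J+\mD_{IJ}\mnabla^I\mX^J,
\end{equation}
which rearranges to $(\mnabla^I\mD_{IJ})\mX^J=\mnabla^I(\mD_{IJ}\mX^J)-\mD_{IJ}\mnabla^I\mX^J$. Comparing with the claim, it then remains only to verify
\begin{equation}
\mD_{IJ}\mnabla^I\mX^J=\tfrac{1}{2}\mD^{IJ}{\mathcal L}_{\mX}\mg_{IJ}.
\end{equation}

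The key step is the formula ${\mathcal L}_{\mX}\mg_{IJ}=\mnabla_I\mX_J+\mnabla_J\mX_I$, valid for any torsion-free metric connection; here $\mX_J=\mg_{JK}\mX^K$ and I use that $\mnabla$ annihilates $\mg$. Contracting against $\mD^{IJ}$ and invoking the symmetry $\mD^{IJ}=\mD^{JI}$, the two terms on the right coincide, so $\tfrac{1}{2}\mD^{IJ}{\mathcal L}_{\mX}\mg_{IJ}=\mD^{IJ}\mnabla_I\mX_J$. Raising and lowering indices with the metric, again using metric compatibility, turns the right-hand side into $\mD_{IJ}\mnabla^I\mX^J$, which is exactly what was needed.

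The only point requiring care is that the symmetry of $\mD_{IJ}$ is essential: it is what allows the antisymmetric part of $\mnabla_I\mX_J$ to drop out, so that the raw $\mnabla^I\mX^J$ may be replaced by its symmetrization encoded in ${\mathcal L}_{\mX}\mg$. Beyond this there is no genuine obstacle; the computation is a short bookkeeping argument in index notation, and — unlike the earlier ambient divergence identity of Lemma \ref{lm:div1} — no special structure of $\mM$ is used, the identity holding on any pseudo-Riemannian manifold.
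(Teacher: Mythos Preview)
Your proof is correct and follows exactly the same approach as the paper: the paper's proof simply states that the identity follows from ${\mathcal L}_{\mX}\mg_{IJ}=\mnabla_I \mX_J+\mnabla_J \mX_I$ and the symmetry of $\mD_{IJ}$, which is precisely the content of your argument spelled out in detail.
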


\begin{proof}
Follows from ${\mathcal L}_{\mX}\mg_{IJ}=\mnabla_I \mX_J+\mnabla_J \mX_I$ and symmetry of $\mD_{IJ}$.
\end{proof}

\begin{lm}
We have
\begin{align}
\mnabla^I \mRS_{I {\iI}}&=t^{-2}\left(\tnabla^{\jI} \RS_{\jI\iI}+(d-2-2\rho\partial_\infty)\RS_{ \iI\infty}-\rho\g^{\iI\jI}\g_{\iI\jI}' \RS_{\iI \infty}\right),\label{eq:S1}\\
\mnabla^I \mRS_{I\infty}&=t^{-2}\left(\tnabla^{\jI} \RS_{\jI\infty}+(d-2-2\rho\partial_\infty)\RS_{ \infty\infty}-\rho\g^{\iI\jI}\g_{\iI\jI}' \RS_{\infty \infty}-\frac{1}{2}\RS^{\jI\iI}\g_{\iI\jI}'-\RS_{\infty\infty}\right).\label{eq:S2}
\end{align}
\end{lm}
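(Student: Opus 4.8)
The plan is to reduce both identities to the two preceding lemmas by choosing a vector field $\mX$ that isolates the desired component of the one-form $\mnabla^I\mRS_{IJ}$. The starting observation is that $\mRS_{0I}=0$, so for any $\mX$ the one-form $\mF_I:=\mRS_{IJ}\mX^J$ has vanishing zero-component, $\mF_0=\mRS_{0J}\mX^J=0$, and is therefore exactly of the type handled by Lemma \ref{lm:div1}. Combining this with Lemma \ref{lm:div2}, which trades the divergence of the symmetric tensor for the divergence of the one-form $\mF_I$ at the cost of a Lie-derivative term, gives
\begin{equation}
(\mnabla^I\mRS_{IJ})\mX^J=\mnabla^I\mF_I-\frac{1}{2}\mRS^{IJ}\Lie_{\mX}\mg_{IJ},
\end{equation}
and the remaining work is to evaluate each piece for two concrete choices of $\mX$.

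For \eqref{eq:S1} I would take $\mX=\partial_{\iI}$, a coordinate vector field tangent to $M$, so that the left-hand side is $\mnabla^I\mRS_{I\iI}$ and $\mF$ has components $\mF_{\jI}=\RS_{\jI\iI}$, $\mF_\infty=\RS_{\iI\infty}$. Lemma \ref{lm:div1} then produces the one-form divergence $\tnabla^{\jI}\RS_{\jI\iI}$ (with $\iI$ held fixed as a label) together with the $\RS_{\iI\infty}$ contributions carrying the stated $\rho$-weights. The Lie term must supply the difference between this label-divergence and the genuine tensor divergence in $\iI$: since $\Lie_{\partial_{\iI}}\mg_{IJ}$ is nonzero only in the $M\times M$ block, where it equals $t^2\partial_{\iI}\g_{\jI\kI}$, and $\mRS^{\jI\kI}=t^{-4}\RS^{\jI\kI}$, one checks that $\tfrac12\mRS^{IJ}\Lie_{\partial_{\iI}}\mg_{IJ}=t^{-2}\,\tilde{\Gamma}^{\lI}_{\jI\iI}\RS^{\jI}_{\lI}$, precisely the missing second Christoffel contraction. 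Subtracting it upgrades the label-divergence to $\tnabla^{\jI}\RS_{\jI\iI}$ and yields \eqref{eq:S1}.

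For \eqref{eq:S2} I would instead take $\mX=\partial_\infty$, so the left-hand side is $\mnabla^I\mRS_{I\infty}$ and $\mF$ has components $\mF_{\iI}=\RS_{\iI\infty}$, $\mF_\infty=\RS_{\infty\infty}$. Here the $\infty$ index is treated by $\tnabla$ as a scalar label, so Lemma \ref{lm:div1} directly gives $\tnabla^{\iI}\RS_{\iI\infty}$ and the $\RS_{\infty\infty}$ terms with their $\rho$-weights, with no extra Christoffel needed. Now $\Lie_{\partial_\infty}\mg_{IJ}=\partial_\rho\mg_{IJ}$ is nonzero in both the $00$ block (equal to $2$) and the $M\times M$ block (equal to $t^2\g'_{\iI\jI}$); using $\mRS^{00}=t^{-2}\RS_{\infty\infty}$ and $\mRS^{\iI\jI}=t^{-4}\RS^{\iI\jI}$ gives $\tfrac12\mRS^{IJ}\Lie_{\partial_\infty}\mg_{IJ}=t^{-2}\big(\RS_{\infty\infty}+\tfrac12\RS^{\iI\jI}\g'_{\iI\jI}\big)$, which are exactly the two extra terms on the right-hand side of \eqref{eq:S2}.

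I expect the main obstacle to be purely bookkeeping: correctly tracking the powers of $t$ coming from the inverse metric components $\mg^{0\infty}=t^{-1}$, $\mg^{\infty\infty}=-2\rho t^{-2}$, $\mg^{\iI\jI}=t^{-2}\g^{\iI\jI}$ when raising indices on $\mRS$, and verifying the identity $\tfrac12\RS^{\jI\kI}\partial_{\iI}\g_{\jI\kI}=\tilde{\Gamma}^{\lI}_{\jI\iI}\RS^{\jI}_{\lI}$ that makes the Lie term in the first identity coincide with the missing Christoffel symbol. Everything else follows mechanically once the component decomposition of $\Lie_{\mX}\mg$ is in hand.
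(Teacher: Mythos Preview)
Your proposal is correct and follows the same route as the paper: apply Lemma~\ref{lm:div2} with $\mX$ tangent to $M$ (the paper uses a general $X^{\iI}$ on $M$, you use the coordinate vector $\partial_{\iI}$, which amounts to the same computation) and with $\mX=\partial_\infty$, then feed the resulting one-form $\mF_I=\mRS_{IJ}\mX^J$ into Lemma~\ref{lm:div1} and evaluate the Lie-derivative correction. The only cosmetic difference is that the paper writes $\Lie_X\mg_{\iI\jI}=2t^2\tnabla_{\iI}X_{\jI}$ so that the correction term $\RS^{\kI\jI}\tnabla_{\kI}X_{\jI}$ cancels directly against the product-rule piece of $\tnabla^{\kI}(\RS_{\kI\jI}X^{\jI})$, whereas you do the equivalent Christoffel check $\tfrac12\RS^{\jI\kI}\partial_{\iI}\g_{\jI\kI}=\tilde{\Gamma}^{\lI}_{\jI\iI}\RS^{\jI}_{\lI}$ by hand.
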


\begin{proof}
Let us choose first $\mX^{\iI}=X^{\iI}$, $\mX^0=0$ and $\mX^\infty=0$ for some $X^{\iI}$, vector field on $M$. The form
\begin{equation}
\mRS_{IJ}\mX^J
\end{equation}
satisfies assumptions of Lemma \ref{lm:div1}, thus
\begin{align}
&\mnabla^I (\mRS_{I J}\mX^J)=t^{-2}\left(\tnabla^{\kI} (\RS_{\kI\jI}X^{\jI})+(d-2-2\rho\partial_\infty)\RS_{\infty \iI}X^{\iI}-\rho\g^{\iI\jI}\g_{\iI\jI}' \RS_{\infty \kI}X^{\kI}\right)=\nonumber\\
&=t^{-2}\left(X^{\iI}\tnabla^{\kI} \RS_{\kI\iI}+\RS_{\kI\jI}(\tnabla^{\kI}X^{\jI})+(d-2-2\rho\partial_\infty)\RS_{\infty \iI}X^{\iI}-\rho\g^{\iI\jI}\g_{\iI\jI}' \RS_{\infty \kI}X^{\kI}\right).
\end{align}
Moreover, we have
\begin{equation}
( {\mathcal L}_{\mX} \mg_{IJ})dx^Idx^J=t^2( {\mathcal L}_{X} \g_{\kI\jI})dx^{\kI}dx^{\jI}=t^2 2(\tnabla_{\kI} X_{\jI})dx^{\kI}dx^{\jI},
\end{equation}
By Lemma \ref{lm:div2} we derive
\begin{align}
&\mX^J\mnabla^I \mRS_{I J}=t^{-2}X^{\iI}\left(\tnabla^{\kI} \RS_{\kI\iI}+(d-2-2\rho\partial_\infty)\RS_{\infty \iI}-\rho\g^{\kI\jI}\g_{\kI\jI}' \RS_{\infty \iI}\right),
\end{align}
which shows \eqref{eq:S1}.
Similarly choosing $\mX^I=\delta^I_\infty$ we have 
\begin{equation}
( {\mathcal L}_{\mX} \mg_{IJ})dx^Idx^J=
2dt^2+t^2\g_{\iI\jI}'dx^{\iI}dx^{\jI},
\end{equation}
where $\g_{\iI\jI}'$ is the derivative in $\rho$.
Finally, we obtain
\begin{equation}
\mnabla^I (\mRS_{I J}\mX^J)=t^{-2}\left(\tnabla^{\kI} \RS_{\kI\infty}+(d-2-2\rho\partial_\infty)\RS_{\infty \infty}-\rho\g^{\iI\jI}\g_{\iI\jI}' \RS_{\infty \infty}\right).
\end{equation}
Thus we conclude
\begin{align}
\mnabla^I \mRS_{I\infty}&=t^{-2}\Big(\tnabla^{\jI} \RS_{\jI\infty}+(d-2-2\rho\partial_\infty)\RS_{ \infty\infty}-
\rho\g^{\iI\jI}\g_{\iI\jI}' \RS_{\infty \infty}-\frac{1}{2}\RS^{\iI\jI}\g_{\iI\jI}'-\RS_{\infty\infty}\Big),
\end{align}
which shows the result.
\end{proof}

\begin{proof}[Proof of Proposition \ref{prop:Bianchi}]
Let us now notice that
\begin{equation}
\mg^{IJ}\mRS_{IJ}=t^{-2}(\g^{\iI\jI}\RS_{\iI\jI}-2\rho\RS_{\infty\infty}).
\end{equation}
Now we use identity
\begin{equation}
\mnabla^I \mRS_{IJ}-\frac{1}{2}\partial_{J} (\mg^{KL}\mRS_{KL})=0
\end{equation}
to get
\begin{align}
&\tnabla^{\iI} \left(\RS_{\iI\jI}-\frac{1}{2}\g_{\iI\jI}\RS\right)+\rho\partial_{\iI}\RS_{\infty\infty}+(d-2-2\rho\partial_\infty)\RS_{\iI\infty}-\rho\g^{\kI\lI}\g_{\kI\lI}'\RS_{\iI\infty}=0\\
&\tnabla^{\iI}\RS_{\iI\infty}+(d-2-\rho\partial_\infty)\RS_{\infty\infty}-\rho \g^{\iI\jI}\g_{\iI\jI}'\RS_{\infty\infty}-\frac{1}{2}\g_{\iI\jI}'\RS^{\iI\jI}-\frac{1}{2}\partial_\infty \RS=0,
\end{align}
which is the result.
\end{proof}

Let us remind
\begin{align}
&\tB^1_{\iI}=-\frac{1}{2}\tnabla^{\kI}\tnabla_{\kI} \tG_{\iI}-\frac{1}{2}\RR_{\iI}^{\jI}\tG_{\jI}-\left(\frac{d}{2}-1-\rho\partial_\infty\right)\partial_\infty \tG_{\iI}+\frac{1}{2}\g^{\kI\lI}\g'_{\kI\lI}\rho\partial_\infty \tG_{\iI}+\frac{1}{2}\rho\g^{\kI\lI}\g'_{\kI\lI}\partial_{\iI}\tS,\\
&\tB^2=-\frac{1}{2}\tnabla^{\iI}\partial_{\iI}\tS
-\left(\frac{d}{2}-2-\rho\partial_\infty\right)\partial_\infty\tS+\g^{\iI\jI}\g_{\iI\jI}'\rho\partial_\infty\tS
+\frac{1}{2}\tQ^{\iI}\tG_{\iI}+\frac{1}{2}\g_{\iI\jI}'\tnabla^{\iI} \tG^{\jI}+\frac{1}{2}\g_{\iI\jI}'\g^{\iI\jI}\tS,
\end{align}
where $\tQ^{\iI}=\partial_\infty (\g^{\jI\kI}\tilde{\Gamma}^{\iI}_{\jI\kI})$.
We will prove the following important properties of these objects.

\begin{prop}\label{lm:B-follows}
Suppose that $\tE_{\iI\jI}=O(\rho^{d/2})$ then $\tB^1_{\iI}=O(\rho^{d/2})$ and $\tB^2=O(\rho^{d/2-1})$.
\end{prop}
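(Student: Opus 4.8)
The plan is to obtain $\tB^1_{\iI}$ and $\tB^2$ by substituting into the contracted Bianchi identities \eqref{eq:B-1}--\eqref{eq:B-2} of Proposition~\ref{prop:Bianchi}, exploiting that $\tG_{\iI}$ and $\tS$ were engineered so that the mixed and normal Ricci components are their $\rho$-derivatives. Indeed, the first term in \eqref{eq:tS} is $\rho$-independent, so applying $\partial_\infty$ to \eqref{eq:tS} and \eqref{eq:tG} gives the exact relations
\begin{equation}
\RS_{\infty\infty}=\partial_\infty\tS,\qquad \RS_{\iI\infty}=\tfrac12\bigl(\partial_\infty\tG_{\iI}+\partial_{\iI}\tS\bigr).
\end{equation}
Combined with \eqref{eq:tE} rewritten as $\RS_{\iI\jI}=\tE_{\iI\jI}+\tfrac12(\tnabla_{\iI}\tG_{\jI}+\tnabla_{\jI}\tG_{\iI})+\g_{\iI\jI}\tS$, these express every Ricci component occurring in the Bianchi identities through $\tE_{\iI\jI}$, $\tG_{\iI}$ and $\tS$.

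First I would insert these into \eqref{eq:B-1}, expanding the trace as $\RS=\g^{\kI\lI}\tE_{\kI\lI}+\tnabla^{\kI}\tG_{\kI}+d\,\tS$. The divergence of the symmetrised $\tnabla\tG$ produces the Laplacian $-\tfrac12\tnabla^{\kI}\tnabla_{\kI}\tG_{\iI}$ plus a curvature commutator $[\tnabla^{\kI},\tnabla_{\iI}]\tG_{\kI}$ equal to the $-\tfrac12\RR_{\iI}^{\jI}\tG_{\jI}$ term; the $\partial_\infty\tG_{\iI}$ pieces assemble into $-(\tfrac d2-1-\rho\partial_\infty)\partial_\infty\tG_{\iI}$ together with the $\tfrac12\rho\g^{\kI\lI}\g'_{\kI\lI}$ coefficients, while the bare $\partial_{\iI}\tS$ terms cancel because $(1-\tfrac d2)+\tfrac12(d-2)=0$ and the two $\rho\,\partial_{\iI}\partial_\infty\tS$ pieces cancel by symmetry of partial derivatives. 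Thus \eqref{eq:B-1} reduces to $-\tB^1_{\iI}$ plus a remainder $\tnabla^{\kI}\tE_{\kI\iI}-\tfrac12\tnabla_{\iI}(\g^{\kI\lI}\tE_{\kI\lI})$ built solely from $\tE$ and its $\tnabla$-derivatives; since $\tnabla$ preserves the order in $\rho$, this remainder is $O(\rho^{d/2})$, which yields $\tB^1_{\iI}=O(\rho^{d/2})$.

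The analogous substitution into \eqref{eq:B-2} gives $-\tB^2$ up to a remainder, and here the delicate point is the failure of $\partial_\infty$ to commute with the $\rho$-dependent covariant derivative: writing $\tnabla^{\iI}\partial_\infty\tG_{\iI}=\partial_\infty(\tnabla^{\iI}\tG_{\iI})+\g'_{\kI\lI}\tnabla^{\kI}\tG^{\lI}+\tQ^{\iI}\tG_{\iI}$, with $\tQ^{\iI}=\partial_\infty(\g^{\jI\kI}\tilde{\Gamma}^{\iI}_{\jI\kI})$, is exactly what produces the $\tfrac12\tQ^{\iI}\tG_{\iI}$ and $\tfrac12\g'_{\iI\jI}\tnabla^{\iI}\tG^{\jI}$ terms of $\tB^2$, and the coefficient $\tfrac d2-2$ arises from the $d-2$ in \eqref{eq:B-2} together with the $-\tfrac d2\partial_\infty\tS$ coming from $-\tfrac12\partial_\infty\RS$. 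The remainder now inherits a term $-\tfrac12\partial_\infty(\g^{\kI\lI}\tE_{\kI\lI})$ from $-\tfrac12\partial_\infty\RS$; since a single $\partial_\infty$ lowers the $\rho$-order by one, it is only $O(\rho^{d/2-1})$, which is precisely why $\tB^2=O(\rho^{d/2-1})$ rather than $O(\rho^{d/2})$. I expect the only genuinely delicate step to be this bookkeeping of the lower-order commutator terms, namely verifying that the Christoffel and curvature commutators reproduce the coefficients of $\tQ^{\iI}$, $\g^{\kI\lI}\g'_{\kI\lI}$ and $\RR_{\iI}^{\jI}$ in the definitions, while the order count above is then immediate.
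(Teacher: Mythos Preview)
Your proposal is correct and follows essentially the same route as the paper: differentiate \eqref{eq:tS}--\eqref{eq:tG} to obtain $\RS_{\infty\infty}=\partial_\infty\tS$ and $\RS_{\iI\infty}=\tfrac12(\partial_\infty\tG_{\iI}+\partial_{\iI}\tS)$, substitute these together with $\RS_{\iI\jI}=\tE_{\iI\jI}+\tfrac12(\tnabla_{\iI}\tG_{\jI}+\tnabla_{\jI}\tG_{\iI})+\g_{\iI\jI}\tS$ into the Bianchi identities \eqref{eq:B-1}--\eqref{eq:B-2}, and read off $\tB^1_{\iI}$ and $\tB^2$ up to $\tE$-remainders. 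One small bookkeeping remark: in the paper the term $\tfrac12\g'_{\iI\jI}\tnabla^{\iI}\tG^{\jI}$ in $\tB^2$ arises from $\tfrac12\g'_{\iI\jI}\RS^{\iI\jI}$ in \eqref{eq:B-2}, while the commutator $\partial_\infty\tnabla^{\iI}\tG_{\iI}-\tnabla^{\iI}\partial_\infty\tG_{\iI}$ is recorded simply as $\tQ^{\iI}\tG_{\iI}$ with $\tQ^{\iI}=\partial_\infty(\g^{\jI\kI}\tilde{\Gamma}^{\iI}_{\jI\kI})$; your attribution of both pieces to the commutator differs from this accounting, so when you carry out the substitution you should track carefully where the $\g'\tnabla\tG$ contribution actually originates to avoid double-counting.
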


\begin{proof}
We will use the Bianchi identity \eqref{eq:B-1} and \eqref{eq:B-2}.
Let us now also notice
\begin{equation}
\tnabla^{\iI} \left(\tnabla_{\iI}\tG_{\jI}+\tnabla_{\jI}\tG_{\iI}-\g_{\iI\jI}\tnabla^{\kI}\tG_{\kI}\right)=\tnabla^{\kI}\tnabla_{\kI} \tG_{\jI}+\RR_{\jI}^{\iI}\tG_{\iI}.
\end{equation}
Then we compute
\begin{align}
&\tnabla^{\iI} \left(\RS_{\iI\jI}-\frac{1}{2}\g_{\iI\jI}\RS\right)
=\tnabla^{\iI} \left(\tE_{\iI\jI}-\frac{1}{2}\g_{\iI\jI}\tE\right)+\frac{1}{2}(\tnabla^{\kI}\tnabla_{\kI} \tG_{\jI}+\RR_{\jI}^{\iI}\tG_{\iI})-\left(\frac{d}{2}-1\right)\partial_{\iI}\tS.\label{eq:Bianchi}
\end{align}
Moreover, differentiating \eqref{eq:tS} and \eqref{eq:tG} with respect to $\rho$ we get
\begin{equation}\label{eq:tS-tG}
\RS_{\infty\infty}=\partial_\infty\tS,\quad \RS_{\iI\infty}=\frac{1}{2}(\partial_\infty \tG_{\iI}+\partial_{\iI}\tS).
\end{equation}
Remembering that $\tE=O(\rho^{d/2})$ and
combining \eqref{eq:tS-tG} with \eqref{eq:B-1} and \eqref{eq:Bianchi} we obtain
\begin{equation}
\frac{1}{2}\tnabla^{\kI}\tnabla_{\kI} \tG_{\iI}+\frac{1}{2}\RR_{\iI}^{\jI}\tG_{\jI}+\left(\frac{d}{2}-1-\rho\partial_\infty\right)\partial_\infty \tG_{\iI}-\frac{1}{2}\g^{\kI\lI}\g'_{\kI\lI}\rho\partial_\infty \tG_{\iI}-\frac{1}{2}\rho\g^{\kI\lI}\g'_{\kI\lI}\partial_{\iI}\tS=O(\rho^{d/2}).
\end{equation}
Similar computation with
\begin{align}
&\RS=\tE+\tnabla^{\iI}\tG_{\iI}+d\tS=\tnabla^{\iI}\tG_{\iI}+d\tS+O(\rho^{d/2}),\\
&\g_{\iI\jI}'\RS^{\iI\jI}=\g_{\iI\jI}'\tE^{\iI\jI}+
\g_{\iI\jI}'\tnabla^{\iI} \tG^{\jI}+\g_{\iI\jI}'\g^{\iI\jI}\tS=
\g_{\iI\jI}'\tnabla^{\iI} \tG^{\jI}+\g_{\iI\jI}'\g^{\iI\jI}\tS
+O(\rho^{d/2}),
\end{align}
reveals after inserting it into \eqref{eq:B-2} 
\begin{align}
&\frac{1}{2}\tnabla^{\iI}(\partial_{\iI}\tS+\partial_\infty\tG_{\iI})+(d-2-\rho\partial_\infty)\partial_\infty\tS-\g^{\iI\jI}\g_{\iI\jI}'\rho\partial_\infty\tS+\nonumber\\
&-\frac{1}{2}\partial_\infty \tnabla^{\iI}\tG_{\iI}-\frac{d}{2}\partial_\infty\tS -\frac{1}{2}\g_{\iI\jI}'\tnabla^{\iI} \tG^{\jI}-\frac{1}{2}\g_{\iI\jI}'\g^{\iI\jI}\tS=O(\rho^{d/2-1}).
\end{align}
Taking into account that 
\begin{equation}
\partial_\infty \tnabla^{\iI} \tG_{\iI}-\tnabla^{\iI} \partial_\infty \tG_{\iI}=\tQ^{\iI}\tG_{\iI},\quad \tQ^{\iI}=\partial_\infty \g^{\jI\kI}\tilde{\Gamma}^{\iI}_{\jI\kI},
\end{equation}
we conclude
\begin{equation}
\frac{1}{2}\tnabla^{\iI}\partial_{\iI}\tS
+\left(\frac{d}{2}-2-\rho\partial_\infty\right)\partial_\infty\tS-\g^{\iI\jI}\g_{\iI\jI}'\rho\partial_\infty\tS
-\frac{1}{2}\tQ^{\iI}\tG_{\iI}
-\frac{1}{2}\g_{\iI\jI}'\tnabla^{\iI} \tG^{\jI}-\frac{1}{2}\g_{\iI\jI}'\g^{\iI\jI}\tS=O(\rho^{d/2-1}),
\end{equation}
and the proposition is proven.
\end{proof}

\bibliographystyle{ieeetr}

\end{document}